\numberwithin{equation}{section}
\theoremstyle{plain}
\newtheorem{theorem}{Theorem}[section]
\newtheorem{proposition}[theorem]{Proposition}
\newtheorem{lemma}[theorem]{Lemma}
\newtheorem*{statement*}{Statement}
\theoremstyle{remark}
\newtheorem*{remark*}{Remark}
\newtheorem*{remarks*}{Remarks}
\newtheorem*{stepone*}{Step 1}
\newtheorem*{steptwo*}{Step 2}
\newtheorem*{stepthree*}{Step 3}
\newtheorem*{stepfour*}{Step 4}
\newtheorem*{stepfive*}{Step 5}
\newtheorem*{stepsix*}{Step 6}
\newtheorem*{summary*}{Summary}
\newtheorem*{warning*}{Warning}
\theoremstyle{definition}
\newtheorem*{assumption*}{Assumption}
\newtheorem*{assumption-0*}{Assumption (H0)}
\newtheorem*{assumption-1*}{Assumption (H1)}
\newtheorem*{assumption-2*}{Assumption (H2)}
\newtheorem*{definition*}{Definition}
\newtheorem*{notation*}{Notation}
\newtheorem*{notations*}{Notations}
\providecommand{\BS}[1]{\boldsymbol{#1}}
\providecommand{\C}[1]{\mathcal{#1}}
\providecommand{\D}[1]{\mathbb{#1}}
\providecommand{\E}[1]{\mathscr{#1}}
\providecommand{\F}[1]{\mathfrak{#1}}
\providecommand{\R}[1]{\mathrm{#1}}
\providecommand{\abs}[1]{\lvert#1\rvert}
\providecommand{\accol}[1]{\lbrace#1\rbrace}
\providecommand{\av}[1]{\langle#1\rangle}
\providecommand{\croch}[1]{\lbrack#1\rbrack}
\providecommand{\norm}[1]{\lVert#1\rVert}
\providecommand{\scal}[1]{\langle#1\rangle}
\renewcommand{\Im}{\operatorname{Im}}
\renewcommand{\Re}{\operatorname{Re}}
\newcommand{\cercle}{\mathbb{T}}
\newcommand{\dd}{\mathrm{d}}
\newcommand{\eul}{\mathrm{e}}
\newcommand{\ii}{\mathrm{i}}
\newcommand{\vece}{\mathrm{e}}
\newcommand{\spec}{\sigma} 
\newcommand{\uw}{\underline\omega} 
\newcommand\ut{\underline t} 
\newcommand{\ek}{\mathrm{e}^{\ii\xi}} 
\newcommand{\nk}{\mathrm{e}^{\ii\eta}}   
\DeclareMathOperator{\card}{card}
\DeclareMathOperator{\class}{C}
\DeclareMathOperator{\dist}{dist}
\DeclareMathOperator{\ord}{O}
\DeclareMathOperator{\osmall}{o}
\DeclareMathOperator{\supp}{supp}
\DeclareMathOperator{\tr}{tr}
\newcommand{\diag}{\mathrm{diag}}
\newcommand{\field}{\mathrm{field}}
\newcommand{\rabi}{\mathrm{Rabi}}
\begin{document} 
\title[Oscillatory behavior of large eigenvalues in quantum Rabi models]{Oscillatory behavior of large eigenvalues\\
in quantum Rabi models}
\author[A. Boutet de Monvel]{Anne Boutet de Monvel$^1$}
\author[L. Zielinski]{Lech Zielinski$^2$}
\address{$^1$Institut de Math\'ematiques de Jussieu-PRG, Universit\'e Paris Diderot\\
b\^atiment Sophie Germain, case 7012, 75205 Paris Cedex 13, France\\
E-mail: anne.boutet-de-monvel@imj-prg.fr}
\address{$^2$Universit\'e du Littoral C\^ote d'Opale\\
       Laboratoire de Math\'ematiques Pures et Appliqu\'ees\\
       62100 Calais, France\\
E-mail: Lech.Zielinski@lmpa.univ-littoral.fr}
\subjclass[2010]{Primary 47B36; Secondary 81T10, 81Q10, 47A75, 47A55}
\keywords{Jacobi matrices, quantum Rabi model, Jaynes--Cummings model, eigenvalue estimates}
\begin{abstract} 
We investigate the large $n$ asymptotics of the $n$-th eigenvalue for a class of unbounded self-adjoint operators defined by infinite Jacobi matrices with discrete spectrum. In the case of the quantum Rabi model we obtain the first three terms of the asymptotics which determine the parameters of the model. This paper is based on our previous paper \cite{BZ5} that it completes and improves.
\end{abstract}
\maketitle 
\section{Introduction}  \label{sec:intro}
\subsection{The quantum Rabi model}  \label{sec:JC}

This paper is motivated by the quantum Rabi model \cite{Bra16} describing the simplest interaction between radiation and matter (see \cite{Scu}). It is also called the Jaynes--Cummings model without the rotating-wave approximation. The Hamiltonian of this model is a self-adjoint operator $H_{\rabi}$ with discrete spectrum defined in Appendix. We refer to \cite{X} for a list of recent works on this model. 

A natural question is whether the spectrum of $H_{\rabi}$ determines the values of all the parameters involved in its definition. These parameters are listed in the Appendix and denoted by $\omega$, $E$, and $g$. The answer to this question is positive. In \cite{BZ6} we explain how to recover the values of all the parameters from the spectrum of $H_{\rabi}$. The method described in \cite{BZ6} is based on the three term asymptotics for large eigenvalues of $H_{\rabi}$. This asymptotic formula can be easily deduced from the formula \eqref{E0} of this paper using the well known fact (see \cites{Tur1,Tur2,Y}) that $H_{\rabi}$ can be written as the direct sum of two Jacobi matrices (see Appendix) to which Theorem~\ref{thm:11} applies. 

Further on $J$ denotes an infinite real Jacobi matrix  
\begin{equation}\label{J} 
J=\begin{pmatrix}  
d(1) & a(1) &0& 0 &\dots\\ 
a(1) & d(2) & a(2) &0& \dots\\ 
0 & a(2) & d(3) & a(3) &\dots\\ 
0 &0& a(3) & d(4) &\dots\\  
\vdots &\vdots&\vdots&\vdots&\ddots 
\end{pmatrix}.
\end{equation}  
To treat the Jacobi matrices representing the quantum Rabi model we have to consider entries $\accol{d(k)}_{k=1}^{\infty}$ and $\accol{a(k)}_{k=1}^{\infty}$ of the form
\begin{equation}\label{J'}
\begin{cases}    
d(k)=k+(-1)^k\rho,&\\  
a(k)=a_1k^{1/2},&   
\end{cases} 
\end{equation}
where $\rho\in\D{R}$ and $a_1>0$ are some constants.  

The Jacobi matrix \eqref{J} defines the self-adjoint operator $J$ that acts on $l^2(\D{N}^*)$ by
\begin{equation}\label{J''}
(Jx)(k)=d(k)x(k)+a(k)x(k+1)+a(k-1)x(k-1) 
\end{equation} 
where, by convention, $x(0)=0$ and $a(0)=0$. We denote by $\D{N}^*$ the set of positive integers and by $l^2(\D{N}^*)$ the Hilbert space of square summable complex sequences $x=(x(k))_{k=1}^{\infty}$ equipped with the scalar product $\scal{x,y}=\sum_{k=1}^{\infty}\overline {x(k)}y(k)$. The operator $J$ is defined on $\C{D}\coloneqq\bigl\lbrace x\in l^2(\D{N}^*):\sum_{k=1}^{\infty}d(k)^2\abs{x(k)}^2<\infty\bigr\rbrace$.  

Under our assumptions, in particular under \eqref{J'}, the self-adjoint operator $J$ is bounded from below with compact resolvent. Its spectrum is therefore discrete and one can find an orthogonal basis $\accol{w_n}_{n=1}^{\infty}$ such that $Jw_n=\lambda_n(J)w_n$ where $\accol{\lambda_n(J)}_{n=1}^{\infty}$ is the non-decreasing sequence of its eigenvalues:
\[
\lambda_1(J)\leq\dots\leq\lambda_n(J)\leq\lambda_{n+1}(J)\leq\dots
\]
The aim of this paper is to describe the asymptotic behavior of $\lambda_n(J)$ when $n\to\infty$.

\begin{theorem}[Quantum Rabi model]\label{thm:11} 
Let $J$ be defined by \eqref{J''} with $\accol{a(k)}_{k=1}^{\infty}$ and $\accol{d(k)}_{k=1}^{\infty}$ given by \eqref{J'}. Let $\lambda_n(J)$ denote the $n$-th eigenvalue of $J$. Then, for any $\varepsilon>0$ we have the large $n$ asymptotic formula
\begin{subequations}\label{ER0}
\begin{equation}\label{E0}
\lambda_n(J)=n-a_1^2+\F{r}(n)+\ord(n^{-1/2+\varepsilon}),
\end{equation}
where
\begin{equation}\label{R0} 
\F{r}(n)=(-1)^n\rho\,\frac{\cos\bigl(4a_1\sqrt{n}-\tfrac{\pi}{4}\bigr)}{\sqrt{2\pi a_1}}\,n^{-1/4}.
\end{equation}
\end{subequations}  
\end{theorem}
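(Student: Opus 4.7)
The plan is to split $J=J_0+V$, where $J_0$ has the same off-diagonal $a(k)=a_1k^{1/2}$ but smooth diagonal $d_0(k)=k$ (the Jaynes--Cummings Jacobi matrix) and $V=\rho\operatorname{diag}((-1)^k)$ is a bounded oscillating perturbation. The key observation is that $J_0$ is exactly solvable: after the shift $k\mapsto k-1$, the operator $J_0$ is unitarily equivalent on bosonic Fock space to $N+1+a_1(a+a^*)$ with $N=a^*a$, and a Weyl displacement $D_\alpha=\exp(\alpha(a^*-a))$ with $\alpha=-a_1$ intertwines this with $N+1-a_1^2$. Hence $\lambda_n(J_0)=n-a_1^2$, with normalised eigenvectors $w_n^{(0)}=D_{-a_1}\vece_{n-1}$, the displaced number states.

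Next, the first-order expectation $\av{w_n^{(0)},Vw_n^{(0)}}$ can be computed in closed form. Under the identification above $V$ becomes $-\rho P$ with $P=(-1)^N$ the parity operator, so the intertwining $PD_\alpha=D_{-\alpha}P$ combined with the classical identity $\av{\vece_k,D_\beta\vece_k}=\eul^{-\beta^2/2}L_k(\beta^2)$ gives
\[
\av{w_n^{(0)},Vw_n^{(0)}}=(-1)^n\,\rho\,\eul^{-2a_1^2}L_{n-1}(4a_1^2).
\]
The classical large-degree Laguerre asymptotic at fixed argument,
$L_n(x)=\pi^{-1/2}\eul^{x/2}x^{-1/4}n^{-1/4}\cos\bigl(2\sqrt{nx}-\tfrac{\pi}{4}\bigr)+\ord(n^{-3/4})$,
specialised to $x=4a_1^2$, then reproduces exactly the oscillatory function $\F{r}(n)$ of \eqref{R0}, well within the announced error.

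The main obstacle is to show that $\lambda_n(J)-\lambda_n(J_0)$ is captured by this first-order expectation up to $\ord(n^{-1/2+\varepsilon})$. Naive Rayleigh--Schrödinger perturbation theory is insufficient, because $\norm{V}=\abs{\rho}$ and the gaps of $J_0$ are both $\ord(1)$, so its second-order bound is only $\ord(1)$. One has to exploit the oscillation of the off-diagonal matrix elements $\av{\vece_j,D_{2a_1}\vece_{n-1}}$ in the displaced basis, which are controlled by generalised Laguerre polynomials whose behaviour in $n$ oscillates; averaging against the sign pattern $(-1)^k$ then produces the required power saving in $n$. Concretely, I would conclude by applying the general Jacobi-matrix asymptotic theorem developed later in the paper (which refines the bounds of \cite{BZ5}), packaging this cancellation into a finite-rank reduction near the level $\lambda_n(J_0)$ combined with a min--max argument.
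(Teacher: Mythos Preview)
Your computation of the first-order term is correct and is a pleasant alternative to what the paper does for that piece. The paper also starts from the Bogoliubov transformation $\eul^{\ii B}J_0\eul^{-\ii B}=\diag(k-a_1^2)$ and studies $\tilde V=\eul^{\ii B}V\eul^{-\ii B}$; but instead of your closed-form identity $\tilde V(n,n)=(-1)^n\rho\,\eul^{-2a_1^2}L_{n-1}(4a_1^2)$ followed by classical Laguerre asymptotics, it approximates $\tilde V_n(n,n)$ by an oscillatory integral and applies a stationary-phase lemma (Section~\ref{sec:4}). Both routes yield~\eqref{diag-appr}. The paper itself remarks (Section~1.2.3) that this is the Feranchuk--Irish GRWA heuristic.

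The gap is exactly where you locate it, and it is essentially the whole theorem. You propose to ``conclude by applying the general Jacobi-matrix asymptotic theorem developed later in the paper'', but that theorem (Theorem~\ref{thm:12}) is proved by the same machinery as Theorem~\ref{thm:11}---the trace estimate of Proposition~\ref{prop:trace} together with the Tauberian Proposition~\ref{prop:25}---so invoking it is circular. Worse, Theorem~\ref{thm:12} carries the hypothesis $\abs{\rho}<\tfrac12$ (see~\eqref{rho-N}), whereas Theorem~\ref{thm:11} has \emph{no} restriction on $\rho$; the paper spends real effort (Proposition~\ref{prop:21}(a) and Lemma~\ref{lem:26}(a), both using Yanovich's two-term estimate~\eqref{Y0} as external input) to remove that restriction in the (H0) case. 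What actually replaces your missing step is not a min--max argument but an analytic one: the paper passes to an $n$-dependent model $L_n=l_n(\Lambda)+\tilde V_n$, proves the trace estimate $\tr\bigl(\chi(L_n-l_n(n))-\chi(\tilde L_{0,n}-l_n(n))\bigr)=\ord(n^{-\gamma+\varepsilon})$ via the Dyson expansion of $\eul^{-\ii tl_n(\Lambda)}\eul^{\ii tL_n}$ and oscillatory-integral bounds (Sections~\ref{sec:4'}--\ref{sec:63}), and then extracts the eigenvalue asymptotics from the trace by the Tauberian comparison. That is where the cancellation you allude to is actually established.
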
 
 
\subsection{Comments} 
 
In this section $J$ denotes the Jacobi operator defined by \eqref{J''}-\eqref{J'}.
\subsubsection{}  

The three-term asymptotics \eqref{ER0} improves the two-term asymptotics proved by Yanovich \cite{Y} (see also an earlier version in \cite{Tur2}):
\begin{equation}\label{Y0}
\lambda_n(J)=n-a_1^2+\ord(n^{-1/16}).
\end{equation}
\subsubsection{}  

The large $n$ behavior of $\lambda_n(J)$ was already investigated by Schmutz \cite{Schm}. Let $J_0$ denote the operator defined by \eqref{J''}-\eqref{J'} when $\rho=0$, i.e.,
\begin{equation}  \label{J0}
(J_0x)(k)=kx(k)+a_1\sqrt{k}\,x(k+1)+a_1\sqrt{k-1}\,x(k-1).
\end{equation}
As noticed by Schmutz, $J_0$ can be diagonalized in the canonical basis by means of the Bogoliubov transformation: 
\[
L_0\coloneqq\eul^{\ii B}J_0\eul^{-\ii B}=\diag(k-a_1^2)_{k=1}^\infty
\]
defined by $B=a_1(\hat a-\hat a^\dagger)$, where $\hat a$ and $\hat a^\dagger$ are the annihilation and creation operators, respectively (see Appendix~\ref{appendix}). Since $J=J_0+V$ with $V=\diag\left((-1)^k\rho\right)$, its eigenvalues are the same as those of
\[
L\coloneqq\eul^{\ii B}J\eul^{-\ii B}=L_0+\tilde V,
\]
where $\tilde V\coloneqq\eul^{\ii B}V\eul^{-\ii B}$. Then, decay properties of the entries of $\tilde V$ should allow to expect the large $n$ behavior
\begin{equation}  \label{schmutz}
\lambda_n(J)=\lambda_n(L)\approx\lambda_n(L_0)=n-a_1^2.
\end{equation}
\subsubsection{}  

The asymptotic behavior of the matrix elements $\tilde V(j,k)$ expressed by means of Laguerre polynomials was considered by I.~D.~Feranchuk et al \cite{FKU} who proposed to apply the $0$th order perturbation theory (see the book \cite{FILU}) and suggested the following improvement of \eqref{schmutz}:
\begin{equation} \label{GRWA}
\lambda_n(J)=\lambda_n(L)\approx\lambda_n(L_0)+\tilde V(n,n). 
\end{equation} 
The approximation \eqref{GRWA} was discovered independently by Irish~\cite{Ir}. Following Irish, \eqref{GRWA} is called the Generalized Rotating-Wave Approximation (GRWA) in the physical literature. We observe that one can prove that
\begin{equation} \label{diag-appr}
\tilde V(n,n)=\F{r}(n)+\ord(n^{-1/2+\varepsilon}),
\end{equation} 
where $\F{r}(n)$ is given by \eqref{R0}.
\subsubsection{}  

The first step of our analysis uses an analog of the Bogoliubov transformation. In order to simplify the remainder estimates with respect to the large parameter $n$ we work with Jacobi operators indexed by $n$ and write 
\[
\eul^{\ii B_n}J_n\eul^{-\ii B_n}=L_{0,n}+\tilde V_n+R_n 
\]
where $L_{0,n}$ is diagonal, $\norm{R_n}=\ord(n^{-1/2})$, and
\begin{equation} \label{diag-approx}
\tilde V_n(n,n)=\F{r}(n)+\ord(n^{-1/2+\varepsilon}).
\end{equation} 
The definitions of $J_n$, $L_{0,n}$, and $\tilde V_n$ are given in Section~\ref{sec:22}. Propositions~\ref{prop:21} and \ref{prop:23} ensure the fact that the asymptotic formula for $J$ can be reduced to an analogous formula for $J_n$ and $L_n=L_{0,n}+\tilde V_n$.
\subsubsection{}  

The asymptotic behavior of $\lambda_n(J_n)$ is deduced from the trace estimate described in Section~\ref{sec:23} by means of a Tauberian type result \cite{BZ5}*{Proposition 11.1} slightly adapted in Proposition~\ref{prop:25}. We begin the proof of the trace estimate in Section~\ref{sec:4'} by reducing the problem to large $n$ estimates of a Dyson expansion similarly as in \cite{BZ5}. We notice that Section~\ref{sec:4'} is the only part of this paper where we rewrite proofs from \cite{BZ5} in a slightly more general form. We complete the proof by an analysis of the Dyson expansion in Sections~\ref{sec:5}-\ref{sec:7} and to perform this analysis we need to use a certain number of auxiliary results from \cite{BZ5}*{Section 10}. In order to avoid unnecessary overlaps we refer to \cite{BZ5} for the proofs of these auxiliary results. 
\subsubsection{}  

In Section~\ref{sec:xJC} we describe a class of more general type of operators for which we can obtain an analogous large $n$ asymptotic formula. Following \cite{BNS} we replace the sequence $\rho (-1)^k$ by a general sequence of period $N$ and we give the corresponding asymptotic formula in Theorem~\ref{thm:12}. Since in practice the proofs of Theorem~\ref{thm:11} and Theorem~\ref{thm:12} require the same arguments, we chose to present the proof in the more general framework, i.e., for the class of operators described in Section~\ref{sec:xJC}. For readers interested only in the result of Theorem~\ref{thm:11} we indicate that the only simplification with respect to Theorem~\ref{thm:12} consists in the fact that the proof of \eqref{diag-appr} is simpler in the case of period $N=2$. Indeed, an additional symmetry of this case allows us to express an approximation of $\tilde V_n(k,k)$ by oscillating integrals with very simple phase functions (see Section~\ref{sec:42}) and to obtain \eqref{diag-appr} immediately from the stationary phase formula. Thus the proof of Theorem~\ref{thm:11} ends in Section~\ref{sec:72}. In the case of Theorem~\ref{thm:12} the proof of \eqref{diag-appr} involves more complicated phase functions and is given in Section~\ref{sec:73}.
\subsubsection{}  

Our approach works the same way in the proofs of Theorems~\ref{thm:11} and \ref{thm:12}. Therefore, it does not distinguish whether or not the corresponding model is integrable in the sense of Braak \cite{Bra11}. For this reason, it makes no contribution to the Braak conjecture.
 
\subsection{Quantum Rabi type models} \label{sec:xJC}

We consider the following assumptions on the entries of $J$:   

\begin{assumption-1*}
There exist constants $0<\gamma\leq\frac{1}{2}$, $C$, $C'$, $C''$, and $c>0$ such that
\begin{align*}
&ck^{\gamma}\leq a(k)\leq Ck^{\gamma},\tag{H1a}\\
&\abs{\delta a(k)}\leq C'k^{\gamma-1},\tag{H1b}\\
&\abs{\delta^2 a(k)}\leq C''k^{\gamma-2}\tag{H1c}
\end{align*}
for any $k\in\D{N}^*$. Here, $\delta a(k)\coloneqq a(k+1)-a(k)$ and $\delta^2a(k)\coloneqq a(k+2)-2a(k+1)+a(k)$.
\end{assumption-1*}

\begin{remark*}
(H1) is satisfied if $a(k)$ has the large $k$ behavior
\[  
a(k)=a_1k^{\gamma}+a_1'k^{\gamma-1}+\ord(k^{\gamma-2}).
\]
\end{remark*}

\begin{assumption-2*}
The diagonal entries of $J$ are of the form
\begin{equation}\label{dk}
d(k)=k+v(k)\tag{H2a}
\end{equation}
where $v\colon\D{N}^*\to\D{R}$ is real-valued and periodic of period $N$, i.e.: 
\begin{equation}\label{period}
v(k+N)=v(k)\text{ for any }k\in\D{N}^*.\tag{H2b}
\end{equation}
Moreover, we assume
\begin{equation}\label{rho-N}
\rho_N<
\begin{cases} 
\frac{1}{2}&\text{if }N=2,\\
\frac{1}{\pi\sqrt{N}}&\text{if }N\geq 3,
\end{cases}
\tag{H2c}
\end{equation}
where
\begin{equation}  \label{rhoN} 
\rho_N=\rho_N(v)\coloneqq\max_{1\leq k\leq N}\abs{v(k)-\av{v}}.
\end{equation}
Here $\av{v}\coloneqq\frac{1}{N}\sum_{1\leq k\leq N}v(k)$ denotes the ``mean value'' of $v$.
\end{assumption-2*}

To compare with the hypotheses of Theorem~\ref{thm:11} we reformulate these as follows:

\begin{assumption-0*}
The diagonal and off-diagonal entries of $J$ are of the form
\[
\begin{cases}
d(k)=k+v(k),&\text{with }v(k)=(-1)^k\rho,\\
a(k)=a_1k^{\gamma},&\text{with }\gamma=\frac{1}{2},
\end{cases}
\]
where $\rho$ is a real constant. In particular, $v$ is periodic of period $N=2$, $\av{v}=0$, and $\rho_N=\abs{\rho}$.
\end{assumption-0*}

\begin{remark*}
(H0) is a special case of (H1) and (H2), except that there is no restriction on $\rho_N=\abs{\rho}$.
\end{remark*}

Let $v$ be as in (H2). By $N$-periodicity we can expand it as follows: 
\begin{equation}\label{vk} 
v(k)=\alpha_0+\sum_{m=1}^{\lfloor N/2\rfloor}\alpha_m\cos\tfrac{2m\pi k}{N}+\sum_{m=1}^{\lfloor(N-1)/2\rfloor}\tilde\alpha_m\sin\tfrac{2m\pi k}{N}, 
\end{equation} 
where 
\begin{itemize}
\item
$\lfloor s\rfloor\coloneqq\max\accol{k\in\D{Z}:k\leq s}$ is the integer part of $s$, 
\item
all coefficients $\alpha_0=\av{v}$, $\alpha_m$, and $\tilde\alpha_m$ are real constants.  
\end{itemize}

\begin{theorem}[Quantum Rabi type model]\label{thm:12} 
Let $J$ be defined by \eqref{J''} with $\accol{a(k)}_{k=1}^{\infty}$ and $\accol{d(k)}_{k=1}^{\infty}$ satisfying assumptions \emph{(H1)} and \emph{(H2)}, respectively. Then, for any $\varepsilon>0$ we have the large $n$ asymptotic formula:
\begin{subequations}\label{ER2}
\begin{equation}\label{E2}
\lambda_n(J)=n+a(n-1)^2-a(n)^2+\alpha_0+\F{r}(n)+\ord(n^{-\gamma+\varepsilon}).
\end{equation} 
where $\alpha_0=\av{v}$ and
\begin{equation}\label{R2}
\F{r}(n)=\sum_{m=1}^{\lfloor N/2\rfloor}\alpha_m\F{r}_m(n)+\sum_{m=1}^{\lfloor(N-1)/2\rfloor}\tilde\alpha_m\tilde{\F{r}}_m(n),
\end{equation} 
with $\alpha_m,\tilde\alpha_m$ as in \eqref{vk} and $\F{r}_m(n),\tilde{\F{r}}_m(n)$ defined by 
\begin{align}\label{Rm}
\F{r}_m(n)\coloneqq\frac{\cos\bigl(4a(n)\sin\tfrac{m\pi}{N}-\tfrac{\pi}{4}\bigr)}{\sqrt{2\pi a(n)\sin\tfrac{m\pi}{N}}}\cos\Bigl(\tfrac{2m\pi n}{N}+2a(n)\delta a(n)\sin\tfrac{2m\pi}{N}\Bigr),\\   
\label{tRm}
\tilde{\F{r}}_m(n)\coloneqq\frac{\cos\bigl(4a(n)\sin\tfrac{m\pi}{N}-\tfrac{\pi}{4}\bigr)}{\sqrt{2\pi a(n)\sin\tfrac{m\pi}{N}}}\sin\Bigl(\tfrac{2m\pi n}{N}+2a(n)\delta a(n)\sin\tfrac{2m\pi}{N}\Bigr).
\end{align}
\end{subequations}  
\end{theorem}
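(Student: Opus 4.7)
The plan is to follow the Bogoliubov-transformation strategy sketched between Theorem~\ref{thm:11} and Section~\ref{sec:xJC}, adapted here to a potential $v$ of arbitrary period $N$. First I would localize the problem by constructing a family of auxiliary Jacobi matrices $J_n$ whose entries coincide with those of $J$ on a sufficiently large window around the index $n$ but are tamed outside it, so that $\lambda_n(J_n)$ approximates $\lambda_n(J)$ with an error lying in $\ord(n^{-\gamma+\varepsilon})$; this is the role of Propositions~\ref{prop:21} and \ref{prop:23}. After the natural shift $k\mapsto n+k$ the off-diagonal entries $a(k)$ become close to the constant $a(n)$, with successive differences of sizes $\ord(n^{\gamma-1})$ and $\ord(n^{\gamma-2})$ by (H1b)--(H1c), which is the crucial smallness that makes the next step effective.

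Next, I would design a self-adjoint generator $B_n$ and conjugate to obtain
\[
\eul^{\ii B_n}J_n\eul^{-\ii B_n}=L_{0,n}+\tilde V_n+R_n,\qquad\norm{R_n}=\ord(n^{-1/2}),
\]
with $L_{0,n}$ diagonal. Heuristically $B_n$ is chosen so that the constant-coefficient approximation of $J_n$ near index $n$ is diagonalized, which forces the explicit arithmetic part $n+a(n-1)^2-a(n)^2$ of \eqref{E2} to appear on the diagonal. The $N$-periodic potential $v$ is transported into $\tilde V_n$; inserting the Fourier expansion \eqref{vk}, each harmonic $\eul^{\pm 2\ii m\pi k/N}$ becomes an oscillatory integral whose phase involves the frequency $4a(n)\sin\tfrac{m\pi}{N}$, and the stationary phase formula produces exactly the oscillating functions $\F{r}_m(n)$ and $\tilde{\F{r}}_m(n)$ of \eqref{Rm}--\eqref{tRm}. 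This yields the key identity $\tilde V_n(n,n)=\F{r}(n)+\ord(n^{-1/2+\varepsilon})$ recorded in \eqref{diag-approx}.

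To promote this diagonal approximation of $\tilde V_n$ to an eigenvalue asymptotics for $L_n\coloneqq L_{0,n}+\tilde V_n$, I would apply the trace/Dyson-expansion machinery of \cite{BZ5}. After a smooth regularization of the counting function, one expands $\tr\varphi(L_n)$ in powers of $\tilde V_n$ around $L_{0,n}$; only the diagonal entry $\tilde V_n(n,n)$ contributes at the order $n^{-\gamma+\varepsilon}$, because the off-diagonal entries $\tilde V_n(j,k)$ with $j\neq k$ are rapidly oscillating and, combined with the spectral gap structure of $L_{0,n}$, yield $\ord(n^{-\gamma+\varepsilon})$ cancellations in every Dyson term. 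The adapted Tauberian statement Proposition~\ref{prop:25} then converts the resulting trace asymptotics into the eigenvalue asymptotics \eqref{E2}.

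The main obstacle is this Dyson-expansion step: one must control every term of the expansion uniformly in $n$ and track precisely which phases are stationary and which are not, and the restriction \eqref{rho-N} on $\rho_N$ enters here to guarantee that higher-order Dyson contributions remain subdominant. A secondary difficulty, absent from Theorem~\ref{thm:11}, is the stationary phase computation itself: when $N=2$ only the symmetric frequency $\sin(\pi/2)$ occurs and the oscillatory integral reduces to a classical Bessel-type expression, whereas for general $N$ one must handle the full family of phases with frequencies $\sin\tfrac{m\pi}{N}$, $1\leq m\leq\lfloor N/2\rfloor$, which is why the authors defer that computation to Section~\ref{sec:73}.
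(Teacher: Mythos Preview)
Your outline is essentially the paper's own scheme: localization to $J_n$ (Propositions~\ref{prop:21} and~\ref{prop:23}), conjugation by $\eul^{\ii B_n}$ to reach $L_n=L_{0,n}+\tilde V_n$, stationary phase for $\tilde V_n(n,n)$ (Lemma~\ref{lem:22}), the trace estimate via the Dyson expansion of $U_n(t)$ (Proposition~\ref{prop:trace}), and the Tauberian conversion (Proposition~\ref{prop:25}). Two small corrections: for general $\gamma\le\tfrac12$ the conjugation remainder satisfies $\norm{R_n}=\ord(n^{3\gamma-2})$ rather than $\ord(n^{-1/2})$ (your bound is the special case $\gamma=\tfrac12$); and the smallness condition~\eqref{rho-N} on $\rho_N$ is not what controls the higher Dyson terms---those are handled uniformly by the oscillatory-integral estimates of Proposition~\ref{prop:4'} regardless of $\rho_N$---but is instead the hypothesis~\eqref{232} needed to run the Tauberian Proposition~\ref{prop:25} (via Lemma~\ref{lem:26}).
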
  

\begin{remark*}
For $N=2$, the expression of $\F{r}(n)$ reduces to
\[
\F{r}(n)=\rho\,\F{r}_1(n)=(-1)^n\rho\frac{\cos(4a(n)-\tfrac{\pi}{4})}{\sqrt{2\pi a(n)}}\,.
\]
Moreover, in the case of the quantum Rabi model, $a(n)=a_1\sqrt{n}$, hence $a(n-1)^2-a(n)^2=-a_1^2$. Then, \eqref{ER2} becomes the asymptotic formula \eqref{ER0} in Theorem~\ref{thm:11}.
\end{remark*}

\begin{assumption*}
Further on, we make the assumption 
\begin{equation}\label{alpha0}
\av{v}=\alpha_0=0. 
\end{equation} 
Indeed, since $\lambda_n(J)=\alpha_0+\lambda_n(J-\alpha_0I)$ it suffices to prove Theorem~\ref{thm:12} for $J-\alpha_0I$.
\end{assumption*}   

\subsection{Plan of the paper}   \label{sec:14}

As in \cite{BZ5} the main ingredient of our approach is a trace estimate (Proposition \ref{prop:trace}). In Section \ref{sec:2} we show the implication
\begin{equation}\label{144}
\begin{rcases} 
\text{Propositions~\ref{prop:21} \& ~\ref{prop:23}}\\
\text{Propositions~\ref{prop:trace} \& ~\ref{prop:25}}\\
\text{Lemmas~\ref{lem:22} \& ~\ref{lem:26}}
\end{rcases}\!\implies\!\text{Theorems~\ref{thm:11} \& ~\ref{thm:12}}.
\end{equation} 
Section 2 gives the proofs of Propositions~\ref{prop:21}, \ref{prop:23}, and \ref{prop:25}, and of Lemma~\ref{lem:26}. Notice that Proposition~\ref{prop:25} was proved in \cite{BZ5} and Propositions~\ref{prop:21} and \ref{prop:23} were proved in \cite{BZ5} under Assumptions (H1) and (H2). Thus, it remains to prove Lemma~\ref{lem:22} and the trace estimate from Proposition \ref{prop:trace}.

The proof of Lemma~\ref{lem:22} under Assumption (H0) is given in Section~\ref{sec:4}. The proof of Lemma~\ref{lem:22} under Assumptions (H1) and (H2) is given in Section~\ref{sec:73}.
        
The remaining part of the paper is devoted to the proof of the trace estimate (Proposition \ref{prop:trace}). This result is a refinement of a less precise trace estimate \cite{BZ5}*{Proposition 5.2} and is obtained from the analysis of a suitable evolution $t\to U_n(t)$ based on Fourier transform, as in \cite{BZ5}*{Section 6}. This reduction is presented in Section \ref{sec:4'} where we give details which are more involved than in \cite{BZ5}. More precisely, in Section \ref{sec:4'} we state Proposition \ref{prop:4'} which gives $\ord(n^{-\gamma+\varepsilon})$ estimates for the diagonal entries from the Neumann series expansion of $t\to U_n(t)$ and we show that Proposition~\ref{prop:4'} implies Proposition~\ref{prop:trace}. The proof of Proposition \ref{prop:4'} is given in Section~\ref{sec:63} and is based on approximations by oscillatory integrals (Lemma \ref{lem:61}). In Section \ref{sec:7} we observe that the construction of these approximations was already made in \cite{BZ5}*{Section 10} and give the proof of the regularity properties claimed in Lemma~\ref{lem:61}.
   
Concerning the proof of Proposition~\ref{prop:4'} we observe that the principal difficulty consists in the control of oscillatory integrals with phase functions depending on parameters. In particular these phase functions can be identically zero for some values of the parameters but an additional integration allows us to neglect the contribution of these bad cases. More precisely the phase functions appear with a large parameter proportional to $n^{\gamma}$ (see Section~\ref{sec:611}) and the results of \cite{BZ5} were based on the fact that the decay of the corresponding oscillatory integrals is of order $n^{-\gamma/2}$. To obtain the results described in this paper we apply the formula of the asymptotic expansion for oscillatory integrals stated in Lemma~\ref{lem:41}. In Section~\ref{sec:63} we investigate the special structure of the main term (of order $n^{-\gamma/2}$) and error terms (of order $n^{-\gamma}$) and we manage to control their dependence on parameters by using an auxiliary estimate proved in Section~\ref{sec:5}.  

\subsection{Notations}  \label{sec:15}

Throughout the paper, we use the following notations:
\begin{enumerate}[\textbullet]
\item
$\C{B}(\C{H})$ is the algebra of linear bounded operators on a Hilbert space $\C{H}$.
\item   
$\D{N}=\{0,1,\dots\}$ is the set of non-negative integers, $\D{N}^* =\{1,2,\dots\}$ is the set of positive integers.
\item
$l^2(\D{Z})$ is the Hilbert space of square-summable complex sequences $x\colon\D{Z}\to\D{C}$ equipped with the scalar product $\scal{x,y}=\sum_{k\in\D{Z}}\overline{x(k)}y(k)$ and with the norm $\norm{x}_{l^2(\D{Z})}\coloneqq\sqrt{\scal{x,x}}$.
\item
$\accol{\vece_k}_{k\in\D{Z}}$ is the canonical basis of $l^2(\D{Z})$, i.e., $\vece_k(j)=\delta_{k,j}$.
\item
$H(j,k)\coloneqq\scal{\vece_j,H\vece_k}$, $j,k\in\D{Z}$ denote the matrix elements of an operator $H$ acting on $l^2(\D{Z})$ and defined on its canonical basis.
\item
$l^2(\D{N}^*)$ is the Hilbert space of square-summable complex sequences $x\colon\D{N}^*\to\D{C}$ equipped with the scalar product $\scal{x,y}=\sum_{k\in\D{N}^*}\overline{x(k)}y(k)$ and the norm $\norm{x}_{l^2(\D{N}^*)}\coloneqq\sqrt{\scal{x,x}}$. It can be identified with the closed subspace of $l^2(\D{Z})$ generated by $\accol{\vece_n}_{n\in\D{N}^*}$, i.e., with the subspace defined by the conditions $x(k)=0$ for any $k\leq 0$.
\end{enumerate}

We use specific notations for some operators acting on $l^2(\D{Z})$:
\begin{enumerate}[\textbullet]
\item
The shift $S\in\C{B}(l^2(\D{Z}))$ is defined by $(Sx)(k)=x(k-1)$,  $k\in\D{Z}$. Thus, $S\vece_k=\vece_{k+1}$.  
\item
$\Lambda$ acts on $l^2(\D{Z})$ by $(\Lambda x)(k)=kx(k)$, $k\in\D{Z}$ for any $x$ such that $(kx(k))_{k\in\D{Z}}\in l^2(\D{Z})$.
\item
For any $q\colon\D{Z}\to\D{C}$ we define the linear operator $q(\Lambda)$ by functional calculus, i.e., $q(\Lambda)$ is the closed operator acting on $l^2(\D{Z})$ and characterized by $q(\Lambda)\vece_k=q(k)\vece_k$, $k\in\D{Z}$. 
\item
If $L$ is a self-adjoint operator which is bounded from below with compact resolvent its spectrum is discrete and we denote
\[
\lambda_1(L)\leq\dots\leq\lambda_k(L)\leq\lambda_{k+1}(L)\leq\dots
\] 
its eigenvalues, enumerated in non-decreasing order, counting multiplicities.
\end{enumerate} 

Finally, we also use the following notations:
\begin{enumerate}[\textbullet]
\item
$\C{S}(\D{R})$ denotes the Schwartz class of rapidly decreasing functions $\chi\colon\D{R}\to\D{C}$.
\item
The Fourier transform $\hat\chi$ of a function $\chi\in\C{S}(\D{R})$ is defined by
\[ 
\hat\chi(t)\coloneqq\int_{-\infty}^{\infty}\chi(\lambda)\eul^{-\ii t\lambda}\frac{\dd\lambda}{2\pi}.
\] 
\item
$\cercle$ denotes the unit circle $\accol{z\in\D{C}:\abs{z}=1}$.
\item
$\tau_{\omega}\colon\cercle\to\cercle$, where $\omega\in\D{R}$ denotes the translation $\ek\mapsto\eul^{\ii(\xi-\omega)}$.
\item
$\class^m(\cercle)$, $m=0,1,2,\dots$ is the space of functions $b\colon\cercle\to\D{C}$ of class $\class^m$ equipped with the norm 
\[ 
\norm{b}_{\class^m(\cercle)}\coloneqq\max_{0\leq k\leq m}\sup_{\xi\in\D{R}}\bigl\lvert\partial_{\xi}^k{b(\ek)}\bigr\rvert.
\]    
\end{enumerate}

Throughout the paper $n\in\D{N}^*$ is the large parameter involved in the asymptotics \eqref{E0} or \eqref{E2}. All error estimates are considered with respect to $n\in\D{N}^*$ and some statements will be established only for $n\geq n_0$, for some large enough constant $n_0$.
\section{Scheme of the proof of Theorems~\ref{thm:11} and \ref{thm:12}}\label{sec:2} 
\subsection{Plan of Section 2}\label{sec:21} 

In Section \ref{sec:22} we introduce auxiliary operators $J_n$, $\tilde V_n$, and $L_n$ and state Propositions \ref{prop:21} and \ref{prop:23} ensuring that $\lambda_n(J)$ is well approximated for large $n$ by suitable eigenvalues of $J_n$ and $L_n$. Moreover, we state Lemma~\ref{lem:22} which gives the asymptotics of the $n$-th diagonal entry of $\tilde V_n$. In Section \ref{sec:23} we first state Proposition \ref{prop:trace} which gives a trace estimate for $L_n$. To derive from this estimate the asymptotic behavior of eigenvalues of $L_n$ we prove Lemma~\ref{lem:26} which allows us to apply Proposition \ref{prop:25}. Finally in Section \ref{sec:24} we check the implication \eqref{144}.

\begin{remark*}
Proposition~\ref{prop:25} was already proved in \cite{BZ5}. Propositions \ref{prop:21} and \ref{prop:23} were proved in \cite{BZ5} under assumptions (H1) and (H2). Lemma~\ref{lem:26} is proved in Section~\ref{sec:233}. Lemma~\ref{lem:22} is proved in Section~\ref{sec:4} under assumption (H0) and in Section~\ref{sec:7} under assumptions (H1) and (H2). Sections~\ref{sec:4'}, \ref{sec:5}, \ref{sec:6}, and \ref{sec:63} are devoted to the proof of Proposition~\ref{prop:trace} (trace estimate).
\end{remark*}

\subsection{Auxiliary operators $\BS{J_n}$, $\BS{\tilde V_n}$, and $\BS{L_n}$} \label{sec:22}
\subsubsection{Cut-off function}\label{sec:220}

These operators were already introduced in \cite{BZ5}. Their definition depends on the choice of a cut-off function $\theta_0\in\class^{\infty}(\D{R})$ such that
\begin{subequations} \label{222}
\begin{equation}\label{220} 
\begin{cases} 
\theta_0(t)=1&\text{if }\abs{t}\leq\frac{1}{6}\,,\\ 
\theta_0(t)=0&\text{if }\abs{t}\geq\frac{1}{5}\,,\\ 
0\leq\theta_0(t)\leq 1&\text{otherwise}.
\end{cases}  
\end{equation} 
From now on such a cut-off is fixed and for any $\tau>0$ we denote
\begin{equation} \label{223}
\theta_{\tau,n}(s)\coloneqq\theta_0\!\left(\frac{s-n}\tau\right).
\end{equation}
\end{subequations} 
Finally we define $v_n,\,a_n\colon\D{Z}\to\D{R}$ by 
\begin{subequations} \label{avn}
\begin{align}   \label{vn}
v_n(k)&\coloneqq v(k)\theta_{n,n}(k)^2,\\
\label{an}
a_n(k)&\coloneqq\left(a(n)+(k-n)\delta a(n)\right)\theta_{2n,n}(k). 
\end{align}
\end{subequations}

\subsubsection{Operators $J_n$}\label{sec:221}
 
As in \cite{BZ5}*{Section 2.2} we define the self-adjoint operator $J_n$ on $l^2(\D{Z})$ by 
\begin{subequations}  \label{Jn}
\begin{equation} \label{Jn+}
(J_nx)(k)=(k+v_n(k))x(k)+a_n(k)x(k+1)+a_n(k-1)x(k-1)
\end{equation} 
for any $x$ such that $(kx(k))_{k\in\D{Z}}\in l^2(\D{Z})$. Using notations of Section~\ref{sec:15} we can write
\begin{equation}\label{27} 
J_n=\Lambda+v_n(\Lambda)+a_n(\Lambda)S^{-1}+Sa_n(\Lambda),  
\end{equation}
\end{subequations} 
to compare with the similar expression for $J$. Each operator $J_n$ is a finite rank perturbation of $\Lambda$, hence its spectrum $\spec(J_n)$ is discrete and can be written
\begin{equation}  \label{Jnev}
\spec(J_n)=\accol{\lambda_k(J_n)}_{k\in\D{Z}}
\end{equation}
where $(\lambda_k(J_n))_{k\in\D{Z}}$ denotes the non-decreasing sequence of eigenvalues of $J_n$ counted with their multiplicities, well labeled up to translation.

\begin{proposition}\label{prop:21} 
Let $J$ and $J_n$ be defined by \eqref{J''} and \eqref{Jn}, respectively. Assume we are in one of the following two cases:
\begin{enumerate}[\rm(a)]
\item   \label{H0}
\emph{(H0)} is satisfied.
\item \label{H12}
\emph{(H1)} for some $0<\gamma\leq\frac{1}{2}$ and \emph{(H2)} are satisfied.
\end{enumerate}
Let $n_0\in\D{N}$ be large enough. Then, for any $n\geq n_0$ we can enumerate the eigenvalues of $J_n$ as in \eqref{Jnev} so that we have the large $n$ estimate
\begin{equation}  \label{J-Jn}
\lambda_n(J)=\lambda_n(J_n)+\ord(n^{3\gamma-2}),
\end{equation}
where $\gamma=\frac{1}{2}$ in case \eqref{H0}.
\end{proposition}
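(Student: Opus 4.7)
The plan is to show that $J$ and $J_n$ have essentially the same eigenfunctions near index $n$, because the difference $J - J_n$ is supported either outside a window of size $\sim n$ around $n$ (where the $n$-th eigenfunction is negligible) or is small on that window by a Taylor estimate.

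First I would observe that on the interval $\abs{k-n}\le n/6$ the cut-offs in \eqref{avn} satisfy $\theta_{n,n}(k)=\theta_{2n,n}(k)=1$, so $v_n(k)=v(k)$ and $a_n(k)$ equals the first-order Taylor polynomial of $a$ at $n$. Taylor's theorem together with (H1c) then gives
\[
\abs{a(k)-a_n(k)} \le \tfrac{1}{2}(k-n)^2 \sup_{\abs{j-n}\le n/6}\abs{\delta^2 a(j)} = \ord\bigl(n^{\gamma-2}(k-n)^2\bigr)
\]
on that window; under (H0) the same bound follows from a direct expansion of $a_1\sqrt{k}$. Outside the window, $J$ and $J_n$ are both bounded perturbations of $\Lambda + v(\Lambda)$ whose spectrum there lies away from $\lambda_n\sim n$.

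Next I would establish an Agmon--Combes--Thomas type decay estimate: any $L^2$-normalised eigenvector $w$ of $J$ with eigenvalue in $\croch{\lambda_n(J)-1,\,\lambda_n(J)+1}$ satisfies, for every $N$,
\[
\sum_{\abs{k-n}\ge M}\abs{w(k)}^2 \le C_N M^{-N}, \qquad M \ge C n^{\gamma}\log n.
\]
The argument conjugates $J-\lambda$ by a weight $\eul^{\phi(k)}$ with $\phi(k)$ modelled on the Agmon distance for the symbol $k-\lambda+2a(k)\cos\xi$; (H1a,b) provide the regularity required for the commutator estimate. The same decay holds for eigenvectors of $J_n$. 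Combined with the Taylor bound, this gives $\norm{(J-J_n)w} = \ord(n^{\gamma-2}\cdot n^{2\gamma}) = \ord(n^{3\gamma-2})$, the factor $n^{2\gamma}$ arising from $\sum_{k}(k-n)^4\abs{w(k)}^2\lesssim n^{4\gamma}$ via the localization.

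The conclusion is a standard two-sided min-max argument: using the first $n$ eigenvectors of $J$ (extended by zero to $l^2(\D{Z})$) as trial vectors for $J_n$ gives $\lambda_n(J_n)\le\lambda_n(J)+\ord(n^{3\gamma-2})$, and the reverse inequality is obtained symmetrically. The enumeration of $\spec(J_n)$ in \eqref{Jnev} is unambiguous for this matching: since $J_n$ is a compact perturbation of $\Lambda$ on $l^2(\D{Z})$, its eigenvalues near $n$ are spaced by approximately one and cannot collide under an $\ord(n^{3\gamma-2})$ perturbation, which uniquely selects the eigenvalue paired with $\lambda_n(J)$. The main obstacle is the Agmon estimate under the weak regularity (H1); this step determines the exponent $3\gamma-2$ in \eqref{J-Jn}.
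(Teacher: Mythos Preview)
Your localization strategy is a natural alternative to the paper's, and the Taylor estimate together with an Agmon--Combes--Thomas bound would indeed yield $\norm{(J-J_n)w_n}=\ord(n^{3\gamma-2})$ for the single eigenvector $w_n$. The gap is in the concluding min--max step. To bound $\lambda_n(J_n)$ from above via min--max you need an $n$-dimensional trial space on which $J_n-J$ is uniformly $\ord(n^{3\gamma-2})$, and the span of the first $n$ eigenvectors of $J$ does not have this property: for small $k$ the eigenvector $w_k$ is localized near index $k$, where the cut-offs in \eqref{avn} vanish, so $(J-J_n)e_j=v(j)e_j+a(j)e_{j+1}+a(j-1)e_{j-1}$ and hence $\scal{w,(J_n-J)w}$ can be of order $1$ (even of order $n^{\gamma}$ for suitable combinations). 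The same obstruction blocks the reverse inequality, and in addition $J_n$ has spectrum extending to $-\infty$, so ``the first $n$ eigenvectors of $J_n$'' is undefined before the labeling is fixed.

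What localization does give you is $\dist(\lambda_n(J),\spec(J_n))=\ord(n^{3\gamma-2})$ via the approximate-eigenvector argument; the missing ingredient is a pairing/labeling argument showing that the nearby eigenvalue of $J_n$ is the one you want to call $\lambda_n(J_n)$. The paper handles this differently: it introduces an intermediate operator $\tilde J_n^+$ (see \eqref{tildeJn}) that agrees \emph{exactly} with $J$ on a window of width $\sim n^{\gamma}$. Then an eigenvalue-counting lemma from \cite{BZ5}*{Proposition 12.5} compares $J$ with $\tilde J_n^+$ with error $\ord(n^{-\nu})$ for any $\nu$, while $\tilde J_n^+-J_n^+$ is a \emph{globally bounded} operator of norm $\ord(n^{3\gamma-2})$ (by your Taylor bound on the window, and zero elsewhere), so here ordinary min--max applies uniformly in $k$. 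The labeling is then pinned down using the a~priori separation of eigenvalues from Yanovich's estimate \eqref{Y0} in case~(a), and from \cite{BZ5} in case~(b). If you want to keep your Agmon route, you must supply an analogous separation/counting step; the min--max as written does not do the job.
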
 

\begin{proof} 
Case~\eqref{H12} is already proven in \cite{BZ5}*{Proposition 12.1}. Case~\eqref{H0} requires a new proof, since in that case there is no restriction on $\rho_N=\abs{\rho}$.

Let $J_n^+$ be the restriction of $J_n$ to the subspace $l^2(\D{N}^*)$ which is invariant under $J_n$. The operator $J_n^+$ is self-adjoint and bounded from below with compact resolvent. Its spectrum is discrete: $\sigma(J_n^+)=\accol{\lambda_k(J_n^+)}_{k\geq 1}$, where $\lambda_1(J_n^+)\leq\dots\leq\lambda_k(J_n^+)\leq\lambda_{k+1}(J_n^+)\leq\dots$ denote its eigenvalues, enumerated in non-decreasing order, counting multiplicities. Since $J_n\vece_k=k\vece_k$ for $k\leq 0$ we can write
\begin{equation}  \label{260}
\sigma(J_n)=\accol{k\in\D{Z}:k\leq 0}\cup\sigma(J_n^+)=\accol{k\in\D{Z}:k\leq 0}\cup\accol{\lambda_k(J_n^+)}_{k\geq 1}.
\end{equation}
\begin{stepone*}
We will show the estimate
\begin{equation} \label{Jnk-k}
\sup_{n,k\geq 1}\abs{\lambda_k(J_n^+)-k}<\infty.
\end{equation}
Let $J_{0,n}^+$ denote the operator acting on $l^2(\D{N}^*)$ by
\[
(J_{0,n}^+x)(k)=kx(k)+a_n(k)x(k+1)+a_n(k-1)x(k-1).  
\] 
Since $J_n^+-J_{0,n}^+=\diag\left(v_n(k)\right)_{k=1}^\infty$ with $\abs{v_n(k)}\leq\abs{\rho}$, the min-max principle applies and gives
\begin{equation} \label{Jnk-J0nk}
\sup_{n,k\geq 1}\abs{\lambda_k(J_n^+)-\lambda_k(J_{0,n}^+)}<\infty.
\end{equation}
Hence, to get \eqref{Jnk-k} it suffices to show
\begin{equation} \label{J0nk-k}
\sup_{n,k\geq 1}\abs{\lambda_k(J_{0,n}^+)-k}<\infty.
\end{equation}
In \cite{BZ5}*{Proposition 3.1} we proved the large $n$ estimate
\begin{equation} \label{E3}   
\sup_{k\in\D{N}^*}\,\abs{\lambda_k(J_{0,n}^+)-l_n(k)}=\ord(n^{3\gamma-2}),
\end{equation}   
where
\begin{equation} \label{300}
l_n(k)\coloneqq k+a_n(k-1)^2-a_n(k)^2,\quad k\geq 1.   
\end{equation} 
Hence, to obtain \eqref{J0nk-k} it suffices to show
\begin{equation} \label{lnk-k}
\sup_{n,k\geq 1}\abs{l_n(k)-k}<\infty.
\end{equation}
To prove \eqref{lnk-k} we denote
\begin{equation} \label{a1nk}
a_{1,n}(k)\coloneqq l_n(k)-k=a_n(k-1)^2-a_n(k)^2.
\end{equation}
By estimate $a_n(n-1)^2-a_n(n)^2=a(n-1)^2-a(n)^2+\ord(n^{2\gamma-2})$ from \cite{BZ5}*{(2.5)} we get
\[
l_n(n)=n+a(n-1)^2-a(n)^2+\ord(n^{2\gamma-2}). 
\]
In case~\eqref{H0}, $\gamma=\frac{1}{2}$ and $a(n)=a_1\sqrt{n}$, and thus we obtain
\begin{equation}   \label{lnn}
a_{1,n}(n)=l_n(n)-n=-a_1^2+\ord(n^{-1}).
\end{equation}
Moreover, in \cite{BZ5}*{Section 3.3} we proved the estimate
\begin{equation}  \label{da1n}
\abs{\delta a_{1,n}(k)}\leq Cn^{-1},
\end{equation}
and thus
\begin{equation}  \label{a1nk-a1nn}
\abs{a_{1,n}(k)-a_{1,n}(n)}\leq\abs{k-n}Cn^{-1},
\end{equation}
By definition, $a_{1,n}(k)=0$ for any $k\geq 2n$, and \eqref{a1nk-a1nn} ensures
\[
\sup_{n,k\geq 1}\abs{a_{1,n}(k)-a_{1,n}(n)}<\infty.
\]
Using \eqref{lnn} which implies $a_{1,n}=\ord(1)$ we finally get \eqref{lnk-k}:
\[
\sup_{n,k\geq 1}\abs{a_{1,n}(k)}<\infty.
\]
\end{stepone*}
\begin{steptwo*}
Now we will prove that 
\begin{equation} \label{tJnk-k}
\sup_{n,k\geq 1}\abs{\lambda_k(\tilde J_n^+)-k}<\infty.
\end{equation}
where the operator $\tilde J_n^+$ acts on $l^2(\D{N}^*)$ by
\begin{subequations}  \label{tildeJn}
\begin{equation}   \label{tJn}
(\tilde J_n^+x)(k)=d_n(k)x(k)+\tilde a_n(k)x(k+1)+\tilde a_n(k-1)x(k-1),
\end{equation}
with
\begin{equation}  \label{tan}
\tilde a_n(k)=
\begin{cases}
a(k)&\text{if }n-C_1(n+1)^{\gamma}\leq k\leq n+C_1(n+1)^{\gamma},\\
a_n(k)&\text{otherwise},
\end{cases}
\end{equation}
\end{subequations}
and with a large enough constant $C_1$. Estimate \cite{BZ5}*{(12.8)} ensures that
\begin{equation}  \label{eq:1232}
\sup_{k\in\D{N}^*}\abs{\lambda_k(\tilde J_n^+)-\lambda_k(J_n^+)}=\ord(n^{3\gamma-2}).
\end{equation}
Thus, \eqref{Jnk-k} and \eqref{eq:1232} for $\gamma=1/2$ give \eqref{tJnk-k}.
\end{steptwo*}
\begin{stepthree*}
Taking $C_1>3$ in the definition \eqref{tildeJn} we have the property
\[ 
n-3n^{\gamma}\leq k\leq n+3n^{\gamma}\implies J\vece_k=\tilde J_n^+\vece_k.
\] 
Then \cite{BZ5}*{Proposition 12.5} applies and for any $\nu>0$ there exists $n(\nu)$ such that the inequalities
\begin{subequations}  \label{card}
\begin{align}    \label{card1}
&\card\accol{k\in\D{N}^*:\lambda'+\lambda^{-\nu}<\lambda_k(\tilde J_n^+)\leq\lambda-\lambda^{-\nu}}\leq\card\accol{k\in\D{N}^*:\lambda'<\lambda_k(J)\leq\lambda},\\\label{card2}
&\card\accol{k\in\D{N}^*:\lambda'<\lambda_k(J)\leq\lambda}\leq\card\accol{k\in\D{N}^*:\lambda'-\lambda^{-\nu}<\lambda_k(\tilde J_n^+)\leq\lambda+\lambda^{-\nu}},
\end{align}
\end{subequations}
hold if $n-2n^{\gamma}\leq\lambda'<\lambda\leq n+2n^{\gamma}$ and $n\geq n(\nu)$. Let $\accol{\Delta_n^\kappa}_{n\geq 1}$, $0<\kappa<1$ be the family of intervals defined by
\[
\Delta_n^\kappa=(n-a_1^2-\kappa,n-a_1^2+\kappa\rbrack.
\]
Using \eqref{card1} with $\lambda'=n-a_1^2-\frac{3}{4}$, $\lambda=n-a_1^2+\frac{3}{4}$, and \eqref{card2} with $\lambda'=n-a_1^2-\frac{1}{4}$, $\lambda=n-a_1^2+\frac{1}{4}$ we get that for some $n_0$ and any $n\geq n_0$ we have the inequalities
\begin{subequations}  \label{est}
\begin{align}   \label{est1}
&\card\accol{k\in\D{N}^*:\lambda_k(\tilde J_n^+)\in\Delta_n^{1/2}}\leq\card\accol{k\in\D{N}^*:\lambda_k(J)\in\Delta_n^{3/4}},\\\label{est2}
&\card\accol{k\in\D{N}^*:\lambda_k(\tilde J_n^+)\in\Delta_n^{1/2}}\geq\card\accol{k\in\D{N}^*:\lambda_k(J)\in\Delta_n^{1/4}}.
\end{align}
\end{subequations}
\end{stepthree*}
\begin{stepfour*}
Yanovich estimate \eqref{Y0} gives $\lambda_n(J)=n-a_1^2+\osmall(1)$. Hence, for any $0<\kappa<1$ and $n(\kappa)$ large enough
\begin{equation} \label{lanJ}
n\geq n(\kappa)\implies\spec(J)\cap\Delta_n^\kappa=\accol{\lambda_n(J)}.
\end{equation}
Thus we can find $n_1\geq n_0$ such that the right hand sides of \eqref{est} are both equal to $1$ for $n\geq n_1$. Hence, inequalities \eqref{est} imply
\begin{equation}   \label{nkappa}
n\geq n_1\implies\card\accol{k\in\D{N}^*:\lambda_k(\tilde J_n^+)\in\Delta_n^{1/2}}=1.
\end{equation}
Let $k(n)=n+m(n)$, $m(n)\in\D{Z}$, $n\geq n_1$, be the unique $k$ such that $\lambda_k(\tilde J_n^+)\in\Delta_n^{1/2}$:
\[
n\geq n_1\implies\sigma(\tilde J_n^+)\cap\Delta_n^{1/2}=\accol{\lambda_{n+m(n)}(\tilde J_n^+)}.
\]
The eigenvalues $\lambda_{n+m(n)}(\tilde J_n^+)$ are of multiplicity one. Moreover, by \eqref{tJnk-k} there exist $\tilde n_0$, $m_0\in\D{N}^*$ large enough such that
\begin{equation} \label{m0}
n\geq\tilde n_0\implies\abs{m(n)}\leq m_0.
\end{equation}
\end{stepfour*}
\begin{stepfive*}
Using \eqref{card2} with $\lambda'=\lambda_n(J)-\lambda_n(J)^{-\nu}$, $\lambda=\lambda_n(J)$ we find that, for any fixed $\nu>0$,
\begin{equation}  \label{nu}
\lambda_{n+m(n)}(\tilde J_n^+)=\lambda_n(J)+\ord(n^{-\nu}).
\end{equation}
Then, using \eqref{eq:1232} with $\gamma=\frac{1}{2}$ and \eqref{nu} with $\nu=\frac{1}{2}$ we get
\begin{equation}  \label{Jn+Jn}
\lambda_{n+m(n)}(J_n^+)=\lambda_n(J)+\ord(n^{-1/2}).
\end{equation}
Let $n\geq\tilde n_1$ with $\tilde n_1$ large enough. Then, according to \eqref{260} and \eqref{m0} we can label the eigenvalues of $J_n$ in nondecreasing order, counting multiplicity, so that 
\begin{equation}  \label{Jnn}
\lambda_n(J_n)=\lambda_{n+m(n)}(J_n^+). 
\end{equation}
The proof of \eqref{J-Jn} is then completed by combining \eqref{Jn+Jn} with \eqref{Jnn}.\qedhere
\end{stepfive*}
\end{proof}

\subsubsection{Operators $\tilde V_n$}\label{sec:222}
 
As in \cite{BZ5}*{Section 5.2} we define self-adjoint operators $B_n,\tilde V_n\in\C{B}(l^2(\D{Z}))$ by 
\begin{subequations}\label{vtBn}
\begin{align}\label{Bn} 
B_n&\coloneqq\ii\left(a_n(\Lambda)S^{-1}-Sa_n(\Lambda)\right),\\
\label{vtilde}
\tilde V_n&\coloneqq\eul^{\ii B_n}v_n(\Lambda)\eul^{-\ii B_n}, 
\end{align}
\end{subequations}   
where $\accol{a_n(k)}_{k\in\D{Z}}$ and $\accol{v_n(k)}_{k\in\D{Z}}$ are defined in \eqref{avn}, and we denote by $\accol{g_n(k)}_{k\in\D{Z}}$ the sequence of diagonal entries of $\tilde V_n$:
\begin{equation}\label{gnk} 
g_n(k)\coloneqq\tilde V_n(k,k).  
\end{equation} 
Notice that $\tilde V_n$ and $g_n$ depend on both sequences $\accol{a(k)}_{k=1}^{\infty}$, $\accol{v(k)}_{k=1}^{\infty}$, and on the cut-off $\theta_0$.

\begin{lemma}\label{lem:22}
Let $\tilde V_n$ be defined by \eqref{vtBn} and assume we are in one of the following two cases:
\begin{enumerate}[\rm(a)]
\item  \label{H022}
\emph{(H0)} is satisfied.
\item  \label{H1222}
\emph{(H1)} and \emph{(H2)} with $\av{v}=0$ are satisfied.
\end{enumerate}
Then for any $\varepsilon>0$ one has the large $n$ asymptotics
\begin{equation}\label{gnn} 
g_n(n)=\F{r}(n)+\ord(n^{-\gamma+\varepsilon}).
\end{equation} 
where $\F{r}(n)$ is given by \eqref{R0} in case \eqref{H022} and by \eqref{R2} in case \eqref{H1222}.
\end{lemma}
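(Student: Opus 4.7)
The plan is to write
\[
g_n(n) = \scal{\vece_n,\tilde V_n\vece_n} = \sum_{k\in\D{Z}} v_n(k)\,\bigl\lvert(\eul^{-\ii B_n}\vece_n)(k)\bigr\rvert^2
\]
and extract the large $n$ behaviour by a Fourier-side analysis of the amplitude $\psi_n(k)\coloneqq(\eul^{-\ii B_n}\vece_n)(k)$. On the dual circle, with $S$ corresponding to multiplication by $\eul^{-\ii\xi}$, the operator $B_n=\ii(a_n(\Lambda)S^{-1}-Sa_n(\Lambda))$ has principal symbol $-2a_n(n)\sin\xi$ obtained by freezing $a_n(\Lambda)$ at $k=n$. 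This suggests the frozen-coefficient ansatz
\[
\psi_n(k)\approx\int_{-\pi}^{\pi}\eul^{\ii(k-n)\xi+2\ii a_n(n)\sin\xi}\,\frac{\dd\xi}{2\pi}=J_{n-k}\bigl(2a_n(n)\bigr).
\]
Justifying this ansatz with remainder $\ord(n^{-\gamma+\varepsilon})$ in $g_n(n)$ is done by a Duhamel/Dyson expansion, whose required oscillatory-integral machinery is exactly that developed in Section~\ref{sec:4'} and in Lemma~\ref{lem:61}.

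Once this representation is in place, I expand the $N$-periodic $v$ via \eqref{vk} with $\alpha_0=0$ and apply Parseval to the exponential sums
\[
S_m(n)\coloneqq\sum_{k\in\D{Z}}\eul^{\ii\omega_m k}\abs{\psi_n(k)}^2,\qquad\omega_m\coloneqq\tfrac{2m\pi}{N}.
\]
Using $\sin\xi-\sin(\xi+\omega_m)=-2\sin(m\pi/N)\cos(\xi+\omega_m/2)$, the frozen ansatz yields
\[
S_m(n)\approx\eul^{\ii n\omega_m}\int_{-\pi}^{\pi}\eul^{-4\ii a_n(n)\sin(m\pi/N)\cos(\xi+\omega_m/2)}\,\frac{\dd\xi}{2\pi}=\eul^{\ii n\omega_m}\,J_0\!\bigl(4a_n(n)\sin\tfrac{m\pi}{N}\bigr).
\]
The classical large-argument asymptotic $J_0(x)=\sqrt{2/(\pi x)}\cos(x-\pi/4)+\ord(x^{-3/2})$ then produces the outer oscillatory factor which serves as the prefactor of $\F{r}_m$ and $\tilde{\F{r}}_m$ in \eqref{Rm}--\eqref{tRm}. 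Splitting $\eul^{\ii n\omega_m}$ into real and imaginary parts and pairing them with $\alpha_m,\tilde\alpha_m$ reconstructs the $\cos(2m\pi n/N)$ and $\sin(2m\pi n/N)$ factors. In case (H0) only $m=1,\,N=2,\,\sin(\pi/2)=1$ survives and the expression collapses to \eqref{R0}, with $a(n)=a_1\sqrt{n}$ producing the $n^{-1/4}$ amplitude.

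The main obstacle is tracking the subprincipal phase correction that arises from the linear part $(k-n)\delta a(n)$ of $a_n$ near $k=n$. Since this correction enters the exponent with size $\ord(n^{\gamma})\cdot\ord(n^{\gamma-1})=\ord(n^{2\gamma-1})$, it is not negligible against the $n^{-\gamma/2}$ decay of the stationary-phase integral when $\gamma=\tfrac12$. A careful expansion of the Fourier symbol of $B_n$ beyond the frozen order, followed by a second stationary phase, produces precisely the extra angle $2a(n)\delta a(n)\sin(2m\pi/N)$ inside the outer $\cos$ and $\sin$ of \eqref{Rm}--\eqref{tRm}. For $N=2$ this correction vanishes identically because $\sin\pi=0$; this is the ``additional symmetry'' referred to in the introduction that makes (H0) much easier, since the phase function remains ``very simple'' and a single stationary phase suffices (Section~\ref{sec:42}). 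The general (H1)+(H2) case requires the more delicate phase analysis of Section~\ref{sec:73}. The cut-offs $\theta_{n,n}(k)^2$ and $\theta_{2n,n}(k)$ built into $v_n$ and $a_n$ contribute only tails that decay rapidly in $\abs{k-n}/n^{\gamma}$ and are absorbed into the error.
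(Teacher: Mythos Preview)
Your proposal is correct and, for case (b), is the paper's proof in Section~\ref{sec:73} recast in Bessel-function language: your sum $S_m(n)$ equals $g_n^{\omega_m,0}(n)$ (modulo the cut-off $\theta_{n,n}^2$), the Parseval step produces exactly the oscillatory integral $\F{g}_n^{\omega_m,0}(n)$ of Lemma~\ref{lem:61} at $\nu=1$, $t=0$, and the $J_0$ asymptotic is Lemma~\ref{lem:41}. The subprincipal phase you describe is handled there via the decomposition $\psi_{n,2}^{\omega,0}=\psi_{n,\R{I}}^{\omega}+\psi_{n,\R{I\!I}}^{\omega}+r_n^{\omega}$ and Lemma~\ref{lem:82}, which shows that only $\psi_{n,\R{I}}^{\omega}(\pm\eul^{\ii\omega/2})=2a(n)\delta a(n)\sin\omega$ survives at the critical points.

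Two points of comparison. First, for case (a) the paper does \emph{not} go through this general machinery: it exploits the anticommutation $\eul^{\ii\pi\Lambda}B_n\eul^{-\ii\pi\Lambda}=-B_n$ to collapse $\eul^{\ii B_n}\eul^{\ii\pi\Lambda}\eul^{-\ii B_n}$ to $\eul^{\ii\pi\Lambda}\eul^{-2\ii B_n}$ in one stroke (see \eqref{42d}), so that only the single matrix element $\scal{\vece_n,\eul^{-2\ii B_n}\vece_n}$ needs approximation, via the operators $Q_n^t$ of \cite{BZ5}*{Proposition 8.1}. Your route (treating $N=2$ as a special case of (b)) would also work but is less direct. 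Second, your pointer to Section~\ref{sec:4'} is misplaced: that section treats the Dyson expansion of $U_n(t)=\eul^{-\ii t l_n(\Lambda)}\eul^{\ii t L_n}$ for the trace estimate, not the approximation of $\eul^{-\ii B_n}$ itself. The rigorous error control you defer lives in \cite{BZ5}*{Proposition 8.1} for case (a) and in Lemma~\ref{lem:61} (built on \cite{BZ5}*{Section 10}) for case (b).
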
 

\begin{proof}  
See Section \ref{sec:43} for case (a) and Section \ref{sec:73} for case (b). 
\end{proof}
  
\subsubsection{Operators $L_n$}\label{sec:223} 

As in \cite{BZ5}*{Section 5.2} we introduce operators $L_n$ acting on $l^2(\D{Z})$ by
\begin{subequations}\label{Lnln}
\begin{equation}\label{Ln} 
L_n\coloneqq l_n(\Lambda)+\tilde V_n,
\end{equation}
where $\tilde V_n$ is defined by \eqref{vtBn} and $l_n(k)$ is as in \eqref{300}:
\begin{equation}\label{lnk} 
l_n(k)\coloneqq k+a_n(k-1)^2-a_n(k)^2.  
\end{equation} 
\end{subequations}
Since $l_n(\Lambda)$ is a diagonal operator with discrete spectrum and $\tilde V_n$ is bounded, the spectrum of $L_n$ is discrete and can be written
\begin{equation}  \label{Lnev}
\spec(L_n)=\accol{\lambda_k(L_n)}_{k\in\D{Z}}
\end{equation}
where $(\lambda_k(L_n))_{k\in\D{Z}}^{\infty}$ denotes the non-decreasing sequence of eigenvalues of $L_n$ counted with their multiplicities, well-labeled up to translation. Moreover, the subspace $l^2(\D{N}^*)$ is invariant by $B_n$, hence by $\tilde V_n$ and $L_n$, and $L_n\vece_k=k\vece_k$ if $k\leq 0$.

\begin{proposition}\label{prop:23} 
Let $J_n$ and $L_n$ be defined by \eqref{Jn} and \eqref{Lnln}, respectively. We assume we are in one of the following two cases:
\begin{enumerate}[\rm(a)]
\item  \label{H023}
\emph{(H0)} is satisfied.
\item  \label{H1223}
\emph{(H1)} for some $0<\gamma\leq\frac{1}{2}$ and \emph{(H2)} are satisfied.
\end{enumerate}
Then, the eigenvalues of $L_n$ can be enumerated in nondecreasing order, counting multiplicity, so that one has the large $n$ estimate
\begin{equation}  \label{J_n-L_n}
\lambda_n(J_n)=\lambda_n(L_n)+\ord(n^{3\gamma-2}),
\end{equation}
where $\gamma=\frac{1}{2}$ in case \eqref{H023}.
\end{proposition}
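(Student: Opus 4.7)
The plan is to exploit the unitary equivalence $\tilde L_n \coloneqq \eul^{\ii B_n} J_n \eul^{-\ii B_n}$, which shares all eigenvalues with $J_n$, and to compare $\tilde L_n$ with $L_n$ via Weyl's perturbation inequality. Splitting $J_n = \Lambda + v_n(\Lambda) + W_n$ with off-diagonal piece $W_n \coloneqq S a_n(\Lambda) + a_n(\Lambda) S^{-1}$, and observing that $\eul^{\ii B_n} v_n(\Lambda)\eul^{-\ii B_n} = \tilde V_n$ is already a summand of $L_n$, the task reduces to proving the operator norm bound
\[
R_n \coloneqq \eul^{\ii B_n}(\Lambda + W_n)\eul^{-\ii B_n} - l_n(\Lambda), \qquad \|R_n\|_{\C{B}(l^2(\D{Z}))} = \ord(n^{3\gamma-2}).
\]
Given such a bound, Weyl's inequality produces a labeling of $\spec(L_n)$ in nondecreasing order for which $|\lambda_n(\tilde L_n) - \lambda_n(L_n)| \leq \|R_n\|$, yielding \eqref{J_n-L_n}.

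The bound on $R_n$ rests on two commutator identities derived by direct calculation from the shift relation $S f(\Lambda) = f(\Lambda-1) S$:
\[
[\ii B_n, \Lambda] = -W_n, \qquad [\ii B_n, W_n] = 2 D_n, \qquad D_n \coloneqq a_n(\Lambda - 1)^2 - a_n(\Lambda)^2,
\]
the second of which satisfies $D_n = l_n(\Lambda) - \Lambda$. Iterating Duhamel's formula twice on $t \mapsto \eul^{\ii t B_n}(\Lambda + W_n)\eul^{-\ii t B_n}$ and tracking the telescoping cancellations (orders zero and one together cancel $W_n$, while half of order two reduces $2D_n$ to $D_n$), one is left with a double integral whose integrand is $\eul^{\ii s B_n}[\ii B_n, D_n]\eul^{-\ii s B_n}$, plus still-higher-order Duhamel remainders. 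The elementary formula
\[
[\ii B_n, f(\Lambda)] = -S a_n(\Lambda) \delta f(\Lambda) - a_n(\Lambda) \delta f(\Lambda) S^{-1}
\]
combined with $\|a_n\|_\infty = \ord(n^\gamma)$ and the Leibniz bound $\|\delta D_n\|_\infty = \|\delta^2(a_n^2)\|_\infty = \ord(n^{2\gamma-2})$ (which follows from (H1b)–(H1c)) then gives $\|[\ii B_n, D_n]\| = \ord(n^{3\gamma-2})$ and the desired estimate on $R_n$.

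Case~\eqref{H1223} is exactly \cite{BZ5}*{Proposition 5.1}. Case~\eqref{H023} is not literally covered there, since (H0) allows arbitrary $\rho$ and so violates \eqref{rho-N}, but the commutator analysis above involves only bounds on $a_n$ from (H1) (equivalently, from (H0) with $\gamma = \frac{1}{2}$) and never touches $v_n$, so the restriction on $\rho$ plays no role and the argument extends verbatim. The main obstacle will be the bookkeeping in the iterated Duhamel expansion: confirming that every term beyond the explicit cancellation pattern absorbs at least one discrete derivative of $a_n^2$ (hence a factor $n^{\gamma-1}$), so that all remainders respect the $\ord(n^{3\gamma-2})$ bound rather than the naive $\ord(n^\gamma)$ expected from $\|B_n\|$ alone.
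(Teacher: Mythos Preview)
Your approach is correct and matches the paper's, which simply cites \cite{BZ5}*{proof of Proposition~5.1} for the norm bound $\norm{\eul^{\ii B_n}J_n\eul^{-\ii B_n}-L_n}=\ord(n^{3\gamma-2})$ (written there, somewhat loosely, as $\norm{J_n-L_n}$) and then invokes the min-max principle. Your flagged obstacle about higher-order Duhamel remainders is in fact a non-issue: if you use the \emph{exact} integral form of Duhamel rather than a Taylor truncation, every remaining term is an iterated integral of $\eul^{\ii sB_n}[\ii B_n,D_n]\eul^{-\ii sB_n}$ over a bounded domain, hence bounded in norm by $\norm{[\ii B_n,D_n]}=\ord(n^{3\gamma-2})$ since unitary conjugation preserves the operator norm.
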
 

\begin{proof} 
We indeed have $\norm{J_n-L_n}=\ord(n^{3\gamma-2})$, see \cite{BZ5}*{proof of Proposition 5.1}. Hence, we can enumerate the eigenvalues of $L_n$ in \eqref{Lnev} so that
\[
\sup_{k\in\D{Z}}\abs{\lambda_k(J_n)-\lambda_k(L_n)}=\ord(n^{3\gamma-2}).\qedhere
\]
\end{proof}

\begin{summary*} 
In both cases \eqref{H023} and \eqref{H1223}, Propositions \ref{prop:21} and \ref{prop:23} imply the large $n$ estimate $\lambda_n(J)=\lambda_n(L_n)+\ord(n^{3\gamma-2})$. Since $\gamma\leq\frac{1}{2}$ we also get
\begin{equation}\label{summary}
\lambda_n(J)=\lambda_n(L_n)+\ord(n^{-\gamma}).
\end{equation}
\end{summary*}

\subsection{Trace estimate and its consequences}\label{sec:23} 
\subsubsection{The trace estimate}\label{sec:231}
 
Further on we denote
\begin{equation}\label{4'1} 
\tilde l_n(k)\coloneqq l_n(k)+g_n(k)
\end{equation}
where $l_n$ and $g_n$ are given by \eqref{lnk} and \eqref{gnk}, respectively. Then, for $\chi\in\C{S}(\D{R})$ we consider
\begin{subequations}\label{tG0n}
\begin{equation}\label{G0n}
\tilde{\C{G}}_n^0\coloneqq\sum_{j\in\D{Z}}\Bigl(\chi\bigl(\lambda_{n+j}(L_n)-l_n(n)\bigr)-\chi\bigl(\tilde l_n(n+j)-l_n(n)\bigr)\Bigr)    
\end{equation}          
with $L_n$ as in \eqref{Ln}. Writing $k=j+n$ in \eqref{G0n} we get the expression
\begin{equation}\label{4'0}
\tilde{\C{G}}_n^0=\sum_{k\in\D{Z}}\Bigl(\chi(\lambda_k(L_n)-l_n(n))-\chi(\tilde{l}_n(k)-l_n(n))\Bigr).    
\end{equation}  
\end{subequations}
Introducing the diagonal operators 
\begin{subequations}\label{4'23}
\begin{align}\label{4'2} 
L_{0,n}&\coloneqq l_n(\Lambda),\\ 
\label{4'3} 
\tilde L_{0,n}&\coloneqq\tilde l_n(\Lambda)=L_{0,n}+g_n(\Lambda) 
\end{align}
\end{subequations} 
we see that the r.h.s.\ of \eqref{4'0} is the trace of an operator:    
\begin{equation}\label{G0n'}
\tilde{\C{G}}_n^0=\tr\Bigl(\chi(L_n-l_n(n))-\chi(\tilde L_{0,n}-l_n(n))\Bigr).  
\end{equation} 
Notice that $\tilde{\C{G}}_n^0$ depends on $\chi\in\C{S}(\D{R})$. It also depends on $\accol{a(k)}_{k=1}^{\infty}$, $\accol{v(k)}_{k=1}^{\infty}$, and $\theta_0$.

\begin{warning*}
This trace $\tilde{\C{G}}_n^0$ differs from the trace $\C{G}_n^0$ considered in \cite{BZ5}*{formulas (5.10)}, that uses $L_{0,n}$ instead of $\tilde L_{0,n}$, and thus does not involve $g_n$. 
\end{warning*}

\begin{proposition}[trace estimate]\label{prop:trace}   
Let $\tilde{\C{G}}_n^0$ be the trace defined by \eqref{G0n} under the additional assumption that $\chi\in\C{S}(\D{R})$ has Fourier transform with compact support. Then, under assumption \emph{(H0)} or assumptions \emph{(H1)} and \emph{(H2)} and for any $\varepsilon>0$ one has the large $n$ estimate  
\begin{equation}\label{tr}
\tilde{\C{G}}_n^0=\ord(n^{-\gamma+\varepsilon}). 
\end{equation} 
\end{proposition}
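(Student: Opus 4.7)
The plan is to reduce Proposition~\ref{prop:trace} to per-entry estimates on a Dyson expansion via Fourier inversion and trace analysis of the unitary evolution $U_n(t)\coloneqq\eul^{\ii tL_n}$; this is the content of Section~\ref{sec:4'}, which produces Proposition~\ref{prop:4'}, the true analytic core of the paper.

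First, with the Fourier convention of Section~\ref{sec:15} one has $\chi(\lambda)=\int\hat\chi(t)\eul^{\ii t\lambda}\,\dd t$. Since $\chi\in\C{S}(\D{R})$ and both $L_n,\tilde L_{0,n}$ have compact resolvent, $\chi(L_n-l_n(n))$ and $\chi(\tilde L_{0,n}-l_n(n))$ are trace class. Writing $W_n\coloneqq L_n-\tilde L_{0,n}=\tilde V_n-g_n(\Lambda)$ for the off-diagonal part of $\tilde V_n$, which is of finite rank by the cut-offs in~\eqref{avn}, and setting $\tilde U_{0,n}(t)\coloneqq\eul^{\ii t\tilde L_{0,n}}$, one obtains
\begin{equation*}
\tilde{\C{G}}_n^0=\int\hat\chi(t)\,\eul^{-\ii tl_n(n)}\,\tr\bigl(U_n(t)-\tilde U_{0,n}(t)\bigr)\,\dd t,
\end{equation*}
the integration being over the compact support of $\hat\chi$.

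Second, the Dyson expansion around $\tilde U_{0,n}$ gives $U_n(t)-\tilde U_{0,n}(t)=\sum_{m=1}^{\infty}U_n^{(m)}(t)$ with
\begin{equation*}
U_n^{(m)}(t)=\ii^{m}\!\!\int_{0\leq t_1\leq\cdots\leq t_m\leq t}\!\!\tilde U_{0,n}(t-t_m)\,W_n\,\tilde U_{0,n}(t_m-t_{m-1})\,W_n\,\cdots\,W_n\,\tilde U_{0,n}(t_1)\,\dd t_1\cdots\dd t_m.
\end{equation*}
The first-order trace $\tr U_n^{(1)}(t)=\ii t\sum_{k}W_n(k,k)\eul^{\ii t\tilde l_n(k)}$ vanishes because $W_n$ has zero diagonal, which is precisely the point of the subtraction of $g_n(\Lambda)$ in $\tilde L_{0,n}$. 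For $m\geq 2$ each diagonal entry $U_n^{(m)}(t)(k,k)$ is an iterated sum of products of off-diagonal entries of $\tilde V_n$ modulated by oscillating phases $\eul^{\ii\sum_{j}(t_{j+1}-t_j)\tilde l_n(k_j)}$; Proposition~\ref{prop:4'} bounds these diagonal entries by $\ord(n^{-\gamma+\varepsilon})$ uniformly in the relevant parameters. Summing over the effective range of $k$ (which is finite by the cut-offs in~\eqref{avn}) and integrating $\hat\chi$ over its compact $t$-support then yields the desired $\tilde{\C{G}}_n^0=\ord(n^{-\gamma+\varepsilon})$.

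The main obstacle lies in Proposition~\ref{prop:4'} itself: the iterated oscillatory sums defining $U_n^{(m)}(t)(k,k)$ carry a large parameter of order $n^{\gamma}$, but their phase functions may degenerate for certain values of the summation indices, so stationary-phase arguments must be supplemented by additional integrations that neutralise these bad configurations. Following the outline of Section~\ref{sec:14}, this analysis will rely on the oscillatory-integral expansion of Lemma~\ref{lem:41}, with main term of order $n^{-\gamma/2}$ and errors of order $n^{-\gamma}$, together with the parameter-uniform estimate proved in Section~\ref{sec:5}. The genuine improvement over \cite{BZ5}*{Proposition 5.2} is precisely that the subtraction of $g_n(\Lambda)$ kills the leading diagonal contribution of $\tilde V_n$, leaving only the truly oscillatory $W_n$ in the Dyson expansion and permitting the sharper $\ord(n^{-\gamma+\varepsilon})$ bound.
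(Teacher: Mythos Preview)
Your reduction contains a genuine gap in the passage from per-entry bounds to the trace bound. You write that ``summing over the effective range of $k$ (which is finite by the cut-offs in~\eqref{avn})'' yields $\ord(n^{-\gamma+\varepsilon})$, but the cut-offs $\theta_{n,n}$ and $\theta_{2n,n}$ have supports of width proportional to $n$, so the effective range contains of order $n$ indices. Even after the trace-class localisation used in the paper (Step~1 of Lemma~\ref{lem:4'}), which reduces the relevant window to $\abs{j-n}\leq n^{\gamma}$, a naive sum of $\ord(n^{-\gamma+\varepsilon})$ over $\ord(n^{\gamma})$ terms gives only $\ord(n^{\varepsilon})$, not $\ord(n^{-\gamma+\varepsilon})$. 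The paper closes this gap by an integration by parts in $t$ (Step~3 of Lemma~\ref{lem:4'}), exploiting the phase $\eul^{\ii t(l_n(j)-l_n(n))}$ to produce a factor $\bigl(l_n(j)-l_n(n)-\tfrac12\bigr)^{-1}\sim(1+\abs{j-n})^{-1}$; the resulting sum $\sum_{\abs{j-n}\leq n^{\gamma}}(1+\abs{j-n})^{-1}$ contributes only $\ord(\ln n)=\ord(n^{\varepsilon})$. This step is essential and is absent from your argument.

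There is a secondary mismatch: you expand around $\tilde L_{0,n}=L_{0,n}+g_n(\Lambda)$ with perturbation $W_n=\tilde V_n-g_n(\Lambda)$, whereas Proposition~\ref{prop:4'} as stated concerns the interaction picture around $L_{0,n}$ with the full $\tilde V_n$, via $H_n(t)=\eul^{-\ii tL_{0,n}}\tilde V_n\eul^{\ii tL_{0,n}}$. The paper does not build $g_n$ into the free evolution; instead it observes that $H_n(t)(j,j)=g_n(j)$ exactly (equation~\eqref{un1}), so the first Neumann term contributes $\ii g_n(j)$ to $\partial_t u_{n,j}(t)$, and it is the difference $\partial_t u_{n,j}(t)-\ii g_n(j)$ (not a Dyson series in $W_n$) that is controlled by Proposition~\ref{prop:4'} and then compared to $\partial_t\eul^{\ii tg_n(j)}$ in Lemma~\ref{lem:4'}. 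Your $W_n$-expansion is a reasonable alternative in spirit, but Proposition~\ref{prop:4'} does not bound its terms as written, and the phases $\eul^{\ii t\tilde l_n(k)}$ carrying $g_n(k)$ would require separate treatment.
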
 

\begin{proof} 
See Section \ref{sec:4'4} where the proof is reduced to that of Proposition \ref{prop:4'}. See also Section \ref{sec:63} where the proof of Proposition \ref{prop:4'} is given.
\end{proof}  

\subsubsection{Comparison of the asymptotic behavior of two sequences}\label{sec:232}
 
As before $(l_n(k))_{k\in\D{Z}}$ is defined by \eqref{lnk} where $(a_n(k))_{k\in\D{Z}}$ is given by \eqref{an} under assumption~(H1).  

\begin{proposition}\label{prop:25}   
Let $(l_n(k))_{k\in\D{Z}}$ be defined by \eqref{lnk} under assumption\emph{(H1)} for some $0<\gamma\leq\frac{1}{2}$. For each $j\in\D{Z}$ let $(r_n^0(j))_{n=1}^{\infty}$ and $(r_n^1(j))_{n=1}^{\infty}$ be real valued sequences such that
\[
\sup_{j\in\D{Z},\,n\geq 1}\left(\abs{r_n^0(j)}+\abs{r_n^1(j)}\right)<\infty.
\]
Assume they also satisfy 
\begin{subequations}\label{232ab}
\begin{align}\label{232a}
&\sup_{\abs{j}\leq n^{\gamma_0}}\abs{r_n^i(j+N)-r_n^i(j)}\leq Cn^{\gamma-1}\qquad(i=0,1),\\
\label{232b}
&\sup_{n>n_0}\sup_{\abs{j}\leq n^{\gamma_0}}\left(\abs{r_n^0(j)}+\abs{r_n^1(j)}\right)\leq\rho' 
\end{align}
\end{subequations}
for some $\gamma_0>0$, $n_0\in\D{N}$, and $\rho'>0$ such that
\begin{equation}\label{232}
\rho'<
\begin{cases} 
\frac{1}{2}&\text{when }N=2,\\
\frac{1}{\pi\sqrt{N}}&\text{when }N\geq 3. 
\end{cases}   
\end{equation}
Assume moreover that for some $\varepsilon>0$
\begin{equation}\label{232c}
\sum_{j\in\D{Z}}\Bigl(\chi(l_n(n+j)+r_n^1(j)-l_n(n))-\chi(l_n(n+j)+r_n^0(j)-l_n(n))\Bigr)=\ord(n^{-\gamma+\varepsilon})
\end{equation}    
holds for any $\chi\in\C{S}(\D{R})$ whose Fourier transform has compact support. Then 
\begin{equation}\label{232d}
r_n^1(0)=r_n^0(0)+\ord(n^{-\gamma+\varepsilon}). 
\end{equation} 
\end{proposition}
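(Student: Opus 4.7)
The plan is to reduce Proposition~\ref{prop:25} to \cite{BZ5}*{Proposition 11.1}, which is the special case where the background is $k\mapsto k$ instead of $k\mapsto l_n(k)$. The only new substantive step is to show that replacing $l_n(n+j)-l_n(n)$ by $j$ inside the hypothesis \eqref{232c} costs at most a remainder of size $\ord(n^{-\gamma+\varepsilon})$, which can be absorbed into the target.

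First I expand
\[
l_n(n+j)-l_n(n)=j+f_n(n+j)-f_n(n),\qquad f_n(k)\coloneqq a_n(k-1)^2-a_n(k)^2.
\]
From (H1b)--(H1c) and the construction \eqref{an} of $a_n$ one obtains $\abs{f_n(k+1)-f_n(k)}=\ord(n^{2\gamma-2})$ uniformly in $k$, so with $E_{n,j}\coloneqq l_n(n+j)-l_n(n)-j$ one has $\abs{E_{n,j}}=\ord(\abs{j}\,n^{2\gamma-2})$ for $\abs{j}\le 2n$. A Taylor expansion of $\chi$ about $j+r_n^i(j)$, using the uniform bound on $r_n^i$ and $\chi,\chi',\chi''\in L^{\infty}$, shows that the summand in \eqref{232c} equals $\chi(j+r_n^1(j))-\chi(j+r_n^0(j))+\ord(\abs{E_{n,j}})$ uniformly in $j$. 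Summing over $\abs{j}\le n^{\gamma_0}$ produces a total error of size $\ord(n^{2\gamma_0+2\gamma-2})$, which lies in $\ord(n^{-\gamma+\varepsilon})$ provided $\gamma_0<1-\tfrac{3\gamma}{2}+\tfrac{\varepsilon}{2}$; since the assumptions \eqref{232ab} persist if $\gamma_0$ is replaced by any smaller positive number, such a choice is always available. For $\abs{j}>n^{\gamma_0}$ the argument of $\chi$ is bounded below by $\abs{j}/2$ for $n$ large (using that $l_n(k)=k$ once $k$ leaves the support of $a_n$, plus the $\ord(1)$ bound on $r_n^i$), so Schwartz decay of $\chi$ makes the tail contribution smaller than any power of $n$.

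After this reduction, \eqref{232c} becomes equivalent to
\[
\sum_{j\in\D{Z}}\bigl(\chi(j+r_n^1(j))-\chi(j+r_n^0(j))\bigr)=\ord(n^{-\gamma+\varepsilon})
\]
for every $\chi\in\C{S}(\D{R})$ with $\hat\chi$ compactly supported, which is exactly the hypothesis of \cite{BZ5}*{Proposition 11.1}. Its conclusion gives \eqref{232d}. Internally, that proposition uses the mean value identity to rewrite the summand as $\chi'(\xi_j)(r_n^1(j)-r_n^0(j))$, exploits the near $N$-periodicity \eqref{232a} to collapse the sum onto residues modulo $N$, and uses a Dirichlet-type construction of test functions $\chi$ to isolate the residue class $j\equiv 0\pmod N$. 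The thresholds $\tfrac{1}{2}$ (for $N=2$) and $\tfrac{1}{\pi\sqrt N}$ (for $N\ge 3$) in \eqref{232} are precisely the critical values under which this isolation remains uniformly stable.

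The main obstacle is the Paley--Wiener tension inside the cited Tauberian extraction: one must isolate the single value $r_n^i(0)$ from a sum over levels spaced by only $1$ while being restricted to test functions $\chi$ that, having compactly supported Fourier transforms, are entire of exponential type. Since that step is already available in \cite{BZ5}, the new work for Proposition~\ref{prop:25} consists only in the reduction above, which is routine given the regularity of $a_n$ inherited from (H1).
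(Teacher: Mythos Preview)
Your reduction to background $k\mapsto k$ is a correct and self-contained argument, but it is not the route the paper takes, and it rests on a mischaracterization of \cite{BZ5}*{Proposition 11.1}. That proposition is not the ``special case with background $k$''; it is already formulated with the background $l_n$ (and with error $\ord(n^{-\gamma/2}\ln n)$ rather than $\ord(n^{-\gamma+\varepsilon})$). The paper's proof therefore bypasses your reduction entirely: it simply reruns the \cite{BZ5} argument verbatim, redefining $G_n^{\chi}$ as the left-hand side of \eqref{232c}, deleting two setup lines, and carrying the sharper error term $\ord(n^{-\gamma+\varepsilon})$ through in place of $\ord(n^{-\gamma/2}\ln n)$. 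No manipulation of $l_n(n+j)-l_n(n)$ versus $j$ is needed, because the \cite{BZ5} proof is already written for the $l_n$ spacing.

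Two minor points on your version. First, since \cite{BZ5}*{Proposition 11.1} is stated with the coarser error $\ord(n^{-\gamma/2}\ln n)$, you cannot simply quote its conclusion to obtain \eqref{232d}; you must, as the paper does, rerun its proof with the improved input. You allude to the mechanism but should make this explicit. Second, your detour does buy something: it shows that the $l_n$-dependence in \eqref{232c} is inessential up to the target accuracy, which is conceptually clarifying even if the paper's direct adaptation is shorter.
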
 

\begin{proof}
It suffices to adapt the proof of \cite{BZ5}*{Proposition 11.1} as follows. Remove the first two lines, define $G_n^{\chi}$ as the l.h.s.\ of \eqref{232c} and replace the error terms $\ord(n^{-\gamma/2}\ln n)$ by $\ord(n^{-\gamma+\varepsilon})$. 
\end{proof}
 
\subsubsection{Application of Proposition \ref{prop:25}}\label{sec:233}

We will apply Proposition~\ref{prop:25} to the case where the two sequences $(r_n^0(j))_{n=1}^\infty$ and $(r_n^1(j))_{n=1}^\infty$ are given by
\begin{subequations}\label{2330}
\begin{align}\label{233}
r_n^0(j)&\coloneqq g_n(n+j),\\
\label{233'}
r_n^1(j)&\coloneqq\lambda_{n+j}(L_n)-l_n(n+j). 
\end{align} 
\end{subequations}

\begin{proposition}\label{prop:26}
Let $L_n$ and $\tilde l_n(k)$ be given by \eqref{Ln} and \eqref{4'1}, respectively. We assume we are in one of the following two cases:
\begin{enumerate}[\rm(a)]
\item  \label{H026}
\emph{(H0)} is satisfied, $\gamma=\frac{1}{2}$.
\item  \label{H1226}
\emph{(H1)} for some $0<\gamma\leq\frac{1}{2}$ and \emph{(H2)} are satisfied.
\end{enumerate}
Then, for any $\varepsilon>0$ one has the large $n$ estimate
\begin{equation}\label{233''}
\lambda_n(L_n)=\tilde l_n(n)+\ord(n^{-\gamma+\varepsilon}).
\end{equation} 
\end{proposition}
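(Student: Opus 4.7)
The plan is to apply Proposition~\ref{prop:25} to the pair of sequences $r_n^0, r_n^1$ defined by \eqref{2330}. For this choice the sum on the left-hand side of \eqref{232c} coincides identically with the trace $\tilde{\C{G}}_n^0$ of \eqref{G0n}, so hypothesis \eqref{232c} is delivered directly by the trace estimate \eqref{tr} of Proposition~\ref{prop:trace}, and the conclusion \eqref{232d} becomes $\lambda_n(L_n)-l_n(n)=g_n(n)+\ord(n^{-\gamma+\varepsilon})$, which in view of \eqref{4'1} is exactly \eqref{233''}. It remains only to verify the structural hypotheses \eqref{232ab}.

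\emph{Uniform boundedness and near-periodicity.} Since $\tilde V_n=\eul^{\ii B_n}v_n(\Lambda)\eul^{-\ii B_n}$ is unitarily equivalent to $v_n(\Lambda)$, we have $\norm{\tilde V_n}\leq\max_k\abs{v_n(k)}\leq\rho_N$, which bounds $r_n^0(j)=\tilde V_n(n+j,n+j)$ pointwise; applying Weyl's inequality to $L_n=l_n(\Lambda)+\tilde V_n$ through the enumeration of eigenvalues in \eqref{Lnev} gives the analogous pointwise bound on $r_n^1$. For the near-periodicity \eqref{232a}, a direct expansion of the matrix entries of $\tilde V_n$ splits $g_n(k+N)-g_n(k)$ into a part which vanishes by the $N$-periodicity of $v_n$ and a remainder controlled by the variations $a_n(k+j)-a_n(k)$, which by \eqref{an} and \emph{(H1b)} are $\ord(n^{\gamma-1})$ for $\abs{k-n}\leq n^{\gamma_0}$. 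The same slow-variation argument, applied to the conjugate $S^{-N}L_nS^N$ on the relevant spectral window and combined with the min--max principle, gives $\lambda_{n+j+N}(L_n)-\lambda_{n+j}(L_n)=N+\ord(n^{\gamma-1})$ and hence \eqref{232a} for $r_n^1$.

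\emph{Smallness \eqref{232b}.} In case (b), the restriction \eqref{rho-N} already puts the pointwise bounds above strictly below the threshold \eqref{232}; in case (a), however, \emph{(H0)} imposes no restriction on $\rho=\rho_N$, so the inequality $\norm{\tilde V_n}\leq\abs{\rho}$ is useless, and this is the main obstacle. We overcome it by exploiting the genuine decay of $g_n$ and of the spectral shift near the index $n$: the stationary-phase argument of Section~\ref{sec:4} that establishes Lemma~\ref{lem:22} extends, with only cosmetic changes, to small shifts of the index and yields $g_n(n+j)=\ord(n^{-\gamma/2})$ uniformly for $\abs{j}\leq n^{\gamma_0}$, while applying Propositions~\ref{prop:21} and~\ref{prop:23} at the shifted index $n+j$ (permissible because $L_n$ and $L_{n+j}$ agree on the spectral window $\abs{k-n}\leq n/6$ where the cut-offs in \eqref{avn} are identically $1$) together with Yanovich's two-term estimate \eqref{Y0} gives $\lambda_{n+j}(L_n)=l_n(n+j)+\osmall(1)$ uniformly in $j$. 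Both bounds then fall below any prescribed $\rho'<1/2$ for $n$ large, completing the verification and, via Proposition~\ref{prop:25}, the proof.
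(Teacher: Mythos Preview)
Your overall architecture is exactly the paper's: apply Proposition~\ref{prop:25} to the sequences \eqref{2330}, feed in the trace estimate \eqref{tr} for hypothesis \eqref{232c}, and read off \eqref{233''} from \eqref{232d}. The paper packages the verification of \eqref{232ab} into a separate Lemma~\ref{lem:26}; you attempt to inline it, and that is where two genuine gaps appear.

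First, your case~(b) argument for \eqref{232b} does not work. The crude bounds $\lvert r_n^0(j)\rvert\le\rho_N$ and $\lvert r_n^1(j)\rvert\le\rho_N$ only give $\lvert r_n^0(j)\rvert+\lvert r_n^1(j)\rvert\le 2\rho_N$, and for $N=2$ with, say, $\rho_N=0.4$ this exceeds every admissible $\rho'<\tfrac12$. The restriction \eqref{rho-N} controls $\rho_N$, not $2\rho_N$. The paper's fix (Lemma~\ref{lem:26}, Steps~1 and~3) is to use the decay $\sup_{\lvert j\rvert\le n^\gamma}\lvert r_n^0(j)\rvert=\ord(n^{-\gamma/2})$, established via \cite{BZ5}*{Lemma~6.3}, in \emph{both} cases (a) and (b); then the sum is $\le\rho_N+C_1n^{3\gamma-2}+C_2n^{-\gamma/2}$, which for large $n$ falls below any $\rho'>\rho_N$ satisfying \eqref{232}. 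You already invoke this decay of $g_n$ in case~(a); it is equally needed in case~(b).

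Second, in case~(a) your claim that ``$L_n$ and $L_{n+j}$ agree on the spectral window $\lvert k-n\rvert\le n/6$'' is false: on that window $a_n(k)=a(n)+(k-n)\delta a(n)$ while $a_{n+j}(k)=a(n+j)+(k-n-j)\delta a(n+j)$, and these affine functions of $k$ have different slopes and intercepts. What one actually needs (and what the paper establishes in Lemma~\ref{lem:26}, Step~2, via the argument of \cite{BZ5}*{Section~12.5}) is the uniform estimate $\sup_{\lvert j\rvert\le n^{1/4}}\lvert\lambda_{n+j}(J)-\lambda_{n+j}(J_n)\rvert=\ord(n^{-1/2})$, obtained by rerunning the comparison with $\tilde J_{n-j}^+$ rather than by any literal coincidence of operators. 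Combined with $\norm{J_n-L_n}=\ord(n^{3\gamma-2})$, Yanovich's estimate \eqref{Y0}, and the elementary bound $\sup_{\lvert j\rvert\le n^{1/4}}\lvert l_n(n+j)-(n+j-a_1^2)\rvert=\ord(n^{-3/4})$, this yields $r_n^1(j)=\ord(n^{-1/16})$.

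Minor points: $v_n$ is not $N$-periodic (the cutoff $\theta_{n,n}$ breaks periodicity); the periodic object is $v$, and the paper's argument for \eqref{232a} with $i=0$ goes through $\norm{S^{-N}\tilde V_nS^N-\tilde V_n}=\ord(n^{\gamma-1})$ by separating the periodic factor $v(\Lambda)$ from the slowly varying pieces $\theta_{n,n}(\Lambda)$ and $\eul^{\pm\ii B_n}$.
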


\begin{proof}
We first show that Proposition~\ref{prop:25} applies to the case where $r_n^0$ and $r_n^1$ are given by \eqref{2330}. In this case the l.h.s.\ of \eqref{232c} is $\tilde{\C{G}}_n^0$ as defined by \eqref{G0n} and thus the trace estimate \eqref{tr} in Proposition \ref{prop:trace} says that assumption~\eqref{232c} is satisfied for any $\varepsilon>0$ and any $\chi\in\C{S}(\D{R})$ whose Fourier transform has compact support. That the conditions \eqref{232ab} are also satisfied is proven in Lemma~\ref{lem:26} below. Thus, Proposition~\ref{prop:25} applies and the assertion \eqref{232d} is exactly \eqref{233''} since in that case $r_n^1(0)-r_n^0(0)=\lambda_n(L_n)-\tilde l_n(n)$.
\end{proof}

\begin{lemma}  \label{lem:26} 
The sequences $r_n^0$ and $r_n^1$ given by \eqref{2330} satisfy the conditions \eqref{232ab} in each of the following two cases:
\begin{enumerate}[\rm(a)]
\item \label{H027}
\emph{(H0)} is satisfied, $N=2$, $\gamma=\frac{1}{2}$, $\gamma_0=\frac{1}{4}$, any $0<\rho'<\frac{1}{2}$, and $n_0\in\D{N}$ large enough.
\item \label{H1227}
\emph{(H1)} and \emph{(H2)} are satisfied, $0<\gamma\leq\frac{1}{2}$, $\gamma_0=\gamma$, any $\rho'>\rho_N$ satisfying \eqref{232}, and $n_0\in\D{N}$ large enough.
\end{enumerate}
\end{lemma}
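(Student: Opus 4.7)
The plan is to verify conditions \eqref{232a} and \eqref{232b} separately for the two sequences, using $r_n^0(j)=g_n(n+j)=\tilde V_n(n+j,n+j)$ and $r_n^1(j)=\lambda_{n+j}(L_n)-l_n(n+j)$.

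For the uniform bound \eqref{232b}: The proofs of Lemma~\ref{lem:22} apply verbatim with $n$ replaced by $n+j$ throughout (all estimates depend only on $n+j\sim n$ for $|j|\leq n^{\gamma_0}$), yielding the uniform extension $g_n(n+j)=\F{r}(n+j)+\ord(n^{-\gamma+\varepsilon})$. The explicit formulas \eqref{R0}, \eqref{Rm}, and \eqref{tRm} show that $|\F{r}(n+j)|=\ord(n^{-\gamma/2})$, hence $r_n^0(j)=\osmall(1)$ uniformly in $|j|\leq n^{\gamma_0}$. For $r_n^1$ in case~\eqref{H1227}, Weyl's inequality combined with $\|\tilde V_n\|=\|v_n(\Lambda)\|\leq\rho_N$ gives $|r_n^1(j)|\leq\rho_N$ (after noting that $l_n$ is strictly monotonic for large $n$ so that $l_n(n+j)$ is the $(n+j)$-th eigenvalue of $l_n(\Lambda)$ in the consistent labeling), hence $|r_n^0(j)|+|r_n^1(j)|\leq\rho_N+\osmall(1)\leq\rho'$ for $n\geq n_0$. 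In case~\eqref{H027} the Weyl bound $|\rho|$ may exceed $\rho'<\tfrac12$, so I would instead extend Yanovich's estimate \eqref{Y0} to $\lambda_{n+j}(J)=(n+j)-a_1^2+\ord(n^{-1/16})$ (valid because \eqref{Y0} applies to every index $m$ and $m=n+j\sim n$ for $|j|\leq n^{1/4}$), and extend Propositions~\ref{prop:21} and \ref{prop:23} to indices $n+j$ by repeating Steps~3--5 in the proof of Proposition~\ref{prop:21} with $\Delta_{n+j}^{\kappa}$ in place of $\Delta_n^{\kappa}$ (these stay admissible since $|j|\leq n^{1/4}\ll n^{1/2}=n^{\gamma}$). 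Together with $l_n(n+j)=(n+j)-a_1^2+\ord(n^{-1})$ this yields $|r_n^1(j)|=\ord(n^{-1/16})=\osmall(1)$.

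The slow-variation condition \eqref{232a} rests on the approximate $N$-translation invariance of the building blocks. Setting $B_n^{(N)}\coloneqq S^N B_n S^{-N}=\ii\bigl(a_n(\Lambda-N)S^{-1}-S\,a_n(\Lambda-N)\bigr)$, the near-linearity of $a_n$ on its support (which gives $\sup_k|\delta a_n(k)|=\ord(n^{\gamma-1})$ by a direct computation from \eqref{an} using the smoothness of $\theta_{2n,n}$) yields
\[
\|B_n^{(N)}-B_n\|\leq 2\sup_k|a_n(k-N)-a_n(k)|\leq 2N\sup_k|\delta a_n(k)|=\ord(n^{\gamma-1}),
\]
and the Duhamel identity gives $\|\eul^{\ii B_n^{(N)}}-\eul^{\ii B_n}\|=\ord(n^{\gamma-1})$. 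Moreover, the $N$-periodicity of $v$ together with the smoothness of the cut-off $\theta_{n,n}$ yields $\sup_k|v_n(k-N)-v_n(k)|=\ord(n^{-1})$. Combining these estimates produces
\[
\bigl\lVert S^N\tilde V_n S^{-N}-\tilde V_n\bigr\rVert=\ord(n^{\gamma-1}),
\]
and since $g_n(k+N)-g_n(k)=\scal{\vece_k,(S^{-N}\tilde V_n S^N-\tilde V_n)\vece_k}$, this delivers \eqref{232a} for $r_n^0$.

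For $r_n^1$, a direct linear-regime computation on $a_n$ shows that $l_n(n+j+N)-l_n(n+j)=N+\ord(n^{2\gamma-2})=N+\ord(n^{\gamma-1})$ uniformly for $|j|\leq n^{\gamma_0}$, so it remains to establish the analogous eigenvalue gap $\lambda_{n+j+N}(L_n)-\lambda_{n+j}(L_n)=N+\ord(n^{\gamma-1})$. This is the main obstacle: the naive Weyl comparison between $L_n$ and $S^{-N}L_nS^N=L_n+NI+W+(S^{-N}\tilde V_n S^N-\tilde V_n)$ controls the diagonal perturbation $W=l_n(\Lambda+N)-l_n(\Lambda)-NI$ only by $\|W\|=\ord(1)$ globally, although $|W_k|=\ord(n^{\gamma-1})$ for $|k-n|\lesssim n^{\gamma_0}$. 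I would circumvent this by a localization: on the spectral window of $L_n$ of size $\ord(1)$ around $l_n(n+j)$ the corresponding eigenvectors are, in the Bogoliubov-rotated frame $\psi_k=\eul^{\ii B_n}\vece_k$, concentrated on indices $k$ with $|k-(n+j)|=\ord(1)$, where $W$ has entries of order $\ord(n^{\gamma-1})$. A localized min-max argument combining this with the bound $\|S^{-N}\tilde V_n S^N-\tilde V_n\|=\ord(n^{\gamma-1})$ then yields the required gap estimate, completing \eqref{232a} for $r_n^1$.
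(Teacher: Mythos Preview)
Your treatment of \eqref{232b} and of \eqref{232a} for $i=0$ is correct and essentially parallels the paper: the conjugation estimate $\|S^{-N}\tilde V_nS^N-\tilde V_n\|=\ord(n^{\gamma-1})$, the Weyl bound $|r_n^1(j)|\le\rho_N$ in case~(b), and the extension of Yanovich's estimate together with Propositions~\ref{prop:21}--\ref{prop:23} to indices $n+j$ in case~(a) all appear in the paper's proof, which cites \cite{BZ5} for several details. One minor point: the paper obtains $|r_n^0(j)|=\ord(n^{-\gamma/2})$ directly from \cite{BZ5}*{Lemma~6.3(i)} rather than by re-running Lemma~\ref{lem:22} at a shifted index; your formulation ``replace $n$ by $n+j$'' is imprecise (the subscript $n$ in $g_n$ fixes the cut-off and cannot be moved), but the underlying uniformity for $|k-n|\le n^\gamma$ is indeed available, e.g.\ through Lemma~\ref{lem:42}.

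The genuine gap is in \eqref{232a} for $i=1$. You correctly identify that the naive Weyl comparison between $L_n$ and $S^{-N}L_nS^N$ fails because $W=l_n(\Lambda+N)-l_n(\Lambda)-NI$ is only $\ord(1)$ globally, and you propose to localize via eigenvector concentration in the Bogoliubov frame. But the concentration claim---that eigenvectors of $L_n$ with eigenvalue near $l_n(n+j)$ are supported on indices $k$ with $|k-(n+j)|=\ord(1)$---is nowhere established, and proving it would require decay estimates (Combes--Thomas or Agmon type) that are not lighter than the target statement. Even granting concentration, the phrase ``a localized min-max argument'' hides real work: from $S^{-N}L_nS^N\approx L_n+NI$ one only gets that the \emph{set} $\spec(L_n)$ is approximately invariant under translation by $N$; identifying the induced index shift as exactly $k\mapsto k+N$ (rather than some other bijection) is precisely the missing step. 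The paper avoids this by citing \cite{BZ5}*{Section~11.3}, where the argument rests on \emph{eigenvalue} isolation rather than eigenvector localization: once one knows (as in \eqref{kappa} for case~(a)) that each $\lambda_{n+j}(J_n)$ sits alone in an interval of fixed width around $n+j-a_1^2$, the approximate $N$-periodicity of the spectrum forces the correct index correspondence directly, without any eigenvector analysis.
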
 

\begin{proof} 
We first prove \eqref{232b}.
\begin{stepone*}[estimate of $r_n^0(j)$ in cases \eqref{H027} and \eqref{H1227}]
First note that $r_n^0(j)=g_n(n+j)$. Then, taking $t_1=0$ and $j+n$ instead of $j$ in \cite{BZ5}*{Lemma 6.3 (i)} we get
\begin{equation}\label{233b}
\sup_{\abs{j}\leq n^{\gamma}}\abs{r_n^0(j)}=\sup_{\abs{j}\leq n^{\gamma}}\abs{g_n(n+j)}=\ord(n^{-\gamma/2}).   
\end{equation}
We indeed have the relation $g_{1,n,j}(0)=\ii g_n(j)$ with $g_{1,n,j}$ defined in \cite{BZ5}*{(6.16)}. 
\end{stepone*}
\begin{steptwo*}[estimate of $r_n^1(j)$ in case \eqref{H027}]
It suffices to show 
\begin{equation}  \label{rn1j}
\sup_{\abs{j}\leq n^{1/4}}\abs{r_n^1(j)}=\ord(n^{-1/16}).
\end{equation}
To obtain \eqref{rn1j} we introduce the quantity 
\begin{equation}   \label{trn1}
\tilde r_n^1(j)\coloneqq\lambda_{n+j}(J)-(n+j-a_1^2) 
\end{equation}
and observe that the Yanovich estimate \eqref{Y0} ensures 
\begin{equation}  \label{trn1j}
\sup_{\abs{j}\leq n^{1/4}}\abs{\tilde r_n^1(j)}=\ord(n^{-1/16}). \end{equation}
Due to \eqref{trn1j} to complete the proof of \eqref{rn1j} it suffices to show the estimate 
\begin{equation}  \label{rn1-trn1}
\sup_{\abs{j}\leq n^{1/4}}\abs{r_n^1(j)-\tilde r_n^1(j)}=\ord(n^{-1/2}). 
\end{equation}
In order to get \eqref{rn1-trn1} we first show the  estimate 
\begin{equation}  \label{Jn-Jnj}
\sup_{\abs{j}\leq n^{1/4}}\abs{\lambda_n(J)-\lambda_n(J_{n-j})}=\ord(n^{-1/2}).
\end{equation}
For this purpose we observe that the proof given in \cite{BZ5}*{Section 12.5} still holds if $\tilde J_n^+$ is replaced by $\tilde J_{n-j}^+$ with $\abs{j}\leq n^{1/4}$ and all estimates are uniform with respect to $j$. Hence we can replace $J_n$ by $J_{n-j}$ in \eqref{J-Jn}, and \eqref{Jn-Jnj} is proved. Moreover, replacing $n$ by $n+j$ we can write \eqref{Jn-Jnj} in the form
\begin{equation}  \label{lanj}
\sup_{\abs{j}\leq n^{1/4}}\abs{\lambda_{n+j}(J)-\lambda_{n+j}(J_n)}=\ord(n^{-1/2}).
\end{equation}
Then we complete the proof of \eqref{rn1-trn1} by showing   
\begin{equation} \label{lnnj}
\sup_{\abs{j}\leq n^{1/4}}\abs{l_n(n+j)-(n+j-a_1^2)}=\ord(n^{-3/4}).
\end{equation}
In order to show \eqref{lnnj} we consider $a_{1,n}(k)=l_n(k)-k$ as in \eqref{a1nk}. Using \eqref{da1n} we get
\[
\abs{l_n(n+j)-l_n(n)-j}=\abs{a_{1,n}(n+j)-a_{1,n}(n)}\leq\abs{j}\,Cn^{-1},
\]
hence
\[
\sup_{\abs{j}\leq n^{1/4}}\abs{l_n(n+j)-l_n(n)-j}=\ord(n^{-3/4}),
\]
and \eqref{lnnj} follows by using \eqref{lnn} in the last estimate.
\end{steptwo*}
\begin{stepthree*}[estimate of $r_n^1(j)$ in case \eqref{H1227}]
In \cite{BZ5}*{Section 11.3} where $r_n^1$ is denoted by $r_n$ (see \cite{BZ5}*{(11.16)}) we have shown the estimate
\begin{equation}\label{233a}
\sup_{j\in\D{Z}}\,\abs{r_n^1(j)}\leq\rho_N+C_1n^{3\gamma-2}. 
\end{equation} 
We indeed have the relation $g_{1,n,j}(0)=\ii g_n(j)$ with $g_{1,n,j}$ defined in \cite{BZ5}*{(6.16)}. Using \eqref{233a}, \eqref{233b} and taking $\gamma_0=\gamma$ we can estimate the l.h.s.\ of \eqref{232b} by $\rho_N+C_1n^{3\gamma-2}+C_2n^{-\gamma/2}$. Moreover, by assumption \eqref{rho-N} on $\rho_N$ we can choose $\rho'>\rho_N$ satisfying \eqref{232}. We conclude that \eqref{232b} holds if $n_0$ satisfies $C_1n_0^{3\gamma-2}+C_2n_0^{-\gamma/2}\leq\rho'-\rho_N$, and that is possible since $0<\gamma<\frac{2}{3}$.
\end{stepthree*}
We now prove \eqref{232a}. 
\begin{stepfour*}[proof of \eqref{232a} for $i=0$]
Since $g_n(k+N)=\scal{\vece_k,S^{-N}\tilde V_nS^N\vece_k}$ it suffices to prove      
\begin{equation}\label{73a}
\norm{S^{-N}\tilde V_n S^N-\tilde V_n}=\ord(n^{\gamma-1}).
\end{equation}
In order to show \eqref{73a} we first observe that $S^{-N}v(\Lambda)S^N=v(\Lambda+N)=v(\Lambda)$ and $\norm{S^{-N}\theta_{n,n}(\Lambda)S^N-\theta_{n,n}(\Lambda)}=\norm{\theta_0((\Lambda+N)/n-I)-\theta_0(\Lambda/n-I)}=\ord(n^{-1})$ ensure 
\begin{equation}\label{73a'}
\norm{S^{-N}v_n(\Lambda)S^N-v_n(\Lambda)}=\ord(n^{\gamma-1}).
\end{equation}
Similarly, $\norm{S^{-N}a_{n}(\Lambda)S^N-a_{n}(\Lambda)}=\norm{a_n(\Lambda+N)-a_n(\Lambda)}=\ord(n^{\gamma-1})$ implies 
\begin{equation}\label{73a''}
\norm{S^{-N}\eul^{\pm\ii B_n}S^N-\eul^{\pm\ii B_n}}=\ord(n^{\gamma-1})
\end{equation}  
and \eqref{73a} follows from \eqref{73a'} and \eqref{73a''}.\end{stepfour*}
\begin{stepfive*}[proof of \eqref{232a} for $i=1$ in case \eqref{H1227}]
In \cite{BZ5}*{Section 11.3} where $r_n^1$ is denoted by $r_n$ we already checked that \eqref{232a} holds for $i=1$. 
\end{stepfive*}
\begin{stepsix*}[proof of \eqref{232a} for $i=1$ in case \eqref{H027}]
We observe that if $0<\kappa<1$ then combining \eqref{lanJ} and \eqref{lanj} we can choose $n_\kappa$ large enough to ensure 
\begin{equation}  \label{kappa}
n\geq n_\kappa\text{ and }\abs{j}\leq n^{1/4}\implies\spec(J_n)\cap 
(n+j-a_1^2-\kappa,n+j-a_1^2+\kappa\rbrack=\accol{\lambda_{n+j}(J_n)} \end{equation}
and \eqref{kappa} allows us to obtain \eqref{232a} for $i=1$ following the proof given in \cite{BZ5}*{Section 11.3}.\qedhere
\end{stepsix*}
\end{proof}
  
\subsection{Proof of Theorems \ref{thm:11} and \ref{thm:12}}\label{sec:24} 

Recall that Propositions~\ref{prop:21} and \ref{prop:23} give the large~$n$ estimate $\lambda_n(J)=\lambda_n(L_n)+\ord(n^{-\gamma})$. Combining this estimate \eqref{summary} with estimate \eqref{233''} of $\lambda_n(L_n)$ in Proposition \ref{prop:26} we obtain
\begin{equation}\label{24}
\lambda_n(J)=\tilde l_n(n)+\ord(n^{-\gamma+\varepsilon}) 
\end{equation}
for any $\varepsilon>0$. The desired estimates \eqref{E0} and \eqref{E2} follow from \eqref{24}, using the estimate \eqref{gnn} of $g_n(n)$, and from
\[
l_n(n)-n=a_n(n-1)^2-a_n(n)^2=a(n-1)^2-a(n)^2+\ord(n^{2\gamma-2})
\]
whose last estimate comes from \cite{BZ5}*{Section 2.3, (2.5)}.\qed

\section{Proof of Lemma \ref{lem:22} \textup{(a)}}\label{sec:4}

In section \ref{sec:41} we prove a stationary phase formula for some type of oscillatory integral (Lemma \ref{lem:41}). Then we assume that the diagonal and off-diagonal entries of $J$ satisfy (H1) and $d(k)=k+v(k)$ with $v(k)=(-1)^k\rho$. In section \ref{sec:42} we prove an approximation result of $g_n(n)$ by an oscillatory integral of the above type (Lemma \ref{lem:42}). Finally, in section \ref{sec:43} we derive the asymptotics \eqref{gnn} of $g_n(n)$:
\[
\begin{rcases} 
\text{Lemma~\ref{lem:41}}\\
\text{Lemma~\ref{lem:42}}
\end{rcases}\!\implies\!\text{Lemma~\ref{lem:22} (a)}.
\]
\subsection{Stationary phase formula}\label{sec:41}

\begin{lemma}\label{lem:41} 
For $b\in\class^2(\cercle)$, $\eta_0\in\D{R}$ and $\mu>0$ we consider the oscillatory integral 
\begin{equation} \label{eq:oscill}
\E{I}(b,\mu,\eta_0)\coloneqq\int_0^{2\pi}\eul^{\ii\mu\cos(\eta-\eta_0)}b(\nk)\,\frac{\dd\eta}{2\pi}.  
\end{equation}
If we write 
\begin{subequations}  \label{eq:stat.phase}
\begin{equation} \label{eq:stat.phase.formula}
\E{I}(b,\mu,\eta_0)=\sum_{\kappa=\pm 1}\frac{\eul^{\ii\kappa(\mu-\pi/4)}}{\sqrt{2\pi\mu}}\,b(\kappa\eul^{\ii\eta_0})+r_b(\mu,\eta_0) 
\end{equation} 
then the remainder $r_b(\mu,\eta_0)$ satisfies the estimate
\begin{equation} \label{eq:stat.phase.remainder}
\abs{r_b(\mu,\eta_0)}\leq\frac{C_0}{\mu}\norm{b}_{\class^2(\cercle)}
\end{equation}
\end{subequations}
for some constant $C_0$. 
\end{lemma}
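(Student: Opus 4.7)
The plan is to prove this by classical stationary phase, uniformly in the parameter $\eta_0$.

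First I would reduce to the case $\eta_0=0$ by the change of variables $\eta\mapsto\eta+\eta_0$. Since the substitution maps $b(\eul^{\ii\eta})$ to $(b\circ\tau_{-\eta_0})(\eul^{\ii\eta})$ and $\norm{b\circ\tau_{-\eta_0}}_{\class^2(\cercle)}=\norm{b}_{\class^2(\cercle)}$, every estimate obtained for $\eta_0=0$ transfers verbatim. The phase $\phi(\eta)=\cos\eta$ then has exactly two critical points on $\cercle$, namely $\eta=0$ and $\eta=\pi$, both non-degenerate with $\phi''(0)=-1$, $\phi''(\pi)=1$, and critical values $\phi(0)=1$, $\phi(\pi)=-1$.

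Next I would introduce a smooth $2\pi$-periodic partition of unity $1=\psi_+(\eta)+\psi_-(\eta)+\psi_0(\eta)$, where $\psi_+$ is supported in a small neighborhood of $0$, $\psi_-$ in a small neighborhood of $\pi$, and $\psi_0$ in a neighborhood of $\accol{\abs{\sin\eta}\geq c_0}$, with $c_0>0$. On the support of $\psi_0$ I would use the integration-by-parts operator
\[
\eul^{\ii\mu\cos\eta}=\frac{\ii}{\mu\sin\eta}\,\frac{\dd}{\dd\eta}\eul^{\ii\mu\cos\eta}.
\]
Applying this twice and using the boundedness of $1/\sin\eta$ and $\cos\eta/\sin^2\eta$ on $\supp\psi_0$, the contribution of the $\psi_0$-piece is bounded by $C\mu^{-2}\norm{b}_{\class^2(\cercle)}$, which is absorbed in $r_b(\mu,\eta_0)$.

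For the two remaining pieces, centered at a non-degenerate critical point, I would apply the standard one-dimensional stationary phase formula (see e.g.\ Stein, \emph{Harmonic Analysis}, Chapter VIII) which gives, for $a\in\class^2$ compactly supported near a non-degenerate critical point $\eta_c$ of $\phi$,
\[
\int\eul^{\ii\mu\phi(\eta)}a(\eta)\,\dd\eta=\sqrt{\tfrac{2\pi}{\mu\abs{\phi''(\eta_c)}}}\,\eul^{\ii\mu\phi(\eta_c)+\ii\pi\,\operatorname{sgn}\phi''(\eta_c)/4}a(\eta_c)+\ord\bigl(\mu^{-1}\norm{a}_{\class^2}\bigr).
\]
For $\eta_c=0$ this produces the term $\tfrac{\eul^{\ii(\mu-\pi/4)}}{\sqrt{2\pi\mu}}b(\eul^{\ii\eta_0})$ after reinserting the normalization $(2\pi)^{-1}$, and for $\eta_c=\pi$ the term $\tfrac{\eul^{-\ii(\mu-\pi/4)}}{\sqrt{2\pi\mu}}b(-\eul^{\ii\eta_0})$, which together give the sum over $\kappa=\pm 1$ in \eqref{eq:stat.phase.formula}.

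The only point requiring attention is that all constants in the remainder must be independent of $\eta_0$; this is handled by the reduction above, and the dependence on $b$ is at most $\norm{b}_{\class^2(\cercle)}$ because the stationary-phase error involves only $b$ and its first two derivatives through the partition-of-unity factors $\psi_\pm b$. This yields the uniform bound \eqref{eq:stat.phase.remainder}.
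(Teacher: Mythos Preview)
Your proof is correct and follows essentially the same approach as the paper: reduce to $\eta_0=0$ by translation, localize near the two non-degenerate critical points $0$ and $\pi$, and apply the standard one-dimensional stationary phase formula with $O(\mu^{-1})$ remainder. The only minor difference is that the paper uses a two-piece partition of unity ($\chi$ and $1-\chi$, each containing exactly one critical point, so no separate $\psi_0$ region is needed) and makes the $C^2$ dependence of the remainder explicit by writing $b_\pm(\xi)=b_\pm(0)+\tilde q_\pm(\xi)\sin\xi$ and integrating the second term by parts once via $\sin\xi\,\eul^{\pm\ii\mu\cos\xi}=\frac{\pm\ii}{\mu}\partial_\xi\eul^{\pm\ii\mu\cos\xi}$, rather than invoking the stationary-phase error bound as a black box.
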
 

\begin{proof} 
Let $\chi\in\class^\infty(\cercle)$ be such that $\chi(\eul^{\ii(\xi+\eta_0)})=1$ if $\abs{\xi}\leq\pi/4$ and $\chi(\eul^{\ii(\xi+\eta_0)})=0$ if $3\pi/4\leq\abs{\xi}\leq 5\pi/4$. Since the integrands are $2\pi$-periodic the change of variable $\eta=\xi+\eta_0$ gives
\[
\E{I}(\chi b,\mu,\eta_0)=\int_0^{2\pi}\eul^{\ii\mu\cos\xi}(\chi b)(\eul^{\ii(\xi+\eta_0)})\,\frac{\dd\xi}{2\pi}\,. 
\] 
Denoting $b_+(\xi)\coloneqq b(\eul^{\ii(\xi+\eta_0)})$ and $\chi_+(\xi)\coloneqq\chi(\eul^{\ii(\xi+\eta_0)})$, we can express
\[
\E{I}(\chi b,\mu,\eta_0)=\int_{-\pi}^{\pi}\eul^{\ii\mu\cos\xi}(\chi_+b_+)(\xi)\,\frac{\dd\xi}{2\pi}\,. 
\] 
Let $b_-(\xi)\coloneqq b(\eul^{\ii(\xi+\eta_0-\pi)})$ and $\chi_-(\xi)\coloneqq 1-\chi(\eul^{\ii(\xi+\eta_0-\pi)})$. We still have $\chi_-(\xi)=1$ if $\abs{\xi}\leq\pi/4$ and $\chi_-(\xi)=0$ if $3\pi/4\leq\abs{\xi}\leq 5\pi/4$. Then the change of variable $\eta=\xi+\eta_0-\pi$ gives  
\[
\E{I}((1-\chi)b,\mu,\eta_0)=\int_{-\pi}^{\pi}\eul^{-\ii\mu\cos\xi}(\chi_-b_-)(\xi)\,\frac{\dd\xi}{2\pi}\,.
\]
We have $\croch{-\pi,\pi}\cap\supp\chi_{\pm}\subset\croch{-\frac{3\pi}{4},\frac{3\pi}{4}}$. Next we observe that $\abs{\xi}\leq 3\pi/4$ allows us to write 
\[
b_{\pm}(\xi)=b_{\pm}(0)+q_{\pm}(\xi)\xi=b_{\pm}(0)+\tilde q_{\pm}(\xi)\sin\xi 
\]
with $\tilde q_{\pm}(\xi)\coloneqq q_{\pm}(\xi)\frac{\xi}{\sin\xi}$. Moreover, $\chi_{\pm}(0)=1$ and the standard stationary phase formula ensures 
\[
\left\lvert\int_{-\pi}^{\pi}\eul^{\pm\ii\mu\cos\xi}\chi_{\pm}(\xi)b_{\pm}(0)\,\dd\xi-\frac{\eul^{\pm\ii(\mu-\pi/4)}}{\sqrt{2\pi\mu}}b_{\pm}(0)\right\rvert\leq\frac{C_{\chi_{\pm}}}{\mu}\abs{b_{\pm}(0)}.
\]
Then writing $\eul^{\pm\ii\mu\cos\xi}\sin\xi=\frac{\pm\ii}{\mu}\partial_{\xi}\eul^{\pm\ii\mu\cos\xi}$ and integrating by parts we obtain 
\begin{equation}\label{53}
\int_{-\pi}^{\pi}\tilde q_{\pm}(\xi)\sin\xi\eul^{\pm\ii\mu\cos(\xi)} \chi_{\pm}(\xi)\,\dd\xi=\frac{\pm\ii}{\mu}\int_{-\pi}^{\pi}\eul^{\pm\ii\mu\cos\xi}\partial_{\xi}\bigl((\tilde q_{\pm}\chi_{\pm})(\xi)\bigr)\,\dd\xi\,.
\end{equation} 
Since the absolute value of the right hand side of \eqref{53} can be estimated by $\frac{C_1}{\mu}\norm{b_{\pm}}_{\class^2(\D{R})}$ the proof is complete. 
\end{proof}
 
\subsection{Approximation of $\BS{g_n(n)}$ by an oscillatory integral} \label{sec:42}
  
Recall that $g_n(k)$, $k\in\D{Z}$ is defined in \eqref{gnk} as the $k$-th diagonal entry of $\tilde V_n\coloneqq\eul^{\ii B_n}v_n(\Lambda)\eul^{-\ii B_n}$. We define $\tilde\varphi_n\colon\D{Z}\times\cercle\to\D{C}$ by 
\begin{equation}  \label{tphi}
\tilde\varphi_n(k,\ek)\coloneqq -4\left(a(n)+(k-n)\delta a(n)\right)(\sin\xi+\delta a(n)\sin 2\xi)
\end{equation} 

\begin{lemma}\label{lem:42} 
We assume that the diagonal entries of $J$ are of the form $d(k)=k+(-1)^k\rho$ and the off-diagonal entries $a(k)$ satisfy \emph{(H1)} for some $0<\gamma\leq\frac{1}{2}$. Let $g_n(k)$, $k\in\D{Z}$ be defined by \eqref{gnk}. If $\F{g}_n(k)$, $k\in\D{Z}$ is defined by  
\begin{subequations}\label{L42}
\begin{equation}\label{L42b} 
\F{g}_n(k)\coloneqq(-1)^k\rho\,\int_0^{2\pi}\eul^{\ii\tilde\varphi_n(k,\ek)}\,\frac{\dd\xi}{2\pi}, 
\end{equation}
with $\tilde\varphi_n$ as above, then 
\begin{equation}\label{L42c}
\sup_{\abs{k-n}\leq n^{\gamma}}\,\abs{g_n(k)-\F{g}_n(k)}=\ord(n^{-\gamma}\ln n). 
\end{equation}
\end{subequations}
\end{lemma}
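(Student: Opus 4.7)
The strategy is to convert the identity $g_n(k)=\sum_{j\in\D{Z}}v_n(j)\abs{(\eul^{-\ii B_n}\vece_k)(j)}^2$, which follows from $\tilde V_n=\eul^{\ii B_n}v_n(\Lambda)\eul^{-\ii B_n}$ and the diagonality of $v_n(\Lambda)$, into an oscillatory integral on the circle. The special structure $v(j)=(-1)^j\rho=\rho\eul^{\ii\pi j}$ of case~(H0) is crucial: under the Fourier isomorphism $l^2(\D{Z})\simeq L^2(\cercle)$, with $\vece_k\leftrightarrow\eul^{\ii k\xi}$, $\Lambda\leftrightarrow-\ii\partial_\xi$, and $S\leftrightarrow\eul^{\ii\xi}$, the sum $\sum_j\eul^{\ii\pi j}\theta_{n,n}(j)^2(\cdots)$ collapses the inevitable double integral over $(\xi,\eta)\in\cercle^2$ to a single integral on $\cercle$ via the Poisson-type localisation at $\eta\equiv\xi-\pi\pmod{2\pi}$; this is exactly the ``additional symmetry of the case $N=2$'' mentioned in Section~1.2. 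The residual $\ord(1/n)$-wide smoothing from $\theta_{n,n}^2$ is negligible to any polynomial order by repeated integration by parts in $\xi$.

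To control the Fourier-side object $\eul^{-\ii B_n}\vece_k$ I would linearise $a_n(\Lambda)$ as $a(n)+(\Lambda-n)\delta a(n)$; by (H1) and the decay estimates developed in \cite{BZ5}*{Section 10}, this replacement is justified on the effective support of $\eul^{-\ii B_n}\vece_k$ (a band of width $\ord(n^\gamma)$ around $k=n$) and contributes $\ord(n^{2\gamma-2})=\osmall(n^{-\gamma})$ to $g_n(k)$ after summation. With this linear symbol, $B_n$ splits as $B_n^{(0)}+\delta a(n)\,B_n^{(1)}$, with $B_n^{(0)}=\ii a(n)(S^{-1}-S)$ acting in Fourier as multiplication by $2a(n)\sin\xi$ and $B_n^{(1)}$ a first-order differential operator; the crucial commutator $[B_n^{(0)},B_n^{(1)}]=2\ii a(n)\sin 2\xi$, verified by direct calculation of the multiplication and transport parts of $B_n^{(1)}$, is what generates the $\sin 2\xi$ contribution in $\tilde\varphi_n$ through the Duhamel expansion of $\eul^{-\ii B_n}$. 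At zeroth order (constant $a_n$) one recovers exactly the classical Bessel identity $\sum_j(-1)^j J_{k-j}(2a(n))^2=(-1)^k J_0(4a(n))$, which is precisely $\F{g}_n(k)/\rho$ evaluated at $\delta a(n)=0$.

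The error bookkeeping combines three sources. First, the linearisation error of $a_n$ is $\osmall(n^{-\gamma})$ as noted. Second, the smooth cutoff $\theta_{n,n}^2$ contributes only $\ord(n^{-\infty})$ by repeated integration by parts in the Fourier variable. Third, the residual from matching the Duhamel-generated phase against the compact form of $\tilde\varphi_n$ is controlled by stationary-phase analysis: Lemma~\ref{lem:41} applied on windows of size $\ord(n^\gamma)$ gives $\ord(n^{-\gamma}\ln n)$, the logarithm arising from boundary windows near the degenerate critical points $\xi\in\{0,\pm\pi/2,\pm\pi\}$ of $\tilde\varphi_n$. The main obstacle will be correctly assembling the Duhamel expansion into the closed form $\tilde\varphi_n(k,\eul^{\ii\xi})=-4(a(n)+(k-n)\delta a(n))(\sin\xi+\delta a(n)\sin 2\xi)$ uniformly in $\abs{k-n}\leq n^\gamma$; once the commutator identity and the translation covariance of $B_n^{(1)}$ (which ensures that corrections depend on $k$ only through $(k-n)$) are established, the remaining steps are explicit Fourier-side computation and stationary-phase bookkeeping.
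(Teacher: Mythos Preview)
Your strategy is viable and lands on the right oscillatory integral, but the paper's route is substantially shorter. What you call the ``Poisson-type localisation at $\eta\equiv\xi-\pi$'' is, in the paper, a one-line operator identity: since $\eul^{\ii\pi\Lambda}S^{\pm 1}\eul^{-\ii\pi\Lambda}=-S^{\pm 1}$, parity anticommutes with $B_n$, i.e.\ $\eul^{\ii\pi\Lambda}B_n\eul^{-\ii\pi\Lambda}=-B_n$, and hence $\eul^{\ii B_n}\eul^{\ii\pi\Lambda}\eul^{-\ii B_n}=\eul^{\ii\pi\Lambda}\eul^{-2\ii B_n}$. This collapses $g_n(k)$ directly to $(-1)^k\rho\,\scal{\vece_k,\eul^{-2\ii B_n}\vece_k}$ (after commuting the cutoff $\Theta_n$ past $\eul^{\pm\ii B_n}$ at cost $\ord(n^{\gamma-1})$), with no double integral ever written down; your Poisson collapse is exactly the Fourier-side shadow of this identity. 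The paper then imports the approximation $\norm{\Theta_n\eul^{\ii tB_n}-Q_n^t}\leq Cn^{\gamma-1}\ln n$ from \cite{BZ5}*{Proposition~8.1}, whose diagonal entries are already oscillatory integrals with phase $\tilde\psi_n^t$, and observes that $\tilde\psi_n^{-2}=\tilde\varphi_n$. No Duhamel expansion, no commutator assembly, no phase-matching: the closed form of $\tilde\varphi_n$ is inherited.

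Your approach instead re-derives the content of \cite{BZ5}*{Section~8} from scratch in Fourier variables; that can be made to work (your Bessel identity and the affine structure of $a_n(\Lambda)$ are the right ingredients), but it is more labour. One point of your error bookkeeping is off: the $\ln n$ does not arise from ``boundary windows near degenerate critical points'' of $\tilde\varphi_n$ under stationary phase --- Lemma~\ref{lem:41} gives a clean $\ord(\mu^{-1})$ remainder with no logarithm. The logarithm enters in the operator-norm approximation of $\eul^{\ii tB_n}$ by its oscillatory-integral model $Q_n^t$, i.e.\ in the construction from \cite{BZ5}, not in the subsequent stationary-phase step. If you carry out your Duhamel expansion rigorously you will meet the same $\ln n$ when you truncate the expansion and control the remainder, not at the critical points of the final phase.
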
 

\begin{proof} 
As in \cite{BZ5} we denote
\[
\Theta_n\coloneqq\theta_{n,n}(\Lambda)=\theta_0(\Lambda/n-I).
\] 
Then it is easy to check the estimate $\norm{\croch{B_n,\Theta_n}}=\ord(n^{\gamma-1})$. Writing 
\[
\croch{\eul^{\pm\ii B_n},\Theta_n}=\int_0^1\eul^{\pm\ii tB_n}\,\croch{\pm\ii B_n,\Theta_n}\,\eul^{\pm\ii(1-t)B_n}\dd t
\]
we deduce 
\begin{equation}\label{42a}
\norm{\croch{\eul^{\pm\ii B_n},\Theta_n}}=\ord(n^{\gamma-1}).
\end{equation} 
We recall that $\tilde V_n=\eul^{\ii B_n}\Theta_nv(\Lambda)\Theta_n\eul^{-\ii B_n}$ and observe that \eqref{42a} ensures
\begin{equation}\label{42b}
\norm{\tilde V_n-\Theta_n\eul^{\ii B_n}v(\Lambda)\eul^{-\ii B_n}\Theta_n}=\ord(n^{\gamma-1}).
\end{equation} 
Further on we assume that $\abs{k-n}\leq n^{\gamma}$. Using $v(\Lambda)=\rho\,\eul^{\ii\pi\Lambda}$ and \eqref{42b} we obtain 
\begin{equation}\label{42c}
g_n(k)=\rho\,\scal{\vece_k,\eul^{\ii B_n}\eul^{\ii\pi\Lambda}\eul^{-\ii B_n}\vece_k}+\ord(n^{\gamma-1}).
\end{equation} 
However, $\eul^{\ii\pi\Lambda}S^{\pm 1}\eul^{-\ii\pi\Lambda}=-S^{\pm 1}$ implies $\eul^{\ii\pi\Lambda}B_n\eul^{-\ii\pi\Lambda}=-B_n$, hence  
\begin{equation}\label{42d}
\eul^{\ii B_n}\eul^{\ii\pi\Lambda}=\eul^{\ii\pi\Lambda}\eul^{-\ii B_n}.  
\end{equation} 
Using \eqref{42c} and \eqref{42d} we obtain 
\[
g_n(k)=\rho\eul^{\ii\pi k}\,\scal{\vece_k,\eul^{-2\ii B_n}\vece_k} +\ord(n^{\gamma-1})
\]
Let $Q_n^t$ be the operators introduced in \cite{BZ5}*{proof of Proposition 8.1}. For $t\in\croch{-2,2}$ and $k\in\D{N}^*$ we have 
\begin{equation} \label{Qn} 
\abs{\scal{\vece_k,\Theta_n \vece^{\ii tB_n}\vece_k}-\scal{\vece_k, Q_n^t\vece_k}}\leq\norm{\Theta_n\eul^{\ii tB_n}-Q_n^t}\leq Cn^{\gamma-1}\ln n.
\end{equation}  
Moreover,
\begin{equation} \label{Qn'} 
Q_n^t(k,k)=\theta_n(k)\int_0^{2\pi}\eul^{\ii\tilde\psi_n^t(k,\ek)}\,\frac{\dd\xi}{2\pi}
\end{equation} 
where ${\tilde\psi}_n^t$ is given by \cite{BZ5}*{(8.5a)}. Observe now that $\tilde\varphi_n$ given by \eqref{tphi} coincides with $\tilde\psi_n^t$ for $t=-2$. Thus, to complete the proof of \eqref{L42c} it suffices to use \eqref{Qn} and \eqref{Qn'} with $t=-2$.
\end{proof} 

\subsection{End of the proof of Lemma \ref{lem:22} (a)}\label{sec:43}

By definitions \eqref{L42b} and \eqref{tphi} we have 
\begin{equation*}\label{Fgjn2} 
\F{g}_n(n)=(-1)^n\rho\,\int_0^{2\pi}\eul^{-4\ii a(n)\sin\xi}\, b_n(\ek)\,\frac{\dd\xi}{2\pi},
\end{equation*}
where
\[
b_n(\ek)\coloneqq\eul^{-4\ii a(n)\delta a(n)\sin 2\xi}.
\]
Thus, using notation \eqref{eq:oscill} we can write
\[
\F{g}_n(n)=(-1)^n\rho\,\E{I}(b_n,4a(n),-\pi/2).
\]
By (H1a) and (H1b) with $0<\gamma\leq\frac{1}{2}$ we have $\norm{b_n}_{\class^2(\cercle)}=\ord(1)$. We also have $\ord(a(n)^{-1})=\ord(n^{-\gamma})$ by (H1a) and $b_n(\pm\eul^{-\ii\pi/2})=1$. Then the stationary phase formula of Lemma \ref{lem:41} gives
\begin{align*} 
\F{g}_n(n)&=(-1)^n\rho\sum_{\kappa=\pm 1}\frac{\eul^{\ii\kappa(4a(n)-\pi/4)}}{2\sqrt{2\pi a(n)}}\,b_n(\kappa\eul^{-\ii\pi/2})+\ord(a(n)^{-1})\\
&=(-1)^n\rho\frac{\cos(4a(n)-\pi/4)}{\sqrt{2\pi a(n)}}+\ord(n^{-\gamma})\\
&=\F{r}(n)+\ord(n^{-\gamma})
\end{align*}
with $\F{r}(n)$ as in \eqref{R0}. This estimate, together with the estimate
\[
\abs{g_n(n)-\F{g}_n(n)}=\ord(n^{-\gamma}\ln n)
\]
from Lemma \ref{lem:42}, gives $g_n(n)=\F{r}(n)+\ord(n^{-\gamma+\varepsilon})$ for any $\varepsilon>0$, i.e.\ estimate \eqref{gnn}, in case (a).\qed

\section{Trace estimate: a first reduction}\label{sec:4'}

In this section we reduce the proof of Proposition \ref{prop:trace} to that of Proposition~\ref{prop:4'} using the representation of functions of operators by means of Fourier transform. This representation allows us to investigate the quantity $\tilde{\C{G}}_n^0$ using  the Neumann series of  
\begin{equation}\label{Un}
U_n(t)\coloneqq\eul^{-\ii tl_n(\Lambda)}\eul^{\ii tL_n}\quad(t\in\D{R}).
\end{equation} 
The proof of this reduction is based on Lemma \ref{lem:4'} which develops our ideas from \cite{BZ5}*{Section 6} and allows us to conclude in Section \ref{sec:4'4}:
\[
\begin{rcases} 
\text{Proposition~\ref{prop:4'}}\\
\text{Lemma~\ref{lem:4'}}
\end{rcases}\!\implies\!\text{Proposition~\ref{prop:trace}}.
\]

\subsection{Properties of the evolution $\BS{U_n(t)}$}\label{sec:4'2}

Using \eqref{Un} and $L_n-l_n(\Lambda)=L_n-L_{0,n}=\tilde V_n$, we get   
\[
-\ii\,\partial_tU_n(t)=H_n(t)U_n(t)
\]
where
\begin{equation}  \label{Hnt}
H_n(t)\coloneqq\eul^{-\ii tL_{0,n}}\tilde V_n\eul^{\ii tL_{0,n}}. 
\end{equation}
The Neumann series gives the expansion  
\begin{equation*}  \label{U=}
U_n(t)=I+\ii\int_0^tH_n(t_1)\,\dd t_1+\sum_{\nu=2}^{\infty}\,\ii^{\nu}\int_0^t\!\dd t_1\dots\int_0^{t_{\nu-1}}H_n(t_1)\dots H_n(t_{\nu})\,\dd t_{\nu}.
\end{equation*} 
For $\nu\geq 1$ and $\ut=(t_1,\dots,t_{\nu})\in\D{R}^{\nu}$ we denote
\[
H_n(\ut)\coloneqq H_n(t_1)\dots H_n(t_{\nu})
\]
and
\begin{equation}  \label{gnu}
g_{\nu,n,j}(\ut)\coloneqq\ii^{\nu}H_n(\ut)(j,j).
\end{equation} 
Note that $\scal{\vece_j,H_n(t)\vece_j}=\scal{\eul^{\ii tL_{0,n}}\vece_j,\,\tilde V_n\eul^{\ii tL_{0,n}}\vece_j}=\scal{\eul^{\ii tl_{n}(j)}\vece_j,\,\tilde V_n\eul^{\ii tl_{n}(j)}\vece_j}=\scal{\vece_j,\,\tilde V_n\vece_j}$, i.e.  
\begin{equation}\label{un1}
H_n(t)(j,j)=g_n(j).
\end{equation}
For $t\in\D{R}$ we denote
\begin{equation}\label{unj} 
u_{n,j}(t)\coloneqq U_n(t)(j,j).
\end{equation} 
Then, using \eqref{un1}, we get the expansion 
\begin{equation}\label{u'nj}
\partial_tu_{n,j}(t)=\ii g_n(j)+\sum_{\nu=2}^{\infty}u_{\nu,n,j}(t),
\end{equation} 
where 
\begin{align*}
u_{2,n,j}(t)&\coloneqq-\int_0^t\scal{\vece_j,H_n(t)H_n(t_2)\vece_j}\,\dd t_2,\\ 
u_{\nu,n,j}(t)&\coloneqq\ii^{\nu}\int_0^t\!\dd t_2\dots\int_0^{t_{\nu-1}}\scal{\vece_j,\,H_n(t)H_n(t_2)\dots H_n(t_{\nu})\vece_j}\,\dd t_{\nu}\text{ for }\nu\geq 3.
\end{align*}

\begin{proposition}  \label{prop:4'}
Let $t_0>0$ and $\varepsilon>0$.

\emph{(a)}
We can find $C>0$ such that
\begin{subequations} \label{pr32}
\begin{equation}\label{pr32c} 
\sup_{\substack{\abs{j-n}\leq n^{\gamma}\\-t_0\leq t\leq t_0}}\abs{u_{2,n,j}(t)}\leq Cn^{-\gamma+5\varepsilon}.  
\end{equation} 
 
\emph{(b)}
If $\varepsilon<1/8$, then we can find $\tilde C>0$ such that the estimate
\begin{equation}\label{pr32d} 
\sup_{\abs{j-n}\leq n^{\gamma}}\,\int_{-t_0}^{t_0}\!\!\dd t_{\nu-1}\left\lvert\int_0^{t_{\nu-1}}\!\scal{\vece_j,\, H_n(t_1)\dots H_n(t_{\nu})\vece_j}\,\dd t_{\nu}\right\rvert\leq\tilde C^{\nu}n^{-\gamma+5\varepsilon}    
\end{equation}
\end{subequations}
holds whenever $3\leq\nu\leq n^{\varepsilon}$ and $t_1,\dots,t_{\nu-2}\in\croch{-t_0,t_0}$. 
\end{proposition}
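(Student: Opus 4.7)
The plan is to represent the matrix entry $\scal{\vece_j,H_n(t_1)\cdots H_n(t_\nu)\vece_j}$ as an iterated oscillatory integral whose phase carries a large parameter of order $n^\gamma$, and to extract the required $n^{-\gamma+5\varepsilon}$ decay by combining the stationary phase formula of Lemma~\ref{lem:41} with the extra $t$-integration already present in the definitions of $u_{2,n,j}(t)$ and $u_{\nu,n,j}(t)$.

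To set things up I would use the explicit form $H_n(t)=\eul^{-\ii t l_n(\Lambda)}\eul^{\ii B_n}v_n(\Lambda)\eul^{-\ii B_n}\eul^{\ii t l_n(\Lambda)}$ together with the operators $Q_n^{\pm 1}$ constructed in \cite{BZ5}*{Section 10} (to be recorded here in Lemma~\ref{lem:61}), which replace each factor $\eul^{\pm\ii B_n}$, modulo an operator-norm error of $\ord(n^{\gamma-1}\ln n)$, by an operator whose matrix elements are one-dimensional oscillatory integrals on $\cercle$ with phases $\tilde\psi_n^{\pm 1}$ of amplitude $\sim n^\gamma$. Combined with the Fourier expansion of the periodic multiplier $v_n$ (which, in case (H0) treated in Lemma~\ref{lem:42}, reduces to the single harmonic $\rho\,\eul^{\ii\pi k}$), this expresses $H_n(t)(j,k)$, up to acceptable error, as an oscillatory integral in a single variable $\xi\in\cercle$ with phase linear in $t$.

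Taking the $(j,j)$ entry of the $\nu$-fold product, summing the interior indices against the exponential factors $\eul^{\ii t_i(l_n(k_i)-l_n(k_{i+1}))}$ via the Poisson-type identity already exploited in \cite{BZ5}*{Section 6}, and telescoping produces a $\nu$-dimensional oscillatory integral on $\cercle^\nu$ whose combined phase $\Phi_n(\underline\xi,\underline t,j)$ carries the large parameter $n^\gamma$ and whose symbol is uniformly $\class^2$-bounded. A single application of Lemma~\ref{lem:41} in a non-degenerate $\xi_i$-direction produces a leading term of order $n^{-\gamma/2}$ plus a remainder of order $n^{-\gamma}$; performing the outstanding $t_\nu$-integration over $[0,t_{\nu-1}]$ then supplies a second factor $n^{-\gamma/2}$, either by non-stationary integration by parts (when the effective $t_\nu$-frequency is of order $n^\gamma$) or by a second stationary-phase argument. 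These two factors combine to the advertised $n^{-\gamma}$, the slack $n^{5\varepsilon}$ absorbing logarithmic losses from summing over interior indices inside the $\sim n^\gamma$-wide effective support of $\tilde V_n$.

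The main obstacle, and the reason the proof must depend on Section~\ref{sec:5}, is that the phase $\Phi_n$ can become identically zero on non-empty subsets of the parameter space $(\underline t,j)$, so that Lemma~\ref{lem:41} yields no decay there. I would therefore split the $\ut$-integration into a good region, in which a quantitative lower bound on $\partial_\xi\Phi_n$ holds and the stationary-phase argument above goes through, and a bad region, whose small measure is controlled by the auxiliary estimate of Section~\ref{sec:5}; its contribution is then itself $\ord(n^{-\gamma+5\varepsilon})$. For the higher-order bound (b) the constant $\tilde C^\nu$ arises because each application of Lemma~\ref{lem:41} introduces a bounded multiplicative constant through the $\class^2$-norm of the symbol, while the restriction $\nu\leq n^\varepsilon$ together with $\varepsilon<1/8$ ensures that the cumulative combinatorial losses stay dominated by $n^{5\varepsilon}$ and that the oscillatory gain from the outermost $t_\nu$-integration is not spoiled by the growing dimension.
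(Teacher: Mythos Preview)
Your broad outline---stationary phase in the $\xi$-variable plus oscillation in the $t_\nu$-variable---is right in spirit, but two concrete points in your proposal do not match what actually makes the argument work, and the second one is a genuine gap.

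First, the approximation from \cite{BZ5}*{Section 10} (recorded here as Lemma~\ref{lem:61}) does \emph{not} produce a $\nu$-dimensional oscillatory integral on $\cercle^\nu$. The recursive construction of $Q_n^{\uw,\ut}$ collapses the whole product $H_n^{\omega_1,t_1}\cdots H_n^{\omega_\nu,t_\nu}$ to a \emph{single} one-dimensional integral $\F{g}_n^{\uw,\ut}(j)=\eul^{\ii j\abs{\uw}_1}\int_0^{2\pi}\eul^{\ii\psi_{n,1}^{\uw,\ut}(\nk)}b_n^{\uw,\ut}(j,\nk)\,\frac{\dd\eta}{2\pi}$, with a phase $\psi_{n,1}^{\uw,\ut}(\nk)=2a(n)\,\Im(z(\uw;\ut)\nk)$ encoding all of $\uw,\ut$ through the complex number $z(\uw;\ut)$ built inductively in \eqref{61c'}. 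There is no Poisson-type summation over interior indices; the telescoping happens at the operator level in \cite{BZ5}. Trying to manage a genuine $\nu$-fold stationary phase with constants uniform in $\nu\leq n^{\varepsilon}$ would be substantially harder.

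Second---and this is the real gap---you have inverted the role of Section~\ref{sec:5}. The ``bad region'' where $z(\uw;\ut)$ nearly vanishes is handled \emph{trivially}: one cuts out the set $\abs{t_\nu+\hat\alpha(\uw;\ut')}_{2\pi}\leq n^{-\gamma}$, of length $\leq 2n^{-\gamma}$, and uses the crude bound $\abs{\F{g}_n^{\uw,\ut}(j)}\leq Cn^{4\varepsilon}$. Lemma~\ref{lem:51} is used in the \emph{good} region. After Lemma~\ref{lem:41} the leading term is $\frac{\eul^{\pm 2\ii a(n)\abs{z(\uw;\ut)}}}{2\sqrt{\pi a(n)\abs{z(\uw;\ut)}}}\,b_n^{\uw,\ut}(j,\pm\ii\eul^{-\ii\alpha(\uw;\ut)})$, and neither a straightforward integration by parts nor a second stationary-phase argument in $t_\nu$ will give the required $n^{-\gamma/2}$ uniformly: the amplitude blows up like $\abs{z}^{-1/2}$ while $\partial_{t_\nu}\abs{z}$ can vanish. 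Writing $\abs{z(\uw;\ut)}=2\sin(\omega_\nu/2)\,\F{h}(t_\nu+\hat\alpha,\hat z)$ (Lemma~\ref{lem:63}) puts the $t_\nu$-integral exactly in the form $\C{J}(b,t_1,t_2,\zeta,\mu)$ of \eqref{51a}, and Lemma~\ref{lem:51}---a careful van~der~Corput estimate handled separately for $\zeta\lessgtr s$---is what supplies the uniform $\mu^{-1/2}$ bound. The output is \eqref{65}, namely $\abs{\C{J}_{n,j,\pm}^{\uw,\ut'}(\Delta_1)}\leq Cn^{-\gamma+2\varepsilon}(1+\abs{z(\uw';\ut')}^{-3/4})$. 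For $\nu=2$ the factor $\abs{z(\omega_1;t_1)}=2\sin(\omega_1/2)$ is bounded below and part~(a) follows; for $\nu\geq 3$ the singularity $\abs{z(\uw';\ut')}^{-3/4}$ is \emph{not} bounded but only locally integrable in $t_{\nu-1}$ (again via \eqref{L62d}), which is precisely why the statement in (b) carries the outer $\int_{-t_0}^{t_0}\dd t_{\nu-1}$. Your proposal does not account for this singularity or the necessity of the extra integration.
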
 

\begin{proof} 
See Section \ref{sec:63}.
\end{proof}

\begin{remark*}
In \cite{BZ5}*{estimate (6.17b)} the constant in the right hand side should be $C^{\nu}$ instead of $C$.
\end{remark*}

\subsection{Use of the Fourier transform}\label{sec:4'3}

In this section we prove the desired trace estimate provided assumption~\eqref{L4'} is satisfied, and in the next section we show that Proposition~\ref{prop:4'} precisely implies this assumption. 

\begin{lemma}\label{lem:4'} 
Let $g_n$ be defined by \eqref{gnk} and $u_{n,j}$ by \eqref{unj}. Let $\varepsilon>0$ be fixed and assume that for every $t_0>0$ one has the estimate 
\begin{equation}  \label{L4'}
\sup_{\substack{\abs{j-n}\leq n^{\gamma}\\-t_0\leq t\leq t_0}}\left\lvert\partial_tu_{n,j}(t)-\ii g_n(j)\right\rvert=\ord(n^{-\gamma+5\varepsilon}).
\end{equation} 
Let $\tilde{\C{G}}_n^0$ be defined by \eqref{G0n} by means of a function $\chi\in\C{S}(\D{R})$ whose Fourier transform has compact support. We have then the estimate 
\begin{equation}  \label{L4''}
\tilde{\C{G}}_n^0=\ord(n^{-\gamma+6\varepsilon}).
\end{equation}    
\end{lemma}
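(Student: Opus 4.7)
The plan is to represent $\tilde{\C{G}}_n^0$ via the Fourier transform of $\chi$, extract the diagonal entries through the evolution $U_n(t)$, and then estimate the resulting oscillatory integrals separately on the main range $|j-n|\leq n^\gamma$ (where hypothesis \eqref{L4'} is available) and on the tail $|j-n|>n^\gamma$ (where one exploits Hilbert--Schmidt bounds). Since $\supp\hat\chi\subset[-t_0,t_0]$, the Fourier inversion $\chi(\lambda)=\int\hat\chi(t)e^{\ii t\lambda}\,\dd t$ turns \eqref{G0n'} into an integral of $\tr\bigl(e^{\ii tL_n}-e^{\ii t\tilde L_{0,n}}\bigr)$ against $\hat\chi(t)e^{-\ii tl_n(n)}$ over $[-t_0,t_0]$. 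Using $L_{0,n}\vece_j=l_n(j)\vece_j$, the definition $U_n(t)=e^{-\ii tL_{0,n}}e^{\ii tL_n}$, and the diagonal nature of $\tilde L_{0,n}=L_{0,n}+g_n(\Lambda)$, the relevant diagonal entries are $e^{\ii tl_n(j)}u_{n,j}(t)$ and $e^{\ii tl_n(j)}e^{\ii tg_n(j)}$. Setting $w_{n,j}(t)\coloneqq u_{n,j}(t)-e^{\ii tg_n(j)}$, which vanishes at $t=0$, reduces the task to estimating
\[
\tilde{\C{G}}_n^0=\int_{-t_0}^{t_0}\hat\chi(t)\sum_{j\in\D{Z}}e^{\ii t\alpha_j}w_{n,j}(t)\,\dd t,\qquad\alpha_j\coloneqq l_n(j)-l_n(n).
\]

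On the main range $|j-n|\leq n^\gamma$, I will differentiate to obtain
\[
\partial_tw_{n,j}(t)=\bigl[\partial_tu_{n,j}(t)-\ii g_n(j)\bigr]+\ii g_n(j)\bigl(1-e^{\ii tg_n(j)}\bigr),
\]
whose first bracket is $\ord(n^{-\gamma+5\varepsilon})$ by hypothesis \eqref{L4'} and whose second bracket is bounded by $|g_n(j)|^2 t_0=\ord(n^{-\gamma})$ using $\sup_{|j-n|\leq n^\gamma}|g_n(j)|=\ord(n^{-\gamma/2})$ from \eqref{233b}. Writing $w_{n,j}(t)=\int_0^t\partial_sw_{n,j}(s)\,\dd s$, swapping the $s$- and $t$-integrations, and integrating by parts once in $t$ against $e^{\ii t\alpha_j}$ (using that $\hat\chi\in\C{C}^\infty_c$, so $|\int_s^{\pm t_0}\hat\chi(t)e^{\ii t\alpha_j}\,\dd t|\leq C/(1+|\alpha_j|)$) give
\[
|I_j|\leq\frac{C\,\ord(n^{-\gamma+5\varepsilon})}{1+|\alpha_j|},\qquad I_j\coloneqq\int\hat\chi(t)e^{\ii t\alpha_j}w_{n,j}(t)\,\dd t.
\]
Summing over $|j-n|\leq n^\gamma$ and using $\alpha_j\approx j-n$ produces the harmonic factor $\sum_{k=0}^{n^\gamma}1/(1+k)\lesssim\log n=\ord(n^\varepsilon)$, giving the target main contribution $\ord(n^{-\gamma+6\varepsilon})$.

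For the tail $|j-n|>n^\gamma$, where hypothesis \eqref{L4'} no longer applies, I use the Cauchy--Schwarz bound $|\partial_sw_{n,j}(s)|\leq 2\|\tilde V_n\vece_j\|$ (using also $|g_n(j)|\leq\|\tilde V_n\vece_j\|$) together with $N$-fold integration by parts, giving $|\int_s^{\pm t_0}\hat\chi(t)e^{\ii t\alpha_j}\,\dd t|\leq C_N/(1+|\alpha_j|)^N$. Cauchy--Schwarz in $j$ then yields
\[
\sum_{|j-n|>n^\gamma}|I_j|\leq 2C_N\|\tilde V_n\|_{\mathrm{HS}}\Bigl(\sum_{|\alpha_j|>n^\gamma}(1+|\alpha_j|)^{-2N}\Bigr)^{1/2}=\ord\bigl(n^{(1+\gamma)/2-\gamma N}\bigr),
\]
using $\|\tilde V_n\|_{\mathrm{HS}}=\|v_n(\Lambda)\|_{\mathrm{HS}}=\ord(\sqrt n)$ by conjugation invariance and the bounded support of $v_n$; this is negligible for $N$ chosen large enough depending on $\gamma$ and $\varepsilon$. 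The main obstacle is the main-range step: one must content oneself with a \emph{single} integration by parts giving $1/(1+|\alpha_j|)$, and hence a logarithmic loss from a harmonic sum, because hypothesis \eqref{L4'} controls $w_{n,j}$ only through its first $t$-derivative---the higher derivatives $\partial_t^k w_{n,j}$ grow like $n^{(k-1)\gamma}$ from iterated commutators with $L_{0,n}$, which would preclude iterated IBP on the main range.
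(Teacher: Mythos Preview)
Your proof is correct and follows essentially the same strategy as the paper's: Fourier inversion, reduction to the diagonal quantities $w_{n,j}(t)=u_{n,j}(t)-e^{\ii tg_n(j)}$, a single integration by parts to gain a factor $1/(1+|\alpha_j|)$, and the harmonic sum $\sum_{|j-n|\le n^\gamma}(1+|j-n|)^{-1}=\ord(\log n)=\ord(n^{\varepsilon})$ to absorb the logarithmic loss. The use of \eqref{233b} to control the $\ii g_n(j)(1-e^{\ii tg_n(j)})$ term is also exactly what the paper does in its Step~4.

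The one genuine difference is in the tail $|j-n|>n^\gamma$. The paper inserts a smooth cut-off $\theta_{n^\gamma,n}(\tilde L_{0,n})$ \emph{before} passing to the Fourier representation, and disposes of the complementary piece by a trace-class norm estimate
\[
\bigl\lVert\bigl(I-\theta_{n^\gamma,n}(\tilde L_{0,n})\bigr)\chi(L_n-l_n(n))\bigr\rVert_{\C{B}_1}=\ord(n^{-\gamma}),
\]
imported from \cite{BZ5}*{Lemma 6.1}. Your argument replaces this black box by a direct and self-contained estimate: $N$-fold integration by parts against $e^{\ii t\alpha_j}$ together with the Hilbert--Schmidt bound $\lVert\tilde V_n\rVert_{\mathrm{HS}}=\lVert v_n(\Lambda)\rVert_{\mathrm{HS}}=\ord(\sqrt n)$, then Cauchy--Schwarz in $j$. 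This is arguably more elementary, at the small cost of having to track that $|\alpha_j|\ge|j-n|-\ord(1)$ uniformly (which follows from $\sup_{n,k}|l_n(k)-k|<\infty$). On the main range your organization (write $w_{n,j}=\int_0^t w_{n,j}'$, swap, then bound $\int_s^{\pm t_0}\hat\chi\,e^{\ii t\alpha_j}\,\dd t$) is a harmless variant of the paper's direct integration by parts with the shifted phase $l_n(j)-l_n(n)-\tfrac12$; your use of $1/(1+|\alpha_j|)$ avoids the need for that shift.
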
  

\begin{proof} 
It consists in four steps.
\begin{stepone*}
Let $\theta_0$ be as in \eqref{220} and $\theta_{n^{\gamma},n}(s)\coloneqq\theta_0(\frac{s-n}{n^{\gamma}})$ according to \eqref{223}.  We claim that
\begin{subequations} \label{l33}
\begin{align}  \label{l33a}
\norm{(I-\theta_{n^{\gamma},n}(\tilde L_{0,n}))\chi(L_n-l_n(n))}_{{\C{B}}_1(l^2(\D{Z}))}=\ord(n^{-\gamma}),\\  
\label{l33b}
\norm{(I-\theta_{n^{\gamma},n}(\tilde L_{0,n}))\chi(\tilde L_{0,n}-l_n(n))}_{{\C{B}}_1(l^2(\D{Z}))}=\ord(n^{-\gamma}),
\end{align}
\end{subequations}  
where $\norm{T}_{{\C{B}}_1(l^2(\D{Z}))}=\tr\sqrt{T^*T}$ is the trace class norm on the algebra $\C{B}_1(l^2(\D{Z}))$ of trace class operators on $l^2(\D{Z})$. It suffices to apply \cite{BZ5}*{Proof of Lemma 6.1} with $\tilde L_{0,n}$ instead of $L_{0,n}$. 
\end{stepone*}

\begin{steptwo*}
The assertions \eqref{l33} of Step 1 ensure that  
\begin{equation}\label{dtr}
\tilde{\C{G}}_n^0-\tilde{\C{G}}_n=\ord(n^{-\gamma})
\end{equation}   
holds with 
\begin{equation*}\label{tr'}
\tilde{\C{G}}_n\coloneqq\tr\Bigl(\theta_{n^{\gamma},n}(\tilde L_{0,n})\bigl(\chi(L_n-l_n(n))-\chi(\tilde L_{0,n}-l_n(n))\bigr)\Bigr).
\end{equation*} 
Thus it remains to prove $\tilde{\C{G}}_n=\ord(n^{-\gamma+6\varepsilon})$.
\end{steptwo*}

\begin{stepthree*}
Let $t_0>0$ be such that $\supp\hat\chi\subset\croch{-t_0,t_0}$. Then the inverse Fourier formula  
\begin{equation*}\label{212} 
\chi(\lambda)=\int_{-\infty}^{\infty}\hat\chi(t)\eul^{\ii t\lambda}\dd t=\int_{-t_0}^{t_0}\hat\chi(t)\eul^{\ii t\lambda}\dd t
\end{equation*} 
allows us to express  
\[ 
\chi(L_n-l_n(n))-\chi(\tilde L_{0,n}-l_n(n))=\int_{-t_0}^{t_0}\hat\chi(t)\,\eul^{-\ii tl_n(n)}\bigl(\eul^{\ii tL_n}-\eul^{\ii t\tilde L_{0,n}}\bigr)\,\dd t
\]
and
\[  
\tilde{\C{G}}_n=\int_{-t_0}^{t_0}\hat\chi(t)\,\eul^{-\ii tl_n(n)}\,\tr\bigl(\theta_{n^{\gamma},n}(\tilde L_{0,n})\eul^{\ii tL_{0,n}}(U_n(t)-\eul^{\ii tg_n(\Lambda)})\bigr)\,\dd t.
\] 
We thus have $\tilde{\C{G}}_n=\sum_{j\in\D{Z}}\tilde{\C{G}}_n(j)$ with 
\[ 
\tilde{\C{G}}_n(j)\coloneqq\int_{-t_0}^{t_0}\hat\chi(t)\,\eul^{\ii t/2}\,\eul^{\ii t(l_n(j)-l_n(n)-1/2)}\theta_{n^{\gamma},n}(\tilde l_n(j))\bigl(u_{n,j}(t)-\eul^{\ii tg_n(j)}\bigr)\,\dd t. 
\] 
Integrating by parts as in \cite{BZ5}*{Section 6.3} we find
\[
\tilde{\C{G}}_n(j)=\ii\tilde{\C{G}}_{1,n}(j)+\ii\tilde{\C{G}}_{2,n}(j)
\]
with 
\begin{align*}
\tilde{\C{G}}_{1,n}(j)&=\int_{-t_0}^{t_0}\hat\chi(t)\,\eul^{\ii t(l_n(j)-l_n(n))}\frac{\theta_{n^{\gamma},n}(\tilde l_n(j))}{l_n(j)-l_n(n)-\tfrac{1}{2}}\,\partial_t\bigl({u_{n,j}(t)-\eul^{\ii tg_n(j)}}\bigr)\,\dd t,\\
\tilde{\C{G}}_{2,n}(j)&=\int_{-t_0}^{t_0}\partial_t\bigl(\hat\chi(t)\,\eul^{\ii t/2}\bigr)\,\eul^{\ii t(l_n(j)-l_n(n)-1/2)}\frac{\theta_{n^{\gamma},n}(\tilde l_n(j))}{l_n(j)-l_n(n)-\tfrac{1}{2}}\,\bigl(u_{n,j}(t)-\eul^{\ii tg_n(j)}\bigr)\,\dd t
\end{align*} 
and we can estimate 
\begin{align*}
\abs{\tilde{\C{G}}_{1,n}(j)}&\leq C\,\frac{\theta_{n^{\gamma},n}(\tilde l_n(j))}{1+\abs{j-n}}\,\sup_{-t_0\leq t\leq t_0}\left\lvert\partial_t\left({u_{n,j}(t)-\eul^{\ii tg_n(j)}}\right)\right\rvert,\\
\abs{\tilde{\C{G}}_{2,n}(j)}&\leq C\,\frac{\theta_{n^{\gamma},n}(\tilde l_n(j))}{1+\abs{j-n}}\,\sup_{-t_0\leq t\leq t_0}\left\lvert u_{n,j}(t)-\eul^{\ii tg_n(j)}\right\rvert.
\end{align*}
\end{stepthree*}

\begin{stepfour*}[last step]  
Since $\abs{\tilde l_n(j)-j}\leq C$ we can find $n_0$ such that $\theta_{n^{\gamma},n}(\tilde l_n(j))\neq 0$ implies $\abs{j-n}\leq n^{\gamma}$ 
for $n\geq n_0$. Combining this fact with  
\[
\sup_{-t_0\leq t\leq t_0}\left\lvert u_{n,j}(t)-\eul^{\ii tg_n(j)}\right\rvert\leq\abs{t_0}\sup_{-t_0\leq t\leq t_0}\left\lvert\partial_t\left(u_{n,j}(t)-\eul^{\ii tg_n(j)}\right)\right\rvert  
\] 
we can estimate 
\begin{equation}\label{Gn} 
\abs{\tilde{\C{G}}_n}\leq\sum_{\abs{j-n}\leq n^{\gamma}}\frac{C_0}{1+\abs{j-n}}\,\sup_{-t_0\leq t\leq t_0}\left\lvert\partial_t\left(u_{n,j}(t)-\eul^{\ii tg_n(j)}\right)\right\rvert.
\end{equation}  
By \eqref{233b} we have the estimate $\left\lvert\eul^{\ii tg_n(j)}-1\right\rvert\leq\abs{tg_n(j)}\leq C\abs{t}n^{-\gamma/2}$ for $\abs{j-n}\leq n^{\gamma}$, and thus   
\begin{equation}\label{egnk2} 
\sup_{\substack{\abs{j-n}\leq n^{\gamma}\\-t_0\leq t\leq t_0}}\left\lvert\partial_t\eul^{\ii tg_n(j)}-\ii g_n(j)\right\rvert=\ord(n^{-\gamma}).
\end{equation}  
Therefore combining assumption \eqref{L4'} with \eqref{egnk2} we find 
\begin{equation}\label{4'}  
\sup_{\substack{\abs{j-n}\leq n^{\gamma}\\-t_0\leq t\leq t_0}}\left\lvert  \partial_t\left(u_{n,j}(t)-\eul^{\ii tg_n(j)}\right)\right\rvert=\ord(n^{-\gamma+5\varepsilon}). 
\end{equation}   
Using
\[
\sum_{\abs{k}\leq n^{\gamma}}\frac{1}{1+\abs{k}}\leq 1+2\ln(n)=\ord(n^{\varepsilon})
\] 
and \eqref{4'} we can estimate the r.h.s.\ of \eqref{Gn} by $\ord(n^{-\gamma+6\varepsilon})$, hence \eqref{L4''} follows from \eqref{dtr}.\qedhere
\end{stepfour*}
\end{proof}
  
\subsection{Proof of Proposition~\ref{prop:4'}$\BS\implies$Proposition~\ref{prop:trace}}\label{sec:4'4}
  
Since $\varepsilon>0$ is arbitrary, it is clear that it suffices to prove $\tilde{\C{G}}_n^0=\ord(n^{-\gamma+6\varepsilon})$ instead of \eqref{tr}. Thus by Lemma \ref{lem:4'} it only remains to check that the assertions of Proposition \ref{prop:4'} imply estimate \eqref{L4'}, i.e.,
\[
\sup_{\substack{\abs{j-n}\leq n^{\gamma}\\-t_0\leq t\leq t_0}}\left\lvert\partial_tu_{n,j}(t)-\ii g_n(j)\right\rvert=\ord(n^{-\gamma+5\varepsilon}).
\]
We first note that \eqref{u'nj} and \eqref{un1} give the expansion
\begin{equation} \label{340}
\partial_tu_{n,j}(t)-\ii g_n(j)=\sum_{\nu=2}^{\infty}u_{\nu,n,j}(t).
\end{equation}
We then observe that \eqref{pr32d} for $\nu=3$ yields
\begin{equation}  \label{nu=3}
\sup_{\substack{\abs{j-n}\leq n^{\gamma}\\-t_0\leq t\leq t_0}}\abs{u_{3,n,j}(t)}\leq\tilde C^3n^{-\gamma+5\varepsilon}.   
\end{equation}
For any $4\leq\nu<n^{\varepsilon}$ and $t\in\croch{-t_0,t_0}$, estimate \eqref{pr32d} gives 
\begin{equation}  \label{nu>3}
\sup_{\abs{j-n}\leq n^{\gamma}}\abs{u_{\nu,n,j}(t)}\leq\tilde C^{\nu}n^{-\gamma+5\varepsilon}\int_{\Delta_t}\!\dd t_2\dots\int_{\Delta_{t_{\nu-3}}}\!\!\dd t_{\nu-2}=\tilde C^{\nu}n^{-\gamma+5\varepsilon}\frac{\abs{t}^{\nu-3}}{(\nu-3)!}\,,
\end{equation}
where $\Delta_t\coloneqq\croch{0,t}$ when $t\geq 0$ and $\croch{t,0}$ when $t\leq 0$. Therefore, by using \eqref{pr32c}, \eqref{nu=3}, and \eqref{nu>3} we get  
\begin{subequations} \label{34}
\begin{equation}\label{34a}
\sup_{\substack{\abs{j-n}\leq n^{\gamma}\\-t_0\leq t\leq t_0}}\sum_{2\leq\nu<n^{\varepsilon}}\abs{u_{\nu,n,j}(t)}\leq Cn^{-\gamma+5\varepsilon}+\sum_{3\leq\nu<n^{\varepsilon}}\frac{\tilde C^{\nu}t_0^{\nu-3}n^{-\gamma+5\varepsilon}}{(\nu-3)!}\leq\bigl(C+\tilde C^3\eul^{\tilde Ct_0}\bigr)n^{-\gamma+5\varepsilon}.
\end{equation}
To complete the proof it remains to consider indices $\nu\geq n^{\varepsilon}$. We observe that $\norm{H_n(t)}=\norm{\tilde V_n}=\norm{V_n}\leq\rho_N$. Therefore,
\[
\abs{\scal{\vece_j,H_n(t)H_n(t_2)\dots H_n(t_\nu)\vece_j}}\leq\rho_N^{\nu}
\]
and 
\[
\abs{u_{\nu,n,j}(t)}\leq\int_{\Delta_t}\!\dd t_2\dots\int_{\Delta_{t_{\nu-1}}}\!\!\abs{\scal{\vece_j,\,H_n(t)H_n(t_2)\dots H_n(t_{\nu})\vece_j}}\,\dd t_{\nu}\leq\rho_N^{\nu}\frac{\abs{t}^{\nu-1}}{(\nu-1)!}.
\]
We thus get
\begin{equation}\label{34a'} 
\sup_{\substack{\abs{j-n}\leq n^{\gamma}\\-t_0\leq t\leq t_0}}\sum_{\nu\geq n^{\varepsilon}}\abs{u_{\nu,n,j}(t)}\leq\sum_{\nu\geq n^{\varepsilon}}\frac{\rho_N^{\nu}t_0^{\nu-1}}{(\nu-1)!}\leq\frac{\rho_N\eul^{\rho_Nt_0}(\rho_Nt_0)^{\lfloor n^{\varepsilon}\rfloor-1}}{(\lfloor n^{\varepsilon}\rfloor-1)!}=\ord(n^{-m}) 
\end{equation} 
\end{subequations}
for any integer $m$. Estimates \eqref{34} with \eqref{340} show that assumption \eqref{L4'} in Lemma~\ref{lem:4'} is valid.\qed
\section{Estimate of some oscillatory integrals}\label{sec:5}
\subsection{Main result}\label{sec:51}

In this section we consider oscillatory integrals of the following type: 
\begin{equation}\label{51a} 
\C{J}(b,t_1,t_2,\zeta,\mu)\coloneqq\int_{t_1}^{t_2}\frac{\eul^{\ii\mu\sqrt{4\sin^2(t/2)+\zeta^2}}}{(4\sin^2(t/2)+\zeta^2)^{1/4}}\, b(t)\,\dd t,
\end{equation}
where $b\in\class^1(\D{R})$, $\mu\in\D{R}^*$, $\zeta\geq 0$ and $t_1\leq t_2$ are real numbers. Notice that in the case $\zeta=0$ the function $t\to\abs{2\sin(t/2)}^{-1/2}$ is Lebesgue integrable 
on any bounded interval of $\D{R}$ and so \eqref{51a} is still well defined. 

\begin{lemma}\label{lem:51} 
Let $\Delta_0\subset\D{R}$ be a bounded interval. Then there is a constant $C_{\Delta_0}>0$ such that for any interval $\croch{t_1,t_2}\subset\Delta_0$, $\mu\in\D{R}^*$, $\zeta\geq 0$ and $b\in\class^1(\D{R})$ one has the estimate 
\begin{equation}\label{51c}  
\abs{\C{J}(b,t_1,t_2,\zeta,\mu)}\leq C_{\Delta_0}\frac{1+\sqrt{\zeta}}{\sqrt{\abs{\mu}}}\C{M}(b,\croch{t_1,t_2}),  
\end{equation}  
where for any bounded interval $\Delta\subset\D{R}$
\begin{equation}\label{Mcal}  
\C{M}(b,\Delta)\coloneqq\sup_{t\in\Delta}\abs{b(t)}+\int_{\Delta}\abs{b'(t)}\dd t.   
\end{equation}  
\end{lemma}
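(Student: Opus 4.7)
The plan is to reduce the proof to two complementary regimes separated by the threshold $\zeta\sim\sqrt{|\mu|}$. Without loss of generality assume $\mu>0$ (the case $\mu<0$ follows by complex conjugation). When $\zeta\geq\sqrt\mu$, the crude bound $(4\sin^2(t/2)+\zeta^2)^{-1/4}\leq\zeta^{-1/2}$ combined with $\zeta^{-1/2}\leq\sqrt\zeta/\sqrt\mu$ in this regime gives $|\C{J}|\leq|\Delta_0|\,\zeta^{-1/2}\sup|b|\leq|\Delta_0|\,\sqrt\zeta/\sqrt\mu\cdot\C{M}(b,[t_1,t_2])$ directly. So it remains to establish $|\C{J}|\leq C\mu^{-1/2}\C{M}(b,[t_1,t_2])$ in the regime $\zeta<\sqrt\mu$.

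In the small-$\zeta$ regime, I would cover $\Delta_0$ by finitely many intervals $I_\alpha$ of length $<\pi$ such that each $I_\alpha$ meets $\pi\D{Z}$ in at most one point, and use a smooth partition of unity $\{\chi_\alpha\}$ subordinate to it, noting that $\C{M}(\chi_\alpha b,I_\alpha)\leq C_{\Delta_0}\C{M}(b,[t_1,t_2])$. Three subcases arise. If $I_\alpha$ contains no critical point, then $|\phi'(t)|=|\sin t|/\sqrt{f(t)}\gtrsim 1/\sqrt{4+\zeta^2}\gtrsim 1/\sqrt\mu$, and one integration by parts against $e^{i\mu\phi}$ yields the desired $O(\mu^{-1/2})$ bound. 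If $I_\alpha$ contains an odd critical point $t_*=(2m+1)\pi$, the amplitude is uniformly bounded while $|\phi''(t_*)|\gtrsim 1/\sqrt{4+\zeta^2}$, so a standard stationary-phase / van der Corput argument produces the required bound.

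The delicate subcase is an interval containing an even critical point $t_*=2m\pi$, where the amplitude is singular as $\zeta\to 0$. Here I would perform a cascade of substitutions. First $u=2\sin((t-t_*)/2)$ reduces the integral to the canonical model
\[
\int\frac{e^{\ii\mu\sqrt{u^2+\zeta^2}}}{(u^2+\zeta^2)^{1/4}}\,\tilde b(u)\,du,\qquad\tilde b(u)=\frac{(\chi_\alpha b)(t(u))}{\sqrt{1-u^2/4}},
\]
and elementary computation shows $\C{M}(\tilde b,\cdot)\leq C\,\C{M}(b,[t_1,t_2])$. Splitting into $u>0$ and $u<0$, then substituting $w=\sqrt{u^2+\zeta^2}$ and $\eta=w-\zeta$ linearizes the phase and gives
\[
e^{\ii\mu\zeta}\int_0^{\eta_{\max}}q(u(\eta))\,\tilde b(u(\eta))\,\eta^{-1/2}e^{\ii\mu\eta}\,d\eta,
\]
where $q(u)^2=\sqrt{u^2+\zeta^2}/(\zeta+\sqrt{u^2+\zeta^2})\in[\tfrac{1}{2},1]$ and $u(\eta)=\sqrt{\eta(\eta+2\zeta)}$. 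I would then integrate by parts using the primitive $V(\eta)\coloneqq\int_0^\eta s^{-1/2}e^{\ii\mu s}\,ds=\mu^{-1/2}\int_0^{\mu\eta}v^{-1/2}e^{\ii v}\,dv$, which satisfies the uniform bound $|V(\eta)|\leq C_1\mu^{-1/2}$ because the Fresnel integral is bounded. The boundary term at $\eta=0$ vanishes.

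The main obstacle is producing a bound on $\int_0^{\eta_{\max}}|(q\tilde b)'(u(\eta))|u'(\eta)\,d\eta=\int|(q\tilde b)'(u)|\,du$ that is uniform in $\zeta$. This reduces to the key estimate $\int_0^\infty|q'(u)|\,du\leq C$ independent of $\zeta$, which follows from $(q^2)'=\zeta u/[(\zeta+\sqrt{u^2+\zeta^2})^2\sqrt{u^2+\zeta^2}]\geq 0$ being integrable with $\int_0^\infty(q^2)'(u)\,du=q^2(\infty)-q^2(0)=\tfrac12$, combined with the lower bound $q\geq 1/\sqrt 2$. This uniform integrability of $q'$ is exactly what allows the whole oscillatory argument to escape any $\zeta$-dependence in the small-$\zeta$ regime and thereby deliver the bound $O(\mu^{-1/2})\C{M}(b,\cdot)$ that matches the statement.
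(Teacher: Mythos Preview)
Your proof is correct and follows a genuinely different organizational scheme from the paper's. The paper never introduces the threshold $\zeta\sim\sqrt{\mu}$; instead it splits the $t$-axis into the fundamental pieces $[0,2\pi/3]$ and $[2\pi/3,\pi]$ (extended by symmetry and periodicity). On $[2\pi/3,\pi]$ the phase has $|\phi''|$ bounded below after the rescaling $\tilde\phi=(1+\zeta)\phi$, $\tilde\mu=\mu/(1+\zeta)$, so van~der~Corput applies directly and produces the factor $\sqrt{1+\zeta}/\sqrt{\mu}$. On $[0,2\pi/3]$ the paper performs the same substitution $s=2\sin(t/2)$ that you use, and then separates the cases $\zeta\ge s_2$, $\zeta\le s_1$ (and the mixed case by additivity). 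For $\zeta\ge s_2$ it applies van~der~Corput with phase $\zeta\sqrt{s^2+\zeta^2}$ and parameter $\mu/\zeta$; for $\zeta\le s_1$ it makes the further substitution $\tilde s=(s^2+\zeta^2)^{1/4}$, which turns the phase into $\mu\tilde s^2$ so that van~der~Corput again applies. The uniformity in $\zeta$ comes from the monotonicity of the Jacobian factor $\tilde h_2(\tilde s,\zeta)=2\tilde s^2/\sqrt{\tilde s^4-\zeta^2}$, giving $\int|\partial_{\tilde s}\tilde h_2|\le 2\sqrt 2$ independently of $\zeta$.

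Your approach trades this for a partition of unity plus a different linearizing substitution $\eta=\sqrt{u^2+\zeta^2}-\zeta$ near even zeros, followed by integration by parts against the Fresnel primitive $V(\eta)=\int_0^\eta s^{-1/2}e^{i\mu s}\,ds$. The crucial uniform-in-$\zeta$ ingredient is the same in spirit: your function $q^2=\sqrt{u^2+\zeta^2}/(\zeta+\sqrt{u^2+\zeta^2})$ is monotone with $\int_0^\infty(q^2)'\,du=\tfrac12$, which plays the role of the paper's $\tilde h_2$ bound. Your large-$\zeta$ case is handled by the trivial amplitude bound rather than by oscillation, which is simpler but uses the specific form of the target inequality. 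One small remark: your sentence on the odd-critical-point case is a bit terse; to make van~der~Corput work one should note that the amplitude $f^{-1/4}\sim(4+\zeta^2)^{-1/4}$ compensates the factor $(4+\zeta^2)^{1/4}$ coming from $|\phi''|\gtrsim(4+\zeta^2)^{-1/2}$, so that the net bound is indeed $O(\mu^{-1/2})$.
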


This lemma is used in the proof of Proposition \ref{prop:4'}. It serves in Section~\ref{sec:634} to prove estimate \eqref{644a} of some oscillatory integral.

\begin{proof} 
The proof is given in the next subsections. It is based on van der Corput lemma. 
\end{proof} 

\begin{lemma}[van der Corput]\label{lem:52}
Assume that $h_0\colon(t_1,t_2)\to\D{R}$ is smooth and its second derivative satisfies $h_0''(t)\geq c_0$ for $t_1<t<t_2$ and some constant $c_0>0$. Assume also that $\mu_0\in\D{R}^*$ and $b_0\in\class^1\bigl((t_1,t_2)\bigr)$, and consider the oscillatory integral
\begin{equation}\label{vdc}
\E{J}(b_0,t_1,t_2,h_0,\mu_0)\coloneqq\int_{t_1}^{t_2}\eul^{\ii\mu_0 h_0(t)}b_0(t)\,\dd t.
\end{equation}
Then there is a constant $C_0$ depending only on $c_0$ such that we have the estimate     
\begin{equation*}\label{51e}
\left\lvert\E{J}(b_0,t_1,t_2,h_0,\mu_0)\right\rvert\leq\frac{C_0}{\sqrt{\abs{\mu_0}}}\left(\abs{b_0(t_1)}+\int_{t_1}^{t_2}\abs{b_0'(t)}\,\dd t\right)\leq\frac{C_0}{\sqrt{\abs{\mu_0}}}\,\C{M}(b_0,\croch{t_1,t_2}).
\end{equation*}
\end{lemma}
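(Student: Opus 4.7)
The plan is to prove this classical van der Corput lemma in two stages: first treat the constant-amplitude case $b_0 \equiv 1$, then extend to general $b_0 \in \class^1$ by an Abel-type integration by parts.

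In the first stage I would establish
\[
\left\lvert \int_a^b \eul^{\ii\mu_0 h_0(t)}\,\dd t \right\rvert \leq \frac{K}{\sqrt{c_0 \abs{\mu_0}}}
\]
for every subinterval $(a,b) \subset (t_1,t_2)$, with $K$ an absolute constant. This is the textbook van der Corput argument: since $h_0'' \geq c_0$ the derivative $h_0'$ is strictly increasing, so for any threshold $\delta>0$ the ``bad'' set $\{t : \abs{h_0'(t)} < \delta\}$ is an interval of length at most $2\delta/c_0$, where I simply apply the trivial $L^\infty$ bound. On its complement, which consists of at most two intervals on each of which $\abs{h_0'} \geq \delta$ and $1/h_0'$ is monotone, I rewrite $\eul^{\ii\mu_0 h_0} = (\ii\mu_0 h_0')^{-1}\,\partial_t \eul^{\ii\mu_0 h_0}$ and integrate by parts. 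Monotonicity of $1/h_0'$ makes $\int \abs{\partial_t(1/h_0')}\,\dd t$ telescope to boundary values of size $\lesssim 1/\delta$, and the endpoint terms are themselves of size $\lesssim 1/(\abs{\mu_0}\delta)$. Summing the three contributions gives a bound of order $\delta/c_0 + 1/(\abs{\mu_0}\delta)$, which optimizes at $\delta = \sqrt{c_0/\abs{\mu_0}}$ to produce the displayed inequality.

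For the second stage, I set $F(t) \coloneqq \int_t^{t_2} \eul^{\ii\mu_0 h_0(s)}\,\dd s$, so that $F(t_2) = 0$ and, by applying stage one on every subinterval $(t,t_2)$, $\sup_{t \in [t_1,t_2]} \abs{F(t)} \leq K/\sqrt{c_0\abs{\mu_0}}$. Since $F'(t) = -\eul^{\ii\mu_0 h_0(t)}$, integration by parts gives
\[
\E{J}(b_0,t_1,t_2,h_0,\mu_0) = F(t_1)\, b_0(t_1) + \int_{t_1}^{t_2} F(t)\, b_0'(t)\,\dd t,
\]
so the triangle inequality yields the first asserted bound with $C_0 = K/\sqrt{c_0}$. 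The second inequality is immediate from $\abs{b_0(t_1)} \leq \sup_{[t_1,t_2]}\abs{b_0}$, which reproduces the definition of $\C{M}(b_0,[t_1,t_2])$.

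The main obstacle, conceptually, is keeping stage one honest when $h_0'$ changes sign inside $(t_1,t_2)$: the ``good'' set then splits into two sub-intervals and the integration-by-parts boundary terms must be collected so that no spurious dependence on $\sup\abs{1/h_0'}$ survives --- all singular contributions must be absorbed into the bad-set trivial bound via the choice of $\delta$. Once this bookkeeping is done, stage two is a routine Abel manipulation.
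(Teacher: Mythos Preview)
Your proposal is correct and is precisely the standard two-stage argument (constant amplitude via splitting at $\abs{h_0'}=\delta$ and integrating by parts, then Abel summation for general $b_0$) found in Stein's \emph{Harmonic Analysis}, Section~VIII.1.2, which is exactly what the paper cites in lieu of a proof. Nothing further is needed.
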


\begin{proof} 
See \cite{St}*{Section VIII.1.2, p.~354}. 
\end{proof} 

\subsection{Proof of Lemma \ref{lem:51} in case $\BS{\Delta_0=\croch{0,2\pi/3}}$}\label{sec:52} 
\subsubsection{Change of variable}\label{sec:521} 

By our assumption $\croch{t_1,t_2}\subset\croch{0,2\pi/3}$ and the change of variable 
\begin{equation}\label{52a}
t=2\arcsin(s/2) 
\end{equation}
parametrize $\croch{0,2\pi/3}$ by $s\in\croch{0,\sqrt{3}}$. If $t\in\croch{t_1,t_2}$ then $s\in\croch{s_1,s_2}$ where $s_i\coloneqq 2\sin(t_i/2)$, $i=1,2$. In particular, $\sqrt{4\sin^2(t/2)+\zeta^2}=\sqrt{s^2+\zeta^2}$ and the change of variable \eqref{52a} gives
\begin{equation*}\label{52b} 
\C{J}(b,t_1,t_2,\zeta,\mu)=\C{J}_1(b_1,s_1,s_2,\zeta,\mu),
\end{equation*}  
where  
\begin{equation*}\label{52b'} 
\C{J}_1(b_1,s_1,s_2,\zeta,\mu)\coloneqq\int_{s_1}^{s_2}\frac{\eul^{\ii\mu\sqrt{s^2+\zeta^2}}}{(s^2+\zeta^2)^{1/4}}\,b_1(s)\,\dd s
\end{equation*}
with
\begin{equation*}\label{52a'}
b_1(s)\coloneqq\frac{b(2\arcsin(s/2))}{\sqrt{1-s^2/4}}\,.
\end{equation*} 
Since there is a constant $C_0$ such that $\C{M}(b_1,\croch{s_1,s_2})\leq C_0\C{M}(b,\croch{t_1,t_2})$, to get \eqref{51c} for any interval $\croch{t_1,t_2}\subset\croch{0,2\pi/3}$, $\mu\in\D{R}^*$, $\zeta\geq 0$, and $b\in\class^1(\croch{t_1,t_2})$ it suffices to prove the following

\begin{statement*}
There is a constant $C>0$ such that for any $\mu\in\D{R}^*$, $\zeta\geq 0$, $\croch{s_1,s_2}\subset\croch{0,\sqrt{3}}$, and $b_1\in\class^1(\croch{s_1,s_2})$ we have the estimate
\begin{equation}\label{52c} 
\abs{\C{J}_1(b_1,s_1,s_2,\zeta,\mu)}\leq\frac{C}{\sqrt{\abs{\mu}}}\,\C{M}(b_1,\croch{s_1,s_2}).  
\end{equation}
\end{statement*}

To prove this statement we distinguish three cases: $\zeta\geq s_2$, $\zeta\leq s_1$, and $s_1\leq\zeta\leq s_2$.

\subsubsection{Proof of \eqref{52c} in case $\zeta\geq s_2$}\label{sec:522} 

Using the notations 
\begin{equation*}\label{52d}
h_1(s,\zeta)\coloneqq\zeta\sqrt{\zeta^2+s^2}  
\end{equation*} 
and \eqref{vdc} from Lemma \ref{lem:52} we can write  
\begin{align*} 
\C{J}_1(b_1,s_1,s_2,\zeta,\mu)&=\sqrt{\zeta}\int_{s_1}^{s_2}\eul^{\ii\mu\zeta^{-1}h_1(s,\zeta)}h_1(s,\zeta)^{-1/2}b_1(s)\,\dd s\\
&=\sqrt{\zeta}\times\E{J}\bigl(b_1\tilde h_1(\,\cdot\,,\zeta),s_1,s_2,h_1(\,\cdot\,,\zeta),\mu\zeta^{-1}\bigr)
\end{align*}
where
\begin{equation*}\label{52e'}
\tilde h_1(s,\zeta)\coloneqq h_1(s,\zeta)^{-1/2}=\frac{1}{\sqrt{\zeta}(\zeta^2+s^2)^{1/4}}\,.  
\end{equation*} 
Next we observe that $\zeta\geq s$ ensures 
\[
\partial^2_sh_1(s,\zeta)=\frac{\zeta^3}{(s^2+\zeta^2)^{3/2}}\geq\frac{\zeta^3}{(2\zeta^2)^{3/2}}=2^{-3/2}
\] 
and thus we can apply Lemma \ref{lem:52}. It gives the estimate   
\begin{equation*}\label{52e}  
\abs{\C{J}_1(b_1,s_1,s_2,\zeta,\mu)}\leq\frac{C_0\zeta}{\sqrt{\abs{\mu}}}\C{M}\bigl(b_1\tilde h_1(\,\cdot\,,\zeta),\croch{s_1,s_2}\bigr).  
\end{equation*}
Thus to get \eqref{52c} it suffices to show 
\begin{equation*}\label{52f}  
\C{M}\bigl(b_1\tilde h_1(\,\cdot\,,\zeta),\croch{s_1,s_2}\bigr)\leq 2\frac{\C{M}(b_1,\croch{s_1,s_2})}{\zeta}\,. 
\end{equation*} 
For this purpose we first observe that $h_1(s,\zeta)\coloneqq\zeta\sqrt{\zeta^2+s^2}\geq\zeta^2$. Then 
\begin{equation}\label{52g}
0<\tilde h_1(s,\zeta)\leq\frac{1}{\zeta}\,, 
\end{equation} 
hence 
\begin{equation*}\label{52z}
\sup_{s_1\leq s\leq s_2}\abs{b_1(s)\tilde h_1(s,\zeta)}\leq\sup_{s_1\leq s\leq s_2}\frac{\abs{b_1(s)}}{\zeta}\leq\frac{\C{M}(b_1,\croch{s_1,s_2})}{\zeta}\,. 
\end{equation*}
Next we claim that 
\begin{equation}\label{52g'}  
\int_{s_1}^{s_2}\abs{\partial_s\tilde h_1(s,\zeta)}\,\dd s\leq\frac{1}{\zeta}\,.
\end{equation}
Indeed, since $\partial_s\tilde h_1(s,\zeta)\leq 0$, we can estimate the left hand side of \eqref{52g'} using \eqref{52g}:
\[
\int_{s_1}^{s_2}(-\partial_s\tilde h_1)(s,\zeta)\,\dd s=\tilde h_1(s_1,\zeta)-\tilde h_1(s_2,\zeta)\leq\frac{1}{\zeta}\,.
\]
Finally it remains to show 
\begin{equation}\label{52h}  
\int_{s_1}^{s_2}\abs{\partial_s(b_1\tilde h_1)(s,\zeta)}\,\dd s\leq\frac{\C{M}(b_1,\croch{s_1,s_2})}{\zeta}\,.
\end{equation} 
The left hand side of \eqref{52h} is indeed $\int_{s_1}^{s_2}\abs{b_1'(s)\tilde h_1(s,\zeta)+b_1(s)\partial_s\tilde h_1(s,\zeta)}\,\dd s$. Thus, using \eqref{52g} and \eqref{52g'} we get 
\begin{align*}  
\int_{s_1}^{s_2}\abs{\partial_s(b_1\tilde h_1)(s,\zeta)}\,\dd s&\leq\sup_{s_1\leq s\leq s_2}\abs{\tilde h_1(s,\zeta)}\int_{s_1}^{s_2}\abs{b_1'(s)}\,\dd s+\sup_{s_1\leq s\leq s_2}\abs{b_1(s)}\int_{s_1}^{s_2}\abs{\partial_s\tilde h_1(s,\zeta)}\,\dd s\\
&\leq\frac{1}{\zeta}\Bigl(\,\sup_{s_1\leq s\leq s_2}\abs{b_1(s)}+\int_{s_1}^{s_2}\abs{b_1'(s)}\,\dd s\Bigr)=\frac{\C{M}(b_1,\croch{s_1,s_2})}{\zeta}\,.  
\end{align*} 

\subsubsection{Proof of \eqref{52c} in case $\zeta\leq s_1$}\label{sec:523} 

We denote $\tilde s_i\coloneqq(s_i^2+\zeta^2)^{1/4}$, $i=1,2$ and consider the change of variable 
\begin{equation*}\label{53a} 
s=\sqrt{\tilde s^4-\zeta^2}\text{ for }\tilde s\in\croch{\tilde s_1,\tilde s_2}
\end{equation*}  
which gives $\sqrt{s^2+\zeta^2}=\tilde s^2$. By applying this change of variable to the integral
\[
\C{J}_1(b_1,s_1,s_2,\zeta,\mu)=\int_{s_1}^{s_2}\eul^{\ii\mu\sqrt{s^2+\zeta^2}}(s^2+\zeta^2)^{-1/4}b_1(s)\,\dd s
\]
we find
\[
\C{J}_1(b_1,s_1,s_2,\zeta,\mu)=\int_{\tilde s_1}^{\tilde s_2}\eul^{\ii\mu\tilde s^2}\,b_2(\tilde s,\zeta)\tilde h_2(\tilde s,\zeta)\,\dd\tilde s=\E{J}\bigl(b_2\tilde h_2(\,\cdot\,,\zeta),\tilde s_1,\tilde s_2,h_0,\mu\bigr),
\]
with $b_2(\tilde s,\zeta)\coloneqq b_1(\sqrt{\tilde s^4-\zeta^2})$, $h_0(\tilde s)=\tilde s^2$ and
\begin{equation*}\label{53c'}
\tilde h_2(\tilde s,\zeta)\coloneqq\frac{1}{\tilde s}\,\partial_{\tilde s}(\sqrt{\tilde s^4-\zeta^2})=\frac{2\tilde s^2}{\sqrt{\tilde s^4-\zeta^2}}.  
\end{equation*}     
Then Lemma \ref{lem:52} applies and gives
\begin{equation}\label{53c} 
\abs{\C{J}_1(b_1,s_1,s_2,\zeta,\mu)}\leq\frac{C_0}{\sqrt{\abs{\mu}}}\C{M}(b_2\tilde h_2,\croch{\tilde s_1,\tilde s_2}).  
\end{equation}
We observe that $s\geq s_1\geq\zeta$ ensures $\tilde s\geq(2\zeta^2)^{1/4}$, hence $\zeta\leq\tilde s^2/\sqrt{2}$. Using the fact that $\tilde h_2(s,\zeta)$ is increasing with $\zeta$ we have the estimate        
\begin{equation}\label{53e}
s\geq s_1\geq\zeta\implies\tilde h_2(\tilde s,\zeta)\leq\tilde h_2(\tilde s,\tilde s^2/\sqrt{2})=2\sqrt{2}.    
\end{equation}        
Moreover, $\partial_{\tilde s}\tilde h_2(\tilde s,\zeta)=-4\zeta^2\tilde s(\tilde s^4-\zeta^2)^{-3/2}\leq 0$ and as in the proof of estimate \eqref{52g'} we find
\begin{equation}\label{53f}
\int_{\tilde s_1}^{\tilde s_2}\abs{\partial_{\tilde s}\tilde h_2(\tilde s,\zeta)}\,\dd\tilde s=\tilde h_2(\tilde s_1,\zeta)-\tilde h_2(\tilde s_2,\zeta)\leq 2\sqrt{2}.
\end{equation}
Estimate \eqref{53c} leads to the desired estimate \eqref{52c} if we prove\begin{equation*}\label{53g}  
\C{M}(b_2\tilde h_2,\croch{\tilde s_1,\tilde s_2})\leq 4\sqrt{2}\C{M}(b_2,\croch{\tilde s_1,\tilde s_2})=4\sqrt{2}\C{M}(b_1,\croch{s_1,s_2}). 
\end{equation*} 
The last equality is easy. For the inequality we first observe that by \eqref{53e} 
\[
\sup_{\tilde s_1\leq\tilde s\leq\tilde s_2}\abs{b_2\tilde h_2(\tilde s,\zeta)}\leq 2\sqrt{2}\sup_{\tilde s_1\leq\tilde s\leq\tilde s_2}\abs{b_2(\tilde s,\zeta)}\leq 2\sqrt{2}\C{M}(b_2,\croch{\tilde s_1,\tilde s_2}).
\]
Moreover, using \eqref{53e} and \eqref{53f} we find
\begin{align*}  
\int_{\tilde s_1}^{\tilde s_2}\abs{\partial_{\tilde s}(b_2\tilde h_2)(\tilde s,\zeta)}\,\dd\tilde s&\leq\sup_{\tilde s_1\leq\tilde s\leq\tilde s_2}\abs{\tilde h_2(\tilde s,\zeta)}\,\int_{\tilde s_1}^{\tilde s_2}\abs{\partial_{\tilde s}b_2(\tilde s,\zeta)}\,\dd\tilde s+\sup_{\tilde s_1<\tilde s<\tilde s_2}\abs{b_2(\tilde s,\zeta)}\,\int_{\tilde s_1}^{\tilde s_2}\abs{\partial_{\tilde s}\tilde h_2(\tilde s,\zeta)}\,\dd\tilde s\\
&\leq 2\sqrt{2}\int_{\tilde s_1}^{\tilde s_2}\abs{\partial_{\tilde s}b_2(\tilde s,\zeta)}\,\dd\tilde s+2\sqrt{2}\sup_{\tilde s_1<\tilde s<\tilde s_2}\abs{b_2(\tilde s,\zeta)}\\
&=2\sqrt{2}\C{M}(b_2,\croch{\tilde s_1,\tilde s_2}).
\end{align*} 

\subsubsection{Proof of \eqref{52c} in case $s_1<\zeta<s_2$}\label{sec:524} 

This case reduces to the previous ones. For $s_1<\zeta<s_2$ we indeed have 
\begin{align*}
&\C{J}_1(b_1,s_1,s_2,\zeta,\mu)=\C{J}_1(b_1,s_1,\zeta,\zeta,\mu)+\C{J}_1 (b_1,\zeta,s_2,\zeta,\mu),\\ 
&\C{M}(b_1,\croch{s_1,\zeta})+\C{M}(b_1,\croch{\zeta,s_2})\leq 2\C{M}(b_1,\croch{s_1,s_2}).
\end{align*} 

\subsection{Proof of Lemma \ref{lem:51} in case $\BS{\Delta_0=\croch{2\pi/3,\pi}}$}\label{sec:53} 

Denoting  
\[
h(t,\zeta)\coloneqq\sqrt{4\sin^2(t/2)+\zeta^2}
\]
we can write
\begin{align*}
\C{J}_1(b,t_1,t_2,\zeta,\mu)&\coloneqq\int_{t_1}^{t_2}\frac{\eul^{\ii\mu\sqrt{4\sin^2(t/2)+\zeta^2}}}{(4\sin^2(t/2)+\zeta^2)^{1/4}}\,b(t)\,\dd t=\int_{t_1}^{t_2}\eul^{\ii\mu h(t,\zeta)}\frac{b(t)}{h(t,\zeta)^{1/2}}\,\dd t\\
&=\int_{t_1}^{t_2}\eul^{\ii\frac{\mu}{1+\zeta}(1+\zeta)h(t,\zeta)}\,\tilde b(t,\zeta)\,\dd t=\E{J}\bigl(\tilde b,t_1,t_2,\tilde h,\tilde\mu\bigr),
\end{align*}
where $\tilde b(t,\zeta)\coloneqq b(t)h(t,\zeta)^{-1/2}$, $\tilde\mu\coloneqq\mu/(1+\zeta)$, $\tilde h(t,\zeta)\coloneqq(1+\zeta)h(t,\zeta)$, and $\E{J}$ is as in \eqref{vdc}. We get
\[
\partial_th(t,\zeta)=\frac{\sin t}{h(t,\zeta)}\,,\quad -\partial_t^2h(t,\zeta)=-\frac{\cos t}{h(t,\zeta)}+\frac{\sin^2t}{h(t,\zeta)^3}.
\]
Since $2\pi/3\leq t\leq\pi$ implies $1/2\leq-\cos t\leq 1$, we get  
\[   
2\pi/3\leq t\leq\pi\implies-\partial_t^2\tilde h(t,\zeta)\geq-\frac{(1+\zeta)\cos t}{h(t,\zeta)}\geq\frac{1+\zeta}{2\sqrt{4+\zeta^2}}\geq\frac{1}{4}\,. 
\]
Hence, Lemma \ref{lem:52} applies to estimate $\E{J}\bigl(\tilde b,t_1,t_2,\tilde h,\tilde\mu\bigr)$. To get \eqref{51c} it suffices to show that there exists a constant $\tilde C$, independent of $\zeta$, $t_1$, $t_2$ such that
\[
\C{M}(\tilde b,\croch{t_1,t_2})\leq\tilde C\C{M}(b,\croch{t_1,t_2}).
\]
For $2\pi/3\leq t\leq\pi$ we have $2\sin(t/2)\geq\sqrt{3}$ so that $h(t,\zeta)\geq\sqrt{3+\zeta^2}\geq\sqrt{3}$. Hence,
\[
\sup_{t_1\leq t\leq t_2}\abs{\tilde b(t,\zeta)}=\sup_{t_1\leq t\leq t_2}\frac{\abs{b(t)}}{\sqrt{h(t,\zeta)}}\leq\frac{1}{\sqrt[4]{3}}\sup_{t_1\leq t\leq t_2}\abs{b(t)}.
\]
Moreover,
\[
\abs{\partial_t\tilde b(t,\zeta)}=\left\lvert\frac{b'(t)}{h(t,\zeta)^{1/2}}-\frac{b(t)\sin t}{2h(t,\zeta)^{5/2}}\right\rvert\leq\frac{1}{\sqrt[4]{3}}\abs{b'(t)}+\frac{1}{2\sqrt[4]{3^5}}\abs{b(t)}\leq\frac{1}{\sqrt[4]{3}}(\abs{b'(t)}+\abs{b(t)}).
\]
Thus, using $t_2-t_1\leq\pi/3$,
\[
\int_{t_1}^{t_2}\abs{\partial_t\tilde b(t,\zeta)}\,\dd t\leq\frac{1}{\sqrt[4]{3}}\int_{t_1}^{t_2}\abs{b'(t)}\,\dd t+\frac{\pi}{3\sqrt[4]{3}}\sup_{t_1\leq t\leq t_2}\abs{b(t)}.
\]
Summarizing,
\[
\C{M}(\tilde b,\croch{t_1,t_2})\leq\frac{1}{\sqrt[4]{3}}\Bigl(1+\frac{\pi}{3}\Bigr)\C{M}(b,\croch{t_1,t_2}).
\]

\subsection{Proof of Lemma \ref{lem:51}: last steps}\label{sec:54} 
\subsubsection{Proof of \eqref{51c} in case $\Delta_0=\croch{0,\pi}$}\label{sec:541} 

By Sections \ref{sec:52} and \ref{sec:53} we know that \eqref{51c} holds if $2\pi/3\notin(t_1,t_2)$. The remaining case $t_1<2\pi/3<t_2$ can be deduced from the previous ones by using the additivity properties
\begin{subequations} \label{54} 
\begin{align}\label{54'}
&\C{J}(b,t_1,t_2,\zeta,\mu)=\C{J}(b,t_1,t_*,\zeta,\mu)+\C{J}(b,t_*,t_2,\zeta,\mu),\\ 
\label{54''}
&\C{M}(b,\croch{t_1,t_*})+\C{M}(b,\croch{t_*,t_2})\leq 2\C{M}(b,\croch{t_1,t_2}) 
\end{align}
\end{subequations}
with $t_*=2\pi/3$.

\subsubsection{Proof of \eqref{51c} in case $\Delta_0=\croch{-\pi,0}$}\label{sec:542} 

This case reduces to the previous one by using the symmetry $t\leadsto -t$. We indeed have $\C{J}(b,t_1,t_2,\zeta,\mu)=\C{J}(\check{b},-t_2,-t_1,\zeta,\mu)$ and $\C{M}(b,\croch{t_1,t_2})=\C{M}(\check{b},\croch{-t_2,-t_1})$ where $\check{b}(t)=b(-t)$.

\subsubsection{Proof of \eqref{51c} in case $\Delta_0=\croch{-\pi,\pi}$}\label{sec:543} 

By Sections \ref{sec:541} and \ref{sec:542} we know that \eqref{51c} holds if $0\notin(t_1,t_2)$. The remaining case $0\in(t_1,t_2)$ can be deduced from the two previous ones by using \eqref{54} with $t_*=0$.

\subsubsection{Proof of \eqref{51c} for $\Delta_0=\croch{(2k-1)\pi,(2k+1)\pi}$, $k\in\D{Z}$}\label{sec:544}
 
This case reduces to the previous one by translation $t\mapsto\hat t\coloneqq t-2k\pi$. We indeed have $\C{J}(b,t_1,t_2,\zeta,\mu)=\C{J}(\hat b,\hat t_1,\hat t_2,\zeta,\mu)$ and $\C{M}(b,\croch{t_1,t_2})=\C{M}(\hat b,\croch{\hat t_1,\hat t_2})$ where $\hat b(t)\coloneqq b(t+2k\pi)$.

\subsubsection{Proof of \eqref{51c} for arbitrary $\Delta_0$}\label{sec:545}

We know \eqref{51c} holds if $(t_1,t_2)\cap(2\D{Z}+1)\pi=\emptyset$. If $(t_1,t_2)\cap(2\D{Z}+1)\pi=\accol{t_*^1,t_*^2,\dots,t_*^k}$ where $t_*^2=t_*^1+2\pi,\dots,t_*^k=t_*^1+2(k-1)\pi$, $k\in\D{Z}$ then \eqref{51c} follows by using repeatedly properties \eqref{54}:
\begin{align*}
&\C{J}(b,t_1,t_2,\zeta,\mu)=\C{J}(b,t_1,t_*^1,\zeta,\mu)+\C{J}(b,t_*^1,t_*^2,\zeta,\mu)+\dots+\C{J}(b,t_*^k,t_2,\zeta,\mu),\\ 
&\C{M}(b,\croch{t_1,t_*^1})+\C{M}(b,\croch{t_*^1,t_*^2})+\dots+\C{M}(b,\croch{t_*^k,t_2})\leq(k+1)\C{M}(b,\croch{t_1,t_2}).\tag*{\qed}
\end{align*}
 
\section{Approximation by oscillatory integrals} \label{sec:6} 
\subsection{Decomposition of $\BS{H_n(\ut)}$ into components $\BS{H_n^{\uw,\ut}}$}\label{sec:60}

We assume $\alpha_0=0$ in \eqref{vk} and denote $\Omega^*=\accol{2\pi m/N}_{m=1}^{N-1}$. Thus, we can expand 
\[ 
v(\Lambda)=\sum_{\omega\in\Omega^*}c_{\omega}\eul^{\ii\omega\Lambda}    
\] 
where $c_{\omega}\in\D{C}$ are constants. Since $H_n(t)\coloneqq\eul^{-\ii tL_{0,n}}\tilde V_n\eul^{\ii tL_{0,n}}$ with $\tilde V_n=\eul^{\ii B_n}(\theta_{n,n}^2v)(\Lambda)\eul^{-\ii B_n}$ we can expand $H_n(t)$ as follows:
\[ 
H_n(t)=\eul^{-\ii tL_{0,n}}\eul^{\ii B_n}(\theta_{n,n}^2v)(\Lambda)\eul^{-\ii B_n}\eul^{\ii tL_{0,n}}=\sum_{\omega\in\Omega^*}c_{\omega}H_n^{\omega,t}     
\] 
with 
\[ 
H_n^{\omega,t}\coloneqq\eul^{-\ii tL_{0,n}}\eul^{\ii B_n} \theta_{n,n}^2(\Lambda)\eul^{\ii\omega\Lambda}\eul^{-\ii B_n}\eul^{\ii tL_{0,n}}. 
\]  
More generally, for any integer $\nu\geq 1$, $\uw=(\omega_1,\dots,\omega_{\nu})\in(\Omega^*)^{\nu}$, and $\ut=(t_1,\dots,t_{\nu})\in\D{R}^{\nu}$,
\begin{equation} \label{601}
H_n(\ut)\coloneqq H_n(t_1)\dots H_n(t_{\nu})=\sum_{\uw\in(\Omega^*)^{\nu}}c_{\uw}H_n^{\uw,\ut}
\end{equation}
where $c_{\uw}\coloneqq c_{\omega_1}\dots c_{\omega_{\nu}}$ and
\begin{equation}   \label{60}
H_n^{\uw,\ut}\coloneqq H_n^{\,\omega_1,t_1}\dots H_n^{\,\omega_{\nu},t_{\nu}}.
\end{equation} 
Let $g_{\nu,n,j}(\ut)\coloneqq\ii^{\nu}H_n(\ut)(j,j)$ be as in \eqref{gnu}. Using \eqref{601} we can expand $g_{\nu,n,j}(\ut)$ as follows:
\begin{subequations}\label{61}
\begin{equation}\label{61a}
g_{\nu,n,j}(\ut)=\ii^{\nu}\sum_{\uw\in(\Omega^*)^{\nu}}\!c_{\uw}\,g_n^{\uw,\ut}(j),
\end{equation} 
where
\begin{equation}\label{61a'}
g_n^{\uw,\ut}(j)\coloneqq H_n^{\uw,\ut}(j,j).
\end{equation} 
\end{subequations}
\subsection{Approximation of $\BS{g_n^{\uw,\ut}(j)}$ by an oscillatory integral}\label{sec:61} 

For $j\in\croch{n-n^{\gamma},n+n^{\gamma}}$ we approximate the $j$-th diagonal entry $g_n^{\uw,\ut}(j)$ of $H_n^{\uw,\ut}$ by an oscillatory integral of type
\begin{equation}\label{61b}
\F{g}_n^{\uw,\ut}(j)\coloneqq\eul^{\ii j\abs{\uw}_1}\int_0^{2\pi}\eul^{\ii\psi_{n,1}^{\uw,\ut\,}(\nk)}b_n^{\,\uw,\ut}(j,\nk)\,\frac{\dd\eta}{2\pi}\,. 
\end{equation} 
Here, $\abs{(\omega_1,\dots,\omega_{\nu})}_1\coloneqq\omega_1+\dots+\omega_{\nu}$, the phase $\psi_{n,1}^{\uw,\ut\,}\colon\cercle\to\D{R}$ is defined in the next subsection, and $b_n^{\,\uw,\ut}(j,\,\cdot\,)\colon\cercle\to\D{C}$ is chosen as indicated in Lemma \ref{lem:61} below.

\subsubsection{Definition of $\psi_{n,1}^{\uw,\ut\,}$}\label{sec:611} 

For $(\uw,\ut)\in(\Omega^*)^{\nu}\times\D{R}^{\nu}$ the phase $\psi_{n,1}^{\uw,\ut\,}$ is given by
\begin{equation}\label{L61a}
\psi_{n,1}^{\uw,\ut}(\ek)\coloneqq 2a(n)\,\Im\bigl(z(\uw;\ut)\ek \bigr),
\end{equation} 
where $z(\uw;\ut)$ will be defined by induction on $\nu$.

We first assume $\nu=1$. For $\omega\in\Omega$, $t\in\D{R}$ we define
\begin{equation}\label{61c}
z(\omega;t)\coloneqq\bigl(\eul^{-\ii{\omega}}-1\bigr)\eul^{-\ii t}=-2\sin\tfrac{\omega}{2}\,\eul^{\ii\pi/2-\ii\omega/2-\ii t}. 
\end{equation}
Thus the definition of $\psi_{n,1}^{\omega,t}\colon\cercle\to\D{R}$ is as in \cite{BZ5}*{(9.10b)}: 
\begin{equation}\label{61cc}
\psi_{n,1}^{\omega,t}(\ek)\coloneqq 2a(n)\,\Im\bigl(z(\omega;t)\ek \bigr)=-4a(n)\sin\tfrac{\omega}{2}\,\cos(\xi-t-\tfrac{\omega}{2}).
\end{equation}
Moreover, if $\tau_{\omega}\colon\cercle\to\cercle$ is the translation $\ek\mapsto\eul^{\ii(\xi-\omega)}$ and $\psi_{n,1}^{-\omega,t}\coloneqq\psi_{n,1}^{2\pi-\omega,t}$ we have the relation
\begin{equation}  \label{61cd}
\psi_{n,1}^{\omega,t}=-\psi_{n,1}^{-\omega,t}\circ\tau_{\omega}.
\end{equation}

Assuming now $\nu\geq 2$ and using induction with respect to $\nu$,  we define
\begin{equation}\label{61c'}
z(\uw;\ut)=z(\uw';\ut')\eul^{-\ii\omega_{\nu}}+z(\omega_{\nu};t_{\nu}),  
\end{equation}         
where 
\begin{equation}\label{61d}
\begin{split} 
\uw&=(\uw',\omega_{\nu})\in(\Omega^*)^{\nu}=(\Omega^*)^{\nu-1}\times\Omega^*,\\ 
\ut&=(\ut',t_{\nu})\in\D{R}^{\nu}=\D{R}^{\nu-1}\times\D{R}.
\end{split}
\end{equation}  
By \eqref{61c'} and \eqref{61cd} we observe that $\psi_{n,1}^{\uw,\ut}$ can also be defined by induction as in \cite{BZ5}*{(10.7) and (10.6)}:
\begin{equation}\label{L61a'}
\psi_{n,1}^{\uw,\ut}=\psi_{n,1}^{\uw',\ut'}\circ\tau_{\omega_{\nu}}+\psi_{n,1}^{\omega_{\nu},t_{\nu}}=(\psi_{n,1}^{\uw',\ut'}-\psi_{n,1}^{-\omega_{\nu},t_{\nu}})\circ\tau_{\omega_{\nu}}, 
\end{equation}       

\subsubsection{Approximation of $g_n^{\uw,\ut}(j)$ by $\F{g}_n^{\uw,\ut}(j)$}\label{sec:612} 

\begin{lemma}\label{lem:61} 
Let $t_0>0$ and $0<\varepsilon<\frac{1}{8}$ be fixed. Let $\F{g}_n^{\uw,\ut}(j)$ be the oscillatory integral defined by \eqref{61b} for $\abs{j-n}\leq n^{\gamma}$ with $\psi_{n,1}^{\uw,\ut}$ as in \eqref{L61a}. If $\hat C=\hat C(t_0,\varepsilon)>0$ is large enough, then for any $\nu\in\D{N}^*$, $(\uw,\ut)\in(\Omega^*)^{\nu}\times\croch{-t_0,t_0}^{\nu}$, $n\geq\nu^{\frac{1}{\varepsilon}}$, and $j\in\croch{n-n^{\gamma},n+n^{\gamma}}$ we can choose $b_n^{\,\uw,\ut}(j,\,\cdot\,)\colon\cercle\to\D{C}$ in \eqref{61b} such that $\F{g}_n^{\uw,\ut}(j)$ satisfies
\begin{equation}\label{L61d}
\sup_{\abs{j-n}\leq n^{\gamma}}\abs{g_n^{\uw,\ut}(j)-\F{g}_n^{\uw,\ut}(j)}\leq\hat Cn^{-\gamma+4\varepsilon},
\end{equation}
with
\begin{subequations}\label{L61}
\begin{align}\label{L61f}
&\abs{b_n^{\,\uw,\ut}(j,\nk)}=1,\\   
\label{L61c}
&\norm{b_n^{\,\uw,\ut}(j,\,\cdot\,)}_{\class^2(\cercle)}\leq\hat Cn^{4\varepsilon},\\   
\label{L61e}
&\norm{\partial_{t_{\nu}}b_n^{\,\uw,\ut}(j,\,\cdot\,)}_{\class^0(\cercle)}\leq\hat Cn^{\varepsilon},\quad\nu\geq 2.  
\end{align}
\end{subequations}
\end{lemma}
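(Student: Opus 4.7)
The plan is to construct $b_n^{\,\uw,\ut}(j,\cdot)$ by induction on $\nu$, following the Fourier integral approximation of the Bogoliubov conjugation $\eul^{\pm\ii B_n}\cdot\eul^{\mp\ii B_n}$ developed in \cite{BZ5}*{Section~10}, and then to verify the three regularity properties \eqref{L61f}--\eqref{L61e} together with the error estimate \eqref{L61d} by propagating them through the induction.

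For the base case $\nu=1$, I start from
\[
H_n^{\omega,t} = \eul^{-\ii tL_{0,n}}\eul^{\ii B_n}\theta_{n,n}^2(\Lambda)\eul^{\ii\omega\Lambda}\eul^{-\ii B_n}\eul^{\ii tL_{0,n}}
\]
and use $\eul^{\ii\omega\Lambda}\vece_j=\eul^{\ii j\omega}\vece_j$ together with $\eul^{\pm\ii tL_{0,n}}\vece_j=\eul^{\pm\ii t\,l_n(j)}\vece_j$ to pull out the prefactor $\eul^{\ii j\omega}=\eul^{\ii j\abs{\uw}_1}$. The remaining diagonal (or off-diagonal) matrix element of $\eul^{\ii B_n}\theta_{n,n}^2(\Lambda)\eul^{-\ii B_n}$ is then approximated by an oscillatory integral on $\cercle$ with phase $\psi_{n,1}^{\omega,t}$ as in \eqref{61cc}, using the symbol calculus of \cite{BZ5}*{Section~10}; its amplitude is a product of unit-modulus phases, which initializes \eqref{L61f}.

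For the inductive step I write
\[
H_n^{\uw,\ut}(j,j)=\sum_{k\in\D{Z}}H_n^{\uw',\ut'}(j,k)\,H_n^{\omega_{\nu},t_{\nu}}(k,j)
\]
with $(\uw',\ut')$ as in \eqref{61d}, apply the inductive oscillatory-integral representation (in its off-diagonal version) to each factor, perform the summation in $k$, and make the change of variable $\eta\mapsto\eta-\omega_{\nu}$. The two integrals collapse into a single one whose phase is $(\psi_{n,1}^{\uw',\ut'}-\psi_{n,1}^{-\omega_{\nu},t_{\nu}})\circ\tau_{\omega_{\nu}}=\psi_{n,1}^{\uw,\ut}$ by \eqref{L61a'}, while the extra factor $\eul^{\ii j\omega_{\nu}}$ combines with the inductive prefactor to give $\eul^{\ii j\abs{\uw}_1}$. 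The new amplitude is, up to a unit-modulus factor, a product of amplitudes from previous steps composed with $\tau_{\omega_{\nu}}$, so \eqref{L61f} is preserved.

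For the quantitative bounds, each $\xi$-derivative of the amplitude falls either on a $\theta_0$-cutoff (gaining $n^{-1}$), on an $a_n(\Lambda)$-factor (gaining $n^{\gamma-1}$), or on an exponential absorbed into the amplitude whose phase has size proportional to $a(n)\abs{z(\uw;\ut)}$. Since $\abs{z(\uw;\ut)}\le\nu$ by \eqref{61c'}, and the assumption $n\ge\nu^{1/\varepsilon}$ gives $\nu\le n^{\varepsilon}$, two derivatives produce a loss of at most $n^{4\varepsilon}$, which yields \eqref{L61c}. For \eqref{L61e}, only the final factor $H_n^{\omega_{\nu},t_{\nu}}$ depends on $t_{\nu}$, and its derivative contributes a single $n^{\varepsilon}$ via $\abs{\partial_{t_{\nu}}z(\omega_{\nu};t_{\nu})}\le C$. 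Finally, \eqref{L61d} is obtained by summing the $\nu$ remainder contributions arising at each inductive step (each of order $n^{-\gamma+3\varepsilon}$ from the symbolic calculus of \cite{BZ5}*{Section~10}) and again using $\nu\le n^{\varepsilon}$. The main obstacle is this combinatorial control: one must show that the remainder in the inductive step admits a bound uniform in $(\uw,\ut)$ rather than one depending on the individual $z(\uw';\ut')$, so that all $\nu$-dependent losses collapse into the single factor $n^{4\varepsilon}$ allowed by the hypothesis $n\ge\nu^{1/\varepsilon}$.
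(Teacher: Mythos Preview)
Your overall strategy---build $b_n^{\,\uw,\ut}$ inductively via the Fourier integral approximation of \cite{BZ5}*{Section~10} and propagate the estimates---matches the paper's, and your treatment of \eqref{L61d}, \eqref{L61f} and \eqref{L61c} is essentially correct. There is, however, a genuine gap in your proof of \eqref{L61e}.

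Your claim that ``only the final factor $H_n^{\omega_{\nu},t_{\nu}}$ depends on $t_{\nu}$, and its derivative contributes a single $n^{\varepsilon}$ via $\abs{\partial_{t_{\nu}}z(\omega_{\nu};t_{\nu})}\leq C$'' does not hold at the level of the amplitude. In the actual construction the amplitude is $b_n^{\,\uw,\ut}=\exp\bigl(\ii\psi_{n,2}^{\uw,\ut}+\ii(j-n)\varphi_{n,1}^{\uw,\ut}\bigr)$, and the inductive formula for $\psi_{n,2}^{\uw,\ut}$ contains the term $\psi_{n,1}^{\uw',\ut'}\circ\vartheta_n^{\omega_{\nu},t_{\nu}}$, where $\vartheta_n^{\omega_{\nu},t_{\nu}}$ is a $t_{\nu}$-dependent change of variable on $\cercle$. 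The factor $\psi_{n,1}^{\uw',\ut'}$ is independent of $t_{\nu}$ but is \emph{large}, with $\norm{\psi_{n,1}^{\uw',\ut'}}_{\class^1(\cercle)}=\ord(\nu n^{\gamma})$. Differentiating in $t_{\nu}$ via the chain rule therefore produces $(\psi_{n,1}^{\uw',\ut'})^{(1)}\cdot\partial_{t_{\nu}}\vartheta_n^{\omega_{\nu},t_{\nu}}$, and it is only the compensating smallness $\abs{\partial_{t_{\nu}}\vartheta_n^{\omega_{\nu},t_{\nu}}}=\ord(n^{-\gamma})$ (this is Lemma~\ref{lem:81} in the paper, proved separately) that brings the product down to $\ord(\nu)\leq\ord(n^{\varepsilon})$. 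Without identifying this cancellation mechanism, a naive bound on $\partial_{t_{\nu}}b_n^{\,\uw,\ut}$ would be $\ord(\nu n^{\gamma})$, which is useless. Your one-line argument misses this entirely.

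A secondary point: your description of the inductive step---summing matrix elements over an intermediate index $k$ and having ``the two integrals collapse into a single one''---is not how the construction works. The composition in \cite{BZ5}*{Section~10} is carried out at the operator/symbol level (Fourier integral operators on $\cercle$), which is precisely what produces both the phase recursion \eqref{L61a'} and the nontrivial amplitude recursion involving $\vartheta_n^{\omega,t}$. A direct matrix-element sum over $k$ would leave you with a double integral on $\cercle\times\cercle$; reducing it to a single integral requires an additional stationary-phase step you have not supplied, and in the process the $\vartheta_n^{\omega,t}$ change of variable would reappear anyway.
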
 

\begin{proof} 
See Section \ref{sec:72}, in particular Section \ref{sec:71} and \eqref{71''} for the actual choice of $b_n^{\,\uw,\ut}(j,\,\cdot\,)$.
\end{proof} 

\subsection{Properties of $\BS{z(\uw;\ut)}$}\label{sec:62}

Let $z(\uw;\ut)$, $\uw\in(\Omega^*)^{\nu}$, $\ut\in\D{R}^{\nu}$ be as in Section \ref{sec:611}. We write 
\begin{equation}\label{62}
z(\uw;\ut)=\abs{z(\uw;\ut)}\eul^{\ii\alpha(\uw;\ut)}, 
\end{equation} 
where $0\leq\alpha(\uw;\ut)<2\pi$ is the argument of $z(\uw;\ut)$. If $\nu\geq 2$ we write $\uw=(\uw',\omega_{\nu})\in(\Omega^*)^{\nu-1}\times\Omega^*$, $\ut=(\ut',t_{\nu})\in\D{R}^{\nu-1}\times\D{R}$ as in \eqref{61d} and define  
\begin{align}\label{62'} 
\hat z(\uw;\ut')&\coloneqq\frac{z(\uw';\ut')}{2\sin(\omega_{\nu}/2)},\\   
\label{62''}
\hat\alpha(\uw;\ut')&\coloneqq\alpha(\uw';\ut')-\omega_{\nu}/2-\pi/2.    
\end{align} 
For the next lemma we also introduce the function $\F{h}\colon\D{R}\times\D{C}\to\D{R}_+$ by
\begin{equation}\label{L62a}
\F{h}(t,z)\coloneqq\sqrt{4\abs{z}\sin^2(t/2)+(1-\abs{z})^2}.
\end{equation} 

\begin{lemma}\label{lem:63} 
We assume $\nu\geq 2$. Then for $\uw=(\uw',\omega_{\nu})\in(\Omega^*)^{\nu-1}\times\Omega^*$ and $\ut=(\ut',t_{\nu})\in\D{R}^{\nu-1}\times\D{R}$ we have the relation   
\begin{equation}\label{L62b}
\abs{z(\uw;\ut)}=2\sin(\omega_{\nu}/2)\,\F{h}(t_{\nu}+\hat\alpha(\uw;\ut'),\hat z(\uw;\ut')),  
\end{equation} 
where $\F{h}$ is given by \eqref{L62a}, $\hat z(\uw;\ut')$ by \eqref{62'}, and $\hat\alpha(\uw;\ut')$ by \eqref{62''}. Moreover, 
\begin{equation}\label{L62d}
\abs{z(\uw;\ut)}\geq\frac{2\sin(\omega_{\nu}/2)}{\pi}\abs{t_{\nu}+\hat\alpha(\uw;\ut')}_{2\pi} 
\end{equation}
where $\abs{t}_{2\pi}\coloneqq\dist(t,2\pi\D{Z})$ and   
\begin{equation}\label{L62c}
\left\lvert\partial_{t_{\nu}}\eul^{\ii\alpha(\uw;\,\ut)}\right\rvert\leq\frac{6}{\abs{z(\uw;\ut)}}.
\end{equation} 
\end{lemma}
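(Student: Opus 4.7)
The plan is to prove the three assertions by direct computation exploiting the recursion $z(\uw;\ut) = z(\uw';\ut')\eul^{-\ii\omega_{\nu}} + z(\omega_{\nu};t_{\nu})$ from \eqref{61c'} and the explicit form $z(\omega_{\nu};t_{\nu}) = -2\sin(\omega_{\nu}/2)\,\eul^{\ii(\pi/2-\omega_{\nu}/2-t_{\nu})}$ from \eqref{61c}. For \eqref{L62b}, I would expand
\[
\abs{z(\uw;\ut)}^2 = \abs{z(\uw';\ut')}^2 + \abs{z(\omega_{\nu};t_{\nu})}^2 + 2\Re\bigl(z(\uw';\ut')\,\eul^{-\ii\omega_{\nu}}\,\overline{z(\omega_{\nu};t_{\nu})}\bigr).
\]
Using $z(\uw';\ut') = 2\sin(\omega_{\nu}/2)\,\abs{\hat z(\uw;\ut')}\,\eul^{\ii\alpha(\uw';\ut')}$ (from \eqref{62} and \eqref{62'}), the phase of the cross term telescopes to $\alpha(\uw';\ut') - \omega_{\nu}/2 - \pi/2 + t_{\nu} = \hat\alpha(\uw;\ut') + t_{\nu}$ by definition \eqref{62''}. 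Factoring out $4\sin^2(\omega_{\nu}/2)$ leaves the bracket $\abs{\hat z}^2 + 1 - 2\abs{\hat z}\cos\theta$ with $\theta \coloneqq t_{\nu} + \hat\alpha(\uw;\ut')$, and the identity $1-\cos\theta = 2\sin^2(\theta/2)$ rewrites this as $(1-\abs{\hat z})^2 + 4\abs{\hat z}\sin^2(\theta/2) = \F{h}(\theta,\hat z(\uw;\ut'))^2$, which is \eqref{L62b}.

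For \eqref{L62d}, I minimize $g(y) \coloneqq 4sy + (1-y)^2$ over $y \geq 0$, where $s \coloneqq \sin^2(\theta/2)$ and $y$ plays the role of $\abs{\hat z(\uw;\ut')}$. The critical point $y^* = 1-2s = \cos\theta$ lies in $[0,\infty)$ iff $s \leq 1/2$, giving $g(y^*) = 4s(1-s) = \sin^2\theta$; if $s > 1/2$ the minimum is $g(0) = 1$. In both cases $g(\abs{\hat z}) \geq \abs{\theta}_{2\pi}^2/\pi^2$: when $s \leq 1/2$ (i.e.\ $\abs{\theta}_{2\pi} \leq \pi/2$) this follows from Jordan's inequality $\sin x \geq 2x/\pi$ on $[0,\pi/2]$ applied to $x = \abs{\theta}_{2\pi}/2$, together with $\sin^2\theta = \sin^2\abs{\theta}_{2\pi}$; when $s > 1/2$ (i.e.\ $\abs{\theta}_{2\pi} > \pi/2$) it is immediate from $\abs{\theta}_{2\pi} \leq \pi$. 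Taking the square root in \eqref{L62b} then yields the desired lower bound. The main delicacy here is that $\abs{\hat z(\uw;\ut')}$ is not constrained a priori to $[0,1]$, but since $g$ is a convex quadratic in $y$ its minimum over the whole half-line $y \geq 0$ bounds it for every admissible value.

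Finally, for \eqref{L62c}, the key observation is that only $z(\omega_{\nu};t_{\nu})$ depends on $t_{\nu}$ in the recursion, so $\partial_{t_{\nu}} z(\uw;\ut) = -\ii\,z(\omega_{\nu};t_{\nu})$ and $\abs{\partial_{t_{\nu}} z(\uw;\ut)} = 2\sin(\omega_{\nu}/2) \leq 2$. Writing $\eul^{\ii\alpha(\uw;\ut)} = z/\abs{z}$ and using $\abs{\partial_{t_{\nu}}\abs{z}} \leq \abs{\partial_{t_{\nu}}z}$, the quotient rule gives $\abs{\partial_{t_{\nu}}\eul^{\ii\alpha}} \leq 2\abs{\partial_{t_{\nu}}z}/\abs{z} \leq 4/\abs{z} \leq 6/\abs{z}$, completing the proof.
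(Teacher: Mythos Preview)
Your proof is correct, and for \eqref{L62b} it is essentially the same computation as the paper's (the paper factors out a unimodular phase to write $z(\uw;\ut)=2\sin(\omega_{\nu}/2)\bigl(\abs{\hat z}-\eul^{-\ii\theta}\bigr)\eul^{\ii(\alpha(\uw';\ut')-\omega_{\nu})}$ and then recognizes $\bigl\lvert\abs{\hat z}-\eul^{-\ii\theta}\bigr\rvert=\F{h}(\theta,\hat z)$, which is just your law-of-cosines expansion in disguise).

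The two other parts differ in interesting ways. For \eqref{L62d} the paper avoids your minimization entirely: from $4\abs{\hat z}\sin^2(\theta/2)+(1-\abs{\hat z})^2\geq(1+\abs{\hat z})^2\sin^2(\theta/2)$ one gets $\F{h}(\theta,\hat z)\geq(1+\abs{\hat z})\,\abs{\sin(\theta/2)}\geq\abs{\sin(\theta/2)}$, and then Jordan's inequality at $\abs{\theta}_{2\pi}/2\in[0,\pi/2]$ gives $\abs{\sin(\theta/2)}=\sin(\abs{\theta}_{2\pi}/2)\geq\abs{\theta}_{2\pi}/\pi$ directly. This is shorter than your quadratic minimization, though your route of course also works (your application of Jordan at $x=\abs{\theta}_{2\pi}/2$ is slightly indirect---applying it at $x=\abs{\theta}_{2\pi}\in[0,\pi/2]$ would give $\sin\abs{\theta}_{2\pi}\geq 2\abs{\theta}_{2\pi}/\pi$ immediately). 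Conversely, for \eqref{L62c} your argument is cleaner than the paper's: you use the general fact $\bigl\lvert\partial_{t_{\nu}}\abs{z}\bigr\rvert\leq\abs{\partial_{t_{\nu}}z}$ to get the constant $4$, whereas the paper differentiates $\F{h}$ explicitly and only obtains $\bigl\lvert\partial_{t_{\nu}}\abs{z}\bigr\rvert\leq 4$, ending up with the constant $6$.
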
 

\begin{proof} 
The proof consists of four steps. 
\begin{stepone*}
We first claim that for $t\in\D{R}$ and $z\in\D{C}$
\begin{equation} \label{62a'}  
\left\lvert\abs{z}-\eul^{-\ii t}\right\rvert=\F{h}(t,z).\end{equation}   
Indeed, the left hand side of \eqref{62a'} is 
\begin{equation*} \label{62a''} 
\sqrt{(\abs{z}-\cos t)^2+\sin^2t}=\sqrt{2\abs{z}(1-\cos t)+(1-\abs{z})^2} 
\end{equation*}  
which is $\F{h}(t,z)$ defined by \eqref{L62a} due to $1-\cos t=2\sin^2(t/2)$.
\end{stepone*}    

\begin{steptwo*}
Now we will show \eqref{L62b}. Combining \eqref{61c}, \eqref{61c'} with \eqref{62} we obtain  
\begin{equation} \label{62aa} 
z(\uw;\ut)=\abs{z(\uw;\ut')}\eul^{\ii\alpha(\uw';\ut')-\ii\omega_{\nu}}-2\sin(\omega_{\nu}/2)\,\eul^{-\ii(t_{\nu}+\omega_{\nu}/2-\pi/2)}. 
\end{equation}  
Using \eqref{62'} and \eqref{62''} in the right hand side of \eqref{62aa} we find 
\[ 
z(\uw;\ut)=2\sin(\omega_{\nu}/2)\,\left(\abs{\hat z(\uw;\ut')}-\eul^{-\ii(t_{\nu}+\hat\alpha(\uw;\ut'))}\right)\eul^{\ii\alpha(\uw';\ut')-\ii\omega_{\nu}},  
\]
hence 
\begin{equation} \label{62a}  
\abs{z(\uw;\ut)}=2\sin(\omega_{\nu}/2)\,\left\lvert\abs{\hat z(\uw;\ut')}-\eul^{-\ii(t_{\nu}+\hat\alpha(\uw;\ut'))}\right\rvert.  
\end{equation}    
Denoting $z\coloneqq\hat z(\uw;\ut')$, $t\coloneqq t_{\nu}+\hat\alpha(\uw;\ut')$ and using \eqref{62a'} we can express the right hand side of \eqref{62a} in the form   
\[ 
2\sin(\omega_{\nu}/2)\left\lvert\abs{z}-\eul^{-\ii t}\right\rvert= 2\sin(\omega_{\nu}/2)\,\F{h}(t,z),  
\] 
which completes the proof of \eqref{L62b}. 
\end{steptwo*}

\begin{stepthree*}
Here we will show \eqref{L62d}. Combining 
\begin{equation} \label{h}
\F{h}(t,z)\geq\sqrt{(4\abs{z}+(1-\abs{z})^2)\sin^2(t/2)}=(1+\abs{z})\abs{\sin(t/2)} 
\end{equation}  
with \eqref{L62b} we obtain 
\begin{equation*} \label{hh}
\abs{z(\uw;\ut)}\geq 2\sin(\omega_{\nu}/2)\,\left\lvert\sin\tfrac{1}{2}(t_{\nu}+\hat\alpha(\uw;\ut'))\right\rvert.   
\end{equation*}  
Thus \eqref{L62d} follows from $\abs{\sin(t/2)}\geq\frac{1}{\pi}\abs{t}_{2\pi}$. 
\end{stepthree*}  

\begin{stepfour*}
Finally we show \eqref{L62c}. We begin by writing
\begin{equation*} \label{alpha'}
\partial_{t_{\nu}}\frac{z(\uw;\ut)}{\abs{z(\uw;\ut)}}=\frac{1}{\abs{z(\uw;\ut)}}\left(\partial_{t_{\nu}}z(\uw;\ut)-\frac{z(\uw;\ut)}{\abs{z(\uw;\ut)}}\partial_{t_{\nu}}\abs{z(\uw;\ut)}\right).
\end{equation*} 
Since 
$\abs{\partial_{t_{\nu}}z(\uw;\ut)}=\abs{\partial_{t_{\nu}} z(\omega_{\nu};t_{\nu})}=2\sin(\omega_{\nu}/2)\leq 2$, the proof of \eqref{L62c} is completed if we show 
\begin{equation} \label{62b}
\left\lvert\partial_{t_{\nu}}\abs{z(\uw;\ut)}\right\rvert\leq 4.
\end{equation}
However using \eqref{L62b} we obtain  
\[
\partial_{t_{\nu}}\abs{z(\uw;\ut)}=2\sin(\omega_{\nu}/2)\,\partial_{t_{\nu}}\F{h}(t_{\nu}+\hat\alpha(\uw;\ut'),\hat z(\uw;\ut'))
\] 
Then using \eqref{h} we have   
\[
\abs{\partial_t\F{h}(t,z)}=\frac{\abs{z}\,\abs{\sin t}}{\F{h}(t,z)}\leq\frac{\abs{z}\,\abs{\sin t}}{(1+\abs{z})\abs{\sin(t/2)}}
\] 
and \eqref{62b} results from the inequality $\abs{\sin t}\leq 2\left\lvert\sin\frac{t}{2}\right\rvert$.\qedhere
\end{stepfour*}
\end{proof}  

\section{Proof of Proposition \ref{prop:4'}.}\label{sec:63} 
\subsection{First reductions}\label{sec:6'1}

In this section $0<\varepsilon<\frac{1}{8}$ and $t_0>0$ are fixed. As in \eqref{60} we denote by $H_n^{\uw,\ut}$ the operator $H_n^{\,\omega_1,t_1}\!\dots H_n^{\,\omega_{\nu},t_{\nu}}$ for $(\uw,\ut)\in(\Omega^*)^{\nu}\times\D{R}^{\nu}$ and by $g_n^{\uw,\ut}(j)$ its $j$-th diagonal entry.

We first observe that instead of \eqref{pr32} it suffices to prove the estimates 
\begin{subequations}  \label{pr32'}
\begin{alignat}{2} \label{pr32c'}  
\left\lvert\int_{\Delta_1}g_n^{\uw,\ut}(j)\,\dd t_2\right\rvert&\leq Cn^{-\gamma+5\varepsilon}&\quad&(\nu=2),\\
\label{pr32d'}
\int_{\Delta_2}\left\lvert\int_{\Delta_1}g_n^{\uw,\ut}(j)\,\dd t_{\nu}\right\rvert\dd t_{\nu-1}&\leq Cn^{-\gamma+5\varepsilon}&&(3\leq\nu\leq n^{\varepsilon})   
\end{alignat}
\end{subequations}
for any intervals $\Delta_1,\,\Delta_2\subset\croch{-t_0,t_0}$, $(\uw,\ut)\in(\Omega^*)^{\nu}\times\croch{-t_0,t_0}^{\nu}$, and $j\in\croch{n-n^{\gamma},n+n^{\gamma}}$.

Further on, $\F{g}_n^{\uw,\ut}(j)$ is given by \eqref{61b} with $b_n^{\,\uw,\ut}(j,\,\cdot\,)$ as in Lemma \ref{lem:61}, $\abs{j-n}\leq n^{\gamma}$, $(\uw,\ut)\in(\Omega^*)^{\nu}\times\croch{-t_0,t_0}^{\nu}$, and $2\leq\nu\leq n^{\varepsilon}$. Then due to Lemma \ref{lem:61}, instead of proving estimates \eqref{pr32'} it suffices to prove these ones:
\begin{subequations}  \label{pr32''}
\begin{alignat}{2} \label{pr32c''} 
\left\lvert\int_{\Delta_1}\F{g}_n^{\uw,\ut}(j)\,\dd t_2\right\rvert&\leq Cn^{-\gamma+5\varepsilon}&\quad&(\nu=2),\\
\label{pr32d''} 
\int_{\Delta_2}\left\lvert\int_{\Delta_1}\F{g}_n^{\uw,\ut}(j)\,\dd t_{\nu}\right\rvert\dd t_{\nu-1}&\leq Cn^{-\gamma+5\varepsilon}&&(3\leq\nu\leq n^{\varepsilon}).
\end{alignat}
\end{subequations} 
To estimate $\int_{\Delta_1}\F{g}_n^{\uw,\ut}(j)\,\dd t_{\nu}$ it suffices to consider the following two cases:
\begin{enumerate}[1)]
\item
$\Delta_1\subset\croch{-n^{-\gamma},n^{-\gamma}}+2\pi\D{Z}-\hat\alpha(\uw;\ut')$,
\item
$\Delta_1\subset\croch{n^{-\gamma},2\pi-n^{-\gamma}}+2\pi\D{Z}-\hat\alpha(\uw;\ut')$,
\end{enumerate}
where $\hat\alpha(\uw;\ut')\coloneqq\alpha(\uw';\ut')-\omega_{\nu}/2-\pi/2$ as in \eqref{62''}.

\subsection{Case 1)}\label{sec:6'2}

We assume $\nu\geq 2$. The definition of $H_n^{\uw,\ut}$ shows that its diagonal entries satisfy $\abs{g_n^{\uw,\ut}(j)}\leq 1$. Then, since $\abs{j-n}\leq n^{\gamma}$, estimate \eqref{L61d} from Lemma \ref{lem:61} applies and gives
\begin{equation} \label{630a} 
\abs{\F{g}_n^{\uw,\ut}(j)}\leq C_1n^{4\varepsilon}
\end{equation} 
for some constant $C_1>0$. Since $\Delta_1$ is a subinterval of $\croch{-t_0,t_0}$ satisfying 
\begin{equation} \label{63aa'} 
\Delta_1\subset\croch{-n^{-\gamma},n^{-\gamma}}+2\pi\D{Z}-\hat\alpha(\uw;\ut'),
\end{equation}  
its length satisfies $\abs{\Delta_1}\leq 2n^{-\gamma}$ and using \eqref{630a} we get the estimate
\begin{equation} \label{63aa} 
\left\lvert\int_{\Delta_1}\F{g}_n^{\uw,\ut}(j)\,\dd t_{\nu}\right\rvert\leq C_0n^{-\gamma+4\varepsilon}   
\end{equation}    
for some constant $C_0>0$. Thus, in case 1) the proof of \eqref{pr32''} is completed.
\subsection{Case 2)}\label{sec:6'3}

We now assume $\Delta_1$ is a subinterval of $\croch{-t_0,t_0}$ satisfying 
\begin{equation} \label{63aa''} 
\Delta_1\subset\croch{n^{-\gamma},2\pi-n^{-\gamma}}+2\pi\D{Z}-\hat\alpha(\uw;\ut').
\end{equation}   
If $t_{\nu}\in\Delta_1$, then $\abs{t_{\nu}+\hat\alpha(\uw;\ut')}_{2\pi}\geq n^{-\gamma}$ and $\abs{z(\uw;\ut)}\geq(2/\pi)\sin(\pi/N)n^{-\gamma}$ by estimate \eqref{L62d} from Lemma \ref{lem:63}. In particular, $z(\uw;\ut)\neq 0$. 

\subsubsection{Use of the stationary phase formula} \label{sec:6'4}

Writing $z(\uw;\ut)=\abs{z(\uw;\ut)}\eul^{\ii\alpha(\uw;\ut)}$ in \eqref{L61a} we find
\[
\psi_{n,1}^{\uw,\ut}(\ek)=2a(n)\abs{z(\uw;\ut)}\sin(\xi+\alpha(\uw;\ut)).
\]
Using this expression in \eqref{61b} we get  
\begin{equation} \label{63bb}
\F{g}_n^{\uw,\ut}(j)=\eul^{\ii j\abs{\uw}_1}\int_0^{2\pi}\eul^{2\ii a(n)\abs{z(\uw;\ut)}\sin(\eta+\alpha(\uw;\ut))}b_n^{\,\uw,\ut}(j,\nk)\,\frac{\dd\eta}{2\pi}\,.    
\end{equation} 
This oscillatory integral is of the type $\E{I}(b,\mu,\eta_0)$ considered in Lemma \ref{lem:41} for $b=b_n^{\,\uw,\ut}(j,\,\cdot\,)$, $\mu=2a(n)\abs{z(\uw;\ut)}$, and $\eta_0=\pi/2-\alpha(\uw;\ut)$. Since $z(\uw;\ut)\neq 0$ the stationary phase formula \eqref{eq:stat.phase} applies:    
\begin{subequations}\label{63cr}
\begin{equation} \label{63c}
\eul^{-\ii j\abs{\uw}_1}\F{g}_n^{\uw,\ut}(j)=\sum_{\kappa=\pm 1}\frac{\eul^{\ii\kappa(2a(n)\abs{z(\uw;\ut)}-\pi/4)}}{2\sqrt{\pi a(n)\abs{z(\uw;\ut)}}}\,b_n^{\,\uw,\ut}(j,\kappa\ii\eul^{-\ii\alpha(\uw;\ut)})+\F{r}_n^{\,\uw,\ut}(j) 
\end{equation}  
with  
\begin{equation} \label{63c'}
\abs{\F{r}_n^{\,\uw,\ut}(j)}\leq\frac{C_0}{a(n)\abs{z(\uw;\ut)}} 
\norm{b_n^{\,\uw,\ut}(j,\,\cdot\,)}_{\class^2(\cercle)}. 
\end{equation}
\end{subequations} 
Integrating \eqref{63c} we find 
\begin{equation} \label{63cc}
\eul^{-\ii j\abs{\uw}_1}\int_{\Delta_1}\F{g}_n^{\uw,\ut}(j)\,\dd t_{\nu}=\eul^{-\ii\pi/4}\C{J}_{n,j,+}^{\uw,\ut'}(\Delta_1)+\eul^{\ii\pi/4}\C{J}_{n,j,-}^{\uw,\ut'}(\Delta_1)+\int_{\Delta_1}\F{r}_n^{\,\uw,\ut}(j)\,\dd t_{\nu}  
\end{equation}   
where
\begin{equation} \label{64d}
\C{J}_{n,j,\pm}^{\uw,\ut'}(\Delta_1)\coloneqq\int_{\Delta_1}\frac{\eul^{\pm 2\ii a(n)\abs{z(\uw;\ut)}}}{2\sqrt{\pi a(n)\abs{z(\uw;\ut)}}}\,b_n^{\,\uw,\ut}(j,\pm\ii\eul^{-\ii\alpha(\uw;\ut)})\,\dd t_{\nu}.  
\end{equation}  

\subsubsection{Remainder estimate}  \label{sec:6'45}

We claim that the remainder in \eqref{63cc} can be estimated by
\begin{equation} \label{64e}
\left\lvert\int_{\Delta_1}\F{r}_n^{\,\uw,\ut}(j)\,\dd t_{\nu}\right\rvert\leq Cn^{-\gamma+5\varepsilon}.    
\end{equation} 
Indeed,  (H1a), \eqref{L62d} and \eqref{L61c} allow us to derive from estimate \eqref{63c'} of $\abs{\F{r}_n^{\,\uw,\ut}(j)}$ this new one  
\[ 
\abs{\F{r}_n^{\,\uw,\ut}(j)}\leq\frac{Cn^{4\varepsilon}}{n^{\gamma}\abs{t_{\nu}+\hat\alpha(\uw;\ut')}_{2\pi}},
\]  
and to get \eqref{64e} it suffices to observe that   
\[ 
\int_{\Delta_1}\frac{\dd t_{\nu}}{\abs{t_{\nu}+{\hat\alpha}(\uw;\ut')}_{2\pi}}\leq 2\int_{n^{-\gamma}}^\pi\frac{\dd t}{t}=\ord(\ln n).
\] 
To complete the proof of \eqref{pr32''} in case 2) we will prove the following estimates: 
\begin{subequations}  \label{pr32cd}
\begin{alignat}{2} \label{pr32c'''} 
\left\lvert\C{J}_{n,j,\pm}^{\uw,\ut'}(\Delta_1)\right\rvert&\leq Cn^{-\gamma+2\varepsilon}&\quad&(\nu=2),\\
\label{pr32d'''} 
\int_{\Delta_2}\left\lvert\C{J}_{n,j,\pm}^{\uw,\ut'}(\Delta_1)\right\rvert\dd t_{\nu-1}&\leq Cn^{-\gamma+2\varepsilon}&&(3\leq\nu\leq n^{\varepsilon}). 
\end{alignat}  
\end{subequations}
According to \eqref{63cc} these estimates together with the remainder estimate \eqref{64e} actually imply \eqref{pr32''} in case 2).
\subsubsection{Transformation of $\C{J}_{n,j,\pm}^{\uw,\ut'}(\Delta_1)$}  \label{sec:6'5}

If the new variable $t=t_{\nu}+\hat\alpha(\uw;\ut')$ is introduced, then \eqref{L62b} becomes $\abs{z(\uw;\ut)}=2\sin(\omega_{\nu}/2)\,\F{h}(t,\hat z(\uw;\ut'))$ and \eqref{64d} takes the form
\begin{subequations}  \label{630}
\begin{equation} \label{63e}
\C{J}_{n,j,\pm}^{\uw,\ut'}(\Delta_1)=\int_{\Delta_1+\hat\alpha(\uw;\ut')}\frac{\eul^{\pm \ii\tilde\mu_n(\omega_{\nu})\F{h}(t,\hat z(\uw;\ut'))}}{\sqrt{2\pi\tilde\mu_n(\omega_{\nu})\F{h}(t,\hat z(\uw;\ut'))}}\,b_{n,j,\pm}^{\,\uw,\ut'}(t)\,\dd t 
\end{equation}   
where  
\begin{align} \label{63f}
\tilde\mu_n(\omega_{\nu})&\coloneqq 4a(n)\sin(\omega_{\nu}/2),\\ 
\label{63j}
b_{n,j,\pm}^{\,\uw,\ut'}(t)&\coloneqq b_n^{\,\uw,(\ut',t-\hat\alpha(\uw;\ut'))}\bigl(j,\pm\ii\eul^{-\ii\alpha(\uw;(\ut',t-\hat\alpha(\uw;\ut'))}\bigr).
\end{align}
\end{subequations}
Our goal is to get \eqref{pr32cd}. We first observe that in the case $\nu=2$ we have $\abs{z(\uw';\ut')}=\abs{z(\omega_1;t_1)}=2\sin(\omega_1/2)>0$ for $\omega_1\in\Omega^*$, hence $\hat z(\uw;\ut')\neq 0$. In the case $\nu\geq 3$ we have to estimate the integral $\int_{\Delta_2}\abs{\C{J}_{n,j,\pm}^{\uw,\ut'}(\Delta_1)}\dd t_{\nu-1}$ and we observe that in this integral we can forget the $t_{\nu-1}$ such that $\hat z(\uw;\ut')=0$, or equivalently, $z(\uw';\ut')=0$ because there are only finitely many. By \eqref{L62d} they indeed satisfy $t_{\nu-1}\in 2\pi\D{Z}-\hat\alpha(\uw';\ut'')$ where $\ut'=(\ut'',t_{\nu-1})$. So, we henceforth assume $\hat z(\uw;\ut')\neq 0$ when estimating $\C{J}_{n,j,\pm}^{\uw,\ut'}(\Delta_1)$.

Our next step is to write \eqref{63e} as an integral of the type considered in Section \ref{sec:5}. For this purpose we denote by $\zeta\equiv\zeta(\hat z)$ the nonnegative number associated to $\hat z\in\D{C}^*$ by 
\begin{equation} \label{zeta} 
\zeta\coloneqq\left\lvert\abs{\hat z}^{-1/2}-\abs{\hat z}^{1/2}\right\rvert.
\end{equation}   
Since $(1-\abs{\hat z})^2=\zeta^2\abs{\hat z}$, we can write $\F{h}(t,\hat z)$, $\hat z\neq 0$ as follows:  
\begin{equation} \label{zeta'} 
\F{h}(t,\hat z)\coloneqq\sqrt{4\abs{\hat z}\sin^2(t/2)+(1-\abs{\hat z})^2}=\sqrt{\abs{\hat z}}\,\sqrt{4\sin^2(t/2)+\zeta^2}.   
\end{equation} 
Then for $\mu>0$, $\zeta\geq 0$, and $\Delta_1'$ a bounded interval we introduce the integral 
\begin{equation} \label{63} 
\tilde{\C{J}}_{n,j,\pm}^{\uw,\ut'}(\zeta,\mu,\Delta_1')\coloneqq\frac{1}{\sqrt{2\pi\mu}}\int_{\Delta_1'}\frac{\eul^{\pm\ii\mu\sqrt{4\sin^2(t/2)+\zeta^2}}}{(4\sin^2(t/2)+\zeta^2)^{1/4}}\,b_{n,j,\pm}^{\,\uw,\ut'}(t)\,\dd t.
\end{equation}
Due to \eqref{zeta'} we have 
\begin{subequations} \label{63t}
\begin{equation} \label{63g}
\C{J}_{n,j,\pm}^{\uw,\ut'}(\Delta_1)=\tilde{\C{J}}_{n,j,\pm}^{\uw,\ut'}(\zeta,\mu,\Delta_1')
\end{equation} 
where $\Delta_1'=\Delta_1+\hat\alpha(\uw;\ut')$ and
\begin{align} \label{63h} 
&\zeta\equiv\zeta(\uw;\ut')\coloneqq\left\lvert\abs{\hat z(\uw;\ut')}^{1/2}-\abs{\hat z(\uw;\ut')}^{-1/2}\right\rvert\\ 
\label{63h'}   
&\mu\equiv\mu_n(\uw;\ut')\coloneqq\tilde\mu_n(\omega_{\nu})\,\abs{\hat z(\uw;\ut')}^{1/2}  
\end{align}
\end{subequations} 
Thus it remains to investigate $\tilde{\C{J}}_{n,j,\pm}^{\uw,\ut'}(\zeta,\mu,\Delta_1')$ with $\zeta$ and $\mu$ as in \eqref{63h}, and \eqref{63h'}, and 
\begin{equation} \label{643f}
\Delta_1'\subset\croch{n^{-\gamma},2\pi-n^{-\gamma}}+2\pi\D{Z}.
\end{equation}  
\subsubsection{Estimate of $\tilde{\C{J}}_{n,j,\pm}^{\uw,\ut'}(\zeta,\mu,\Delta_1')$}  \label{sec:634}
   
In this section $\zeta$ and $\mu$ are given by \eqref{63h} and \eqref{63h'} and $\Delta_1'$ is an interval satisfying \eqref{643f}. We claim that there is a constant $C>0$ such that  
\begin{equation} \label{644a}
\abs{\tilde{\C{J}}_{n,j,\pm}^{\uw,\ut'}(\zeta,\mu,\Delta_1')}\leq C\frac{n^{2\varepsilon}}{a(n)}\,\left(1+\abs{\hat z(\uw;\ut')}^{-3/4}\right).
\end{equation} 
For this purpose we first observe that $\tilde{\C{J}}_{n,j,\pm}^{\uw,\ut'}(\zeta,\mu,\Delta_1')$ coincides with the oscillatory integral $\C{J}(b,t_1,t_2,\zeta,\pm\mu)$ defined in \eqref{51a} for $b=\frac{1}{\sqrt{2\pi\mu}}\,b_{n,j,\pm}^{\,\uw,\ut'}(t)$ and $\Delta_1'=\croch{t_1,t_2}$. Since $\mu=\mu_n(\uw;\ut')>0$, Lemma \ref{lem:51} applies and gives
\begin{equation}  \label{644b}
\abs{\tilde{\C{J}}^{\uw,\ut'}_{n,j,\pm}(\zeta,\mu,\Delta_1')}\leq C\,\frac{1+\zeta^{1/2}}{\mu}\,\C{M}(b_{n,j,\pm}^{\,\uw,\ut'},\Delta_1').
\end{equation} 
Since 
\[
\frac{1+\zeta^{1/2}}{\mu}\leq\frac{1+\abs{\hat z(\uw;\ut')}^{1/4}+\abs{\hat z(\uw;\ut')}^{-1/4}}{4a(n)\sin(\omega_{\nu}/2)\abs{\hat z(\uw;\ut')}^{1/2}}\leq\frac{1+\abs{\hat z(\uw;\ut')}^{-3/4}}{2a(n)\sin(\omega_{\nu}/2)} 
\]
we deduce \eqref{644a} from \eqref{644b} if we show the estimate 
\begin{equation} 
\C{M}(b_{n,j,\pm}^{\,\uw,\ut'},\Delta_1')\leq Cn^{\varepsilon}\ln n. 
\end{equation} 
However, using definition \eqref{63j} we see that relation \eqref{L61f} from Lemma \ref{lem:61} gives
\[
\abs{b_{n,j,\pm}^{\,\uw,\ut'}(t)}=1.
\]
Moreover, by estimate \eqref{L61e} from Lemma \ref{lem:61} and estimates \eqref{L62c} and \eqref{L62d} from Lemma \ref{lem:63} we have
\[
\abs{\partial_tb_{n,j,\pm}^{\,\uw,\ut'}(t)}\leq\norm{\partial_{t_{\nu}}b_n^{\,\uw,\ut}(j,\,\cdot\,)}_{\class^0(\cercle)}\norm{\partial_{t_{\nu}}\eul^{-\ii\alpha(\uw;\ut)}}_{\class^0(\cercle)}\leq\hat Cn^{\varepsilon}\frac{6}{\abs{z(\uw;\ut)}}\leq C'n^{\varepsilon}\left(1+\frac{1}{\abs{t}_{2\pi}}\right).
\]
Thus $\C{M}(b_{n,j,\pm}^{\,\uw,\ut'},\Delta_1')\coloneqq\sup_{\Delta_1'}\abs{b_{n,j,\pm}^{\,\uw,\ut'}(t)}+\int_{\Delta_1'}\abs{\partial_tb_{n,j,\pm}^{\,\uw,\ut'}(t)}\dd t$ can be estimated by 
\[ 
\C{M}(b_{n,j,\pm}^{\,\uw,\ut'},\Delta_1')\leq 1+C'n^{\varepsilon}\left(2\pi+\int_{\Delta_1'}\frac{\dd t}{\abs{t}_{2\pi}}\right)=1+2C'n^{\varepsilon}\left(\pi+\int_{n^{-\gamma}}^{\pi}\frac{\dd t}{t}\right)=\ord(n^{\varepsilon}\ln n).  
\] 
\subsubsection{End of the proof of estimates \eqref{pr32''} in case 2)}
  
Due to \eqref{63t}, \eqref{644a}, \eqref{62'} and (H1a), there is a constant $C>0$ such that   
\begin{equation}  \label{65}
\abs{\C{J}_{n,j,\pm}^{\uw,\ut'}(\Delta_1)}\leq Cn^{-\gamma+2\varepsilon}\,\left(1+\abs{z(\uw';\ut')}^{-3/4}\right).
\end{equation}

If $\nu=2$, then $\uw'=\omega_1\in\Omega^*$ and $\abs{z(\uw';\ut')}=2\sin(\omega_1/2)\geq 2\sin(\pi/N)>0$ and it is clear that \eqref{65} implies \eqref{pr32c'''}. We then get \eqref{pr32c''} using \eqref{pr32c'''} and \eqref{64e}. This proves assertion (a) in Proposition \ref{prop:4'}.
 
If $\nu\geq 3$, then $\uw'=(\uw'',\omega_{\nu-1})\in(\Omega^*)^{\nu-2}\times\Omega^*$ and $\ut'=(\ut'',t_{\nu-1})\in\D{R}^{\nu-2}\times\D{R}$. If 
\[
\hat\alpha(\uw',\ut'')=\alpha(\uw'',\ut'')-\omega_{\nu-1}/2-\pi/2
\] 
then $t_{\nu-1}+\hat\alpha(\uw';\ut'')\notin 2\pi\D{Z}$ implies $\hat z(\uw;\ut')\neq 0$ and
\[ 
\C{J}_{n,j,\pm}^{\uw,\ut'}(\Delta_1)\leq C'n^{-\gamma+2\varepsilon}\left(1+\abs{t_{\nu-1}+\hat\alpha(\uw',\ut'')}_{2\pi}^{-3/4}\right)
\]
and therefore
\[
\int_{\Delta_2}\C{J}_{n,j,\pm}^{\uw,\ut'}(\Delta_1)\,\dd t_{\nu-1}\leq C'n^{-\gamma+2\varepsilon}\int_{\Delta_2-\hat\alpha(\uw',\ut'')}(1+\abs{t}_{2\pi}^{-3/4})\,\dd t. 
\] 
Thus, \eqref{pr32d'''} follows from the fact that the function $t\to 1+\abs{t}_{2\pi}^{-3/4}$ is locally Lebesgue integrable on $\D{R}$. We then obtain \eqref{pr32d''} using \eqref{pr32d'''} and \eqref{64e}. Assertion (b) in Proposition \ref{prop:4'} is proven.\qed
\section{Proofs of Lemmas \ref{lem:61} and \ref{lem:22} (\textup{b})}\label{sec:7} 

The proofs of Lemmas \ref{lem:61} and \ref{lem:22} (b) are completed in Sections \ref{sec:72} and \ref{sec:73}, respectively. Both proofs are based on properties of phase functions introduced in \cite{BZ5}*{Section 10}. 
\subsection{Proof of Lemma \ref{lem:61}}\label{sec:72}

We first fix the definition of $b_n^{\uw,\ut}(j,\nk)$ to complete that of $\F{g}_n^{\uw,\ut}(j)$.
\subsubsection{Definition of $b_n^{\,\uw,\ut}(j,\nk)$}\label{sec:71}

Let $Q_n^{\uw,\ut}$ be the operators introduced in \cite{BZ5}*{(10.1)}. Their definition involves a phase $\tilde\psi_n^{\uw,\ut}\equiv\tilde\psi_n^{\uw,\ut}(j,\nk)$ whose construction is given in \cite{BZ5}*{(10.5) and Section 10.3}. By \cite{BZ5}*{Lemma 10.3} these operators satisfy 
\begin{equation} \label{710} 
\abs{H_n^{\uw,\ut}(j,j)-Q_n^{\uw,\ut}(j,j)}\leq\nu\,n^{\gamma-1+3\varepsilon},
\end{equation}  
where $\uw\in(\Omega^*)^{\nu}$, $\ut\in\croch{-t_0,t_0}^{\nu}$, $n^{\varepsilon}\geq\nu$ and $n\geq\hat n$. Recall that $g_n^{\uw,\ut}(j)\coloneqq H_n^{\uw,\ut}(j,j)$. Let similarly
\begin{equation} \label{710''} 
\tilde{\F{g}}_n^{\uw,\ut}(j)\coloneqq Q_n^{\uw,\ut}(j,j).
\end{equation}     
Since $\nu\,n^{\gamma-1+3\varepsilon}\leq n^{-\gamma+4\varepsilon}$ follows from $\nu\leq n^{\varepsilon}$ and $\gamma\leq\frac{1}{2}$, the estimate \eqref{710} implies   
\begin{equation} \label{710'} 
\abs{g_n^{\uw,\ut}(j)-\tilde{\F{g}}_n^{\uw,\ut}(j)}\leq n^{-\gamma+4\varepsilon}.
\end{equation} 
If $\abs{j-n}\leq n^{\gamma}$, then \cite{BZ5}*{Section 10.2.2 and (10.5)} gives the expression   
\begin{equation} \label{71'} 
\tilde{\F{g}}_n^{\uw,\ut}(j)=\eul^{\ii j\abs{\uw}_1}\int_0^{2\pi}\eul^{\ii(\psi_{n,1}^{\uw,\ut}+\psi_{n,2}^{\uw,\ut})(\nk)+\ii(j-n)(\varphi_{n,1}^{\uw,\ut}+\varphi_{n,2}^{\uw,\ut})(\nk)}\frac{\dd\eta}{2\pi}\,.   
\end{equation}    
By \cite{BZ5}*{Lemma 10.1} and $\gamma\leq\frac{1}{2}$ the functions $\psi_{n,1}^{\uw,\ut},\,\varphi_{n,1}^{\uw,\ut}\colon\cercle\to\D{R}$ satisfy  
\begin{subequations}  \label{700}
\begin{align} \label{70a}
&\norm{\psi_{n,1}^{\uw,\ut}}_{\class^2(\cercle)}\leq C\nu\,n^{\gamma},\\   
\label{70b}
&\norm{\varphi_{n,1}^{\uw,\ut}}_{\class^2(\cercle)}\leq C\nu\,n^{\gamma-1}\leq C\nu\,n^{-\gamma}.
\end{align}
By \cite{BZ5}*{Lemma 10.2} and $\gamma\leq\frac{1}{2}$ the functions $\psi_{n,2}^{\uw,\ut},\,\varphi_{n,2}^{\uw,\ut}\colon\cercle\to\D{R}$ satisfy  
\begin{align}\label{70c}
&\norm{\psi_{n,2}^{\uw,\ut}}_{\class^2(\cercle)}\leq C\nu\,n^{\varepsilon},\\ 
\label{70d}
&\norm{\varphi_{n,2}^{\uw,\ut}}_{\class^2(\cercle)}\leq C\nu\,n^{2(\gamma-1)+\varepsilon}\leq Cn^{-2\gamma+2\varepsilon},
\end{align}
\end{subequations}
for $n^{\varepsilon}\geq\nu$ and $n\geq\hat n$. We can rewrite \eqref{71'} as
\begin{subequations} \label{71a}
\begin{align} \label{71b} 
\tilde{\F{g}}_n^{\uw,\ut}(j)&=\eul^{\ii j\abs{\uw}_1}\int_0^{2\pi}\eul^{\ii\psi_{n,1}^{\uw,\ut}(\nk)}\,\tilde b_n^{\,\uw,\ut}(j,\nk)\frac{\dd\eta}{2\pi}\,,
\intertext{with}   
\label{71c}
\tilde b_n^{\,\uw,\ut}(j,\nk)&\coloneqq\eul^{\ii(\psi_{n,2}^{\uw,\ut}(\nk)+(j-n)(\varphi_{n,1}^{\uw,\ut}+\varphi_{n,2}^{\uw,\ut})(\nk))}.
\end{align}    
\end{subequations}
Let us note that the definition \eqref{61b} of $\F{g}_n^{\uw,\ut}(j)$ is as that of $\tilde{\F{g}}_n^{\uw,\ut}(j)$, replacing $\tilde b_n^{\uw,\ut}$ by $b_n^{\uw,\ut}$ in \eqref{71b}.

The definition of $\psi_{n,1}^{\,\uw,\ut}$ is given in Section~\ref{sec:611}, that of $\varphi_{n,1}^{\uw,\ut}$ and $\psi_{n,2}^{\uw,\ut}$ are given in Sections \ref{sec:722} and \ref{sec:725}, respectively. Concerning $\varphi_{n,2}^{\uw,\ut}$ we observe that \eqref{70d} ensures $(j-n)\varphi_{n,2}^{\uw,\ut}=\ord(n^{-\gamma+2\varepsilon})$ for $\abs{j-n}\leq n^{\gamma}$, hence \eqref{710'} still holds for these $j$ if we forget $(j-n)\varphi_{n,2}^{\uw,\ut}$ in the r.h.s.\ of \eqref{71c}, i.e., if we replace $\tilde{\F{g}}_n^{\uw,\ut}(j)$ by $\F{g}_n^{\uw,\ut}(j)$ given by \eqref{61b} with 
\begin{equation} \label{71''}
b_n^{\,\uw,\ut}(j,\nk)\coloneqq\eul^{\ii(\psi_{n,2}^{\uw,\ut}(\nk)+(j-n)\varphi_{n,1}^{\uw,\ut}(\nk))}. 
\end{equation}  
Further on $b_n^{\,\uw,\ut}(j,\nk)$ is given by \eqref{71''}. It is obvious that \eqref{L61f} is satisfied and it is easy to see that the estimates \eqref{70b} and \eqref{70c} imply \eqref{L61c}. Thus, all that remains to be proved is \eqref{L61e} and for this it suffices to prove 
\begin{subequations}  \label{720}
\begin{align} \label{72}
\abs{\partial_t\varphi_{n,1}^{\uw,(\ut',t)}(\nk)}&\leq\hat C_0n^{-\gamma},\\
\label{72'}
\abs{\partial_t\psi_{n,2}^{\uw,(\ut',t)}(\nk)}&\leq\hat C_0n^{\varepsilon}. 
\end{align} 
\end{subequations}
The proofs of \eqref{72} and \eqref{72'} are given in Sections~\eqref{sec:723} and \eqref{sec:726}, respectively.
\subsubsection{Change of variable $\vartheta_n$}\label{sec:721}
 
As in \cite{BZ5}*{(8.3b)} we define $\varphi_n\colon\cercle\to\D{R}$ by
\begin{equation*}  \label{72A}
\varphi_n(\ek)\coloneqq 2\delta a(n)\bigl(1-\delta a(n)\cos\xi \bigr)\sin\xi.
\end{equation*} 
By (H1b) we have $\delta a(n)=\ord(n^{\gamma-1})$. Thus we can fix $n_0=n_0(\accol{\varphi_n})\in\D{N}$ large enough to ensure 
\begin{equation*}\label{72B}
\sup_{n\geq n_0}\norm{\varphi_n}_{\class^2(\cercle)}\leq\frac{1}{2}\,.   
\end{equation*}  
Further on we always assume $n\geq n_0$. Then we define the bijection $\eta_n\colon\D{R}\to\D{R}$ as in \cite{BZ5}*{(7.10)}:
\begin{equation*}\label{72C}
\eta_n(\xi)\coloneqq\xi-\varphi_n(\ek).  
\end{equation*}
As in \cite{BZ5}*{Section 7.5.1} we denote by $\xi_n\colon\D{R}\to\D{R}$ its inverse, i.e.\ $\xi_n(\eta)-\varphi_n(\eul^{\ii\xi_n(\eta)})=\eta$ and, $\eta\to\xi_n(\eta)-\eta$ being $2\pi$-periodic, we define $\tilde\xi_n\colon\cercle\to\D{R}$ by $\tilde\xi_n(\nk)=\xi_n(\eta)-\eta$. Finally, we define $\vartheta_n\colon\cercle\to\cercle$ as in \cite{BZ5}*{(7.13a)}:
\begin{equation*}\label{72E}
\vartheta_n(\nk)\coloneqq\nk\eul^{\ii\tilde\xi_n(\nk)}=\eul^{\ii\xi_n(\eta)}.
\end{equation*} 
Moreover, by \cite{BZ5}*{Lemma 9.2 (a)}, there exists a constant $C>0$ such that for any $q\in\class^m(\cercle)$, $m=1,2,3$ one has 
\begin{equation}\label{72q}
\norm{q-q\circ\vartheta_n}_{\class^{m-1}(\cercle)}\leq Cn^{-\gamma}\norm{q}_{\class^m(\cercle)}.
\end{equation}   
\subsubsection{Definition of $\varphi_{n,1}^{\uw,\ut}$}\label{sec:722}
  
We first assume $\nu=1$. If $(\omega,t)\in\Omega^*\times\D{R}$ we define $\varphi_{n,1}^{\,\omega,t}\colon\cercle\to\D{R}$ for $n\geq n_0$ as in \cite{BZ5}*{Section 10.3.1}:
\begin{subequations}
\begin{equation}\label{82F} 
\varphi_{n,1}^{\,\omega,t}\coloneqq(\varphi_n\circ\tau_{\omega}-\varphi_n)\circ\vartheta_n\circ\tau_t  
\end{equation}
where $\varphi_n$, $\vartheta_n$, and $n_0$ are as in Section \ref{sec:721} and $\tau_t(\ek)\coloneqq\eul^{\ii(\xi-t)}$. We have then $\varphi_{n,1}^{\,\omega,t}=\varphi_{n,1}^{\,\omega,0}\circ\tau_t$. For $\nu\geq 2$ we define $\varphi_{n,1}^{\uw,\ut}\colon\cercle\to\D{R}$ for $n\geq n_0$ by induction on $\nu$ as in \cite{BZ5}*{(10.6)}:
\begin{equation}   \label{82G}
\varphi_{n,1}^{\uw,\ut}\coloneqq(\varphi_{n,1}^{\uw'\!,\ut'}-\varphi_{n,1}^{-\omega,t})\circ\tau_{\omega},
\end{equation}
\end{subequations}  
where $\uw=(\uw',\omega)\in(\Omega^*)^{\nu-1}\times\Omega^*$, $\ut=(\ut',t)\in\D{R}^{\nu-1}\times\D{R}$, and $\varphi_{n,1}^{-\omega,t}\coloneqq\varphi_{n,1}^{2\pi-\omega,t}$.
 
\subsubsection{Proof of \eqref{72}}\label{sec:723}

By \eqref{82G} we get $\partial_t\varphi_{n,1}^{\uw,\ut}=-\partial_t\varphi_{n,1}^{-\omega,t}\circ\tau_{\omega}$ with $\varphi_{n,1}^{-\omega,t}=\varphi_{n,1}^{-\omega,0}\circ\tau_t$. For any $q\in\class^1(\cercle)$ we have the estimate
\begin{equation}\label{73}
\abs{\partial_t(q\circ\tau_t)(\nk)}=\abs{q^{(1)}\circ\tau_t(\nk)}\leq\norm{q}_{\class^1(\cercle)},
\end{equation}
where $q^{(1)}(\nk)\coloneqq\partial_\eta q(\nk)$. Applying \eqref{73} for $q=\varphi_{n,1}^{-\omega,0}$ we get \eqref{72} since \eqref{70b} ensures $\norm{\varphi_{n,1}^{-\omega,0}}_{\class^1(\cercle)}\leq Cn^{-\gamma}$.
 
\subsubsection{Change of variable $\vartheta_n^{\omega,t}$}\label{sec:724}
  
Let $\varphi_{n,1}^{\omega,t}$ be as above and $n_1\in\D{N}$ such that
\begin{equation}\label{74}
\sup_{n\geq n_1}\norm{\varphi_{n,1}^{\omega,t}}_{\class^2(\cercle)}\leq\frac{1}{2}\,.    
\end{equation}  
It suffices to choose $n_1$ such that $\sup_{n\geq n_1}\norm{\varphi_{n,1}}_{\class^2(\cercle)}\leq 1/4$. Indeed, using \eqref{72q} we get
\[
\norm{\varphi_{n,1}^{\omega,t}}_{\class^2(\cercle)}=\norm{\varphi_{n,1}^{\omega,0}}_{\class^2(\cercle)}\leq(1+Cn^{-\gamma})(\norm{\varphi_n\circ\tau_{\omega}}_{\class^3(\cercle)}+\norm{\varphi_n}_{\class^3(\cercle)})\leq C'n^{-\gamma},
\]
hence \eqref{74} holds if $n_1$ is chosen so that $Cn_1^{-\gamma}\leq 1/2$. From now on we assume $n\geq n_1$ and introduce the bijection $\eta_n^{\omega,t}\colon\D{R}\to\D{R}$ by 
\begin{equation} \label{741}
\eta_n^{\omega,t}(\xi)\coloneqq\xi-\varphi_{n,1}^{\omega,t}(\ek).  
\end{equation}
Let $\xi_n^{\omega,t}\colon\D{R}\to\D{R}$ be its inverse. Then $\xi_n^{\omega,t}(\eta)-\varphi_{n,1}^{\omega,t}(\eul^{\ii\xi_n^{\omega,t}(\eta)})=\eta$ and $\eta\to\xi_n^{\omega,t}(\eta)-\eta$ is $2\pi$-periodic. As before we can define $\tilde\xi_n^{\omega,t}\colon\cercle\to\D{R}$ by the formula $\tilde\xi_n^{\omega,t}(\nk)=\xi_n^{\omega,t}(\eta)-\eta$ and $\vartheta_n^{\omega,t}\colon\cercle\to\cercle$  by 
\begin{equation}\label{vt}
\vartheta_n^{\omega,t}(\nk)\coloneqq\nk\eul^{\ii\tilde\xi_n^{\omega,t}(\nk)}=\eul^{\ii\xi_n^{\omega,t}(\eta)}.
\end{equation} 
Since we can use $\varphi_{n,1}^{\omega,0}$ in place of $\varphi_n$ in \cite{BZ5}*{proof of Lemma 9.2}, there is a constant $C>0$ such that for any $q\in\class^2(\cercle)$ one has 
\begin{equation}\label{L81c}
\norm{q-q\circ\vartheta_n^{\omega,0}}_{\class^1(\cercle)}\leq Cn^{-\gamma}\norm{q}_{\class^2(\cercle)}.
\end{equation}   

\begin{lemma}\label{lem:81} 
\emph{(a)} One has 
\begin{equation}\label{L81a}
\vartheta_n^{\omega,t}=\tau_{-t}\circ\vartheta_n^{\omega,0}\circ\tau_t.
\end{equation} 
\emph{(b)} There is a constant $C>0$ such that
\begin{equation}\label{L81b}
\abs{\partial_t\vartheta_n^{\omega,t}(\ek)}\leq Cn^{-\gamma}.
\end{equation}  
\end{lemma}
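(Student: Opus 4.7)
For part~(a), the identity $\vartheta_n^{\omega,t}=\tau_{-t}\circ\vartheta_n^{\omega,0}\circ\tau_t$ is equivalent (after unwinding \eqref{vt}) to the statement that the inverse $\xi_n^{\omega,t}$ of $\eta_n^{\omega,t}$ satisfies
\[
\xi_n^{\omega,t}(\eta)=\xi_n^{\omega,0}(\eta-t)+t.
\]
I would prove this directly from the relation $\varphi_{n,1}^{\omega,t}=\varphi_{n,1}^{\omega,0}\circ\tau_t$ (noted in Section~\ref{sec:722}): setting $\xi\coloneqq\xi_n^{\omega,0}(\eta-t)+t$ gives $\xi-t=\xi_n^{\omega,0}(\eta-t)$, so
\[
\varphi_{n,1}^{\omega,t}(\eul^{\ii\xi})=\varphi_{n,1}^{\omega,0}\bigl(\eul^{\ii(\xi-t)}\bigr)=(\xi-t)-(\eta-t)=\xi-\eta,
\]
hence $\eta_n^{\omega,t}(\xi)=\xi-\varphi_{n,1}^{\omega,t}(\eul^{\ii\xi})=\eta$, proving $\xi_n^{\omega,t}(\eta)=\xi$.

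For part~(b), differentiating $\vartheta_n^{\omega,t}(\nk)=\eul^{\ii(\xi_n^{\omega,0}(\eta-t)+t)}$ with respect to $t$ gives
\[
\bigl\lvert\partial_t\vartheta_n^{\omega,t}(\nk)\bigr\rvert=\bigl\lvert 1-\dot\xi_n^{\omega,0}(\eta-t)\bigr\rvert.
\]
Since $\xi_n^{\omega,0}$ is the inverse of $\eta_n^{\omega,0}(\xi)=\xi-\varphi_{n,1}^{\omega,0}(\ek)$, the inverse function formula yields
\[
\dot\xi_n^{\omega,0}(\eta)=\frac{1}{1-(\varphi_{n,1}^{\omega,0})^{(1)}\bigl(\eul^{\ii\xi_n^{\omega,0}(\eta)}\bigr)},
\]
where $(\varphi_{n,1}^{\omega,0})^{(1)}(\ek)=\partial_\xi\varphi_{n,1}^{\omega,0}(\ek)$. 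Consequently
\[
\bigl\lvert 1-\dot\xi_n^{\omega,0}(\eta)\bigr\rvert=\left\lvert\frac{(\varphi_{n,1}^{\omega,0})^{(1)}\bigl(\eul^{\ii\xi_n^{\omega,0}(\eta)}\bigr)}{1-(\varphi_{n,1}^{\omega,0})^{(1)}\bigl(\eul^{\ii\xi_n^{\omega,0}(\eta)}\bigr)}\right\rvert\leq 2\bigl\lVert\varphi_{n,1}^{\omega,0}\bigr\rVert_{\class^1(\cercle)},
\]
the last inequality using \eqref{74} which forces the denominator to be at least $1/2$.

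To conclude, I would invoke estimate \eqref{70b} with $\nu=1$ together with $\gamma\leq\frac12$ to bound $\bigl\lVert\varphi_{n,1}^{\omega,0}\bigr\rVert_{\class^1(\cercle)}\leq\bigl\lVert\varphi_{n,1}^{\omega,0}\bigr\rVert_{\class^2(\cercle)}\leq Cn^{\gamma-1}\leq Cn^{-\gamma}$, which yields \eqref{L81b}. The argument is essentially a bookkeeping computation with the inverse function theorem; no genuine obstacle is anticipated beyond keeping the composition formulas straight and verifying that the smallness assumption \eqref{74} gives the uniform lower bound on the denominator that controls the Lipschitz constant of $\xi_n^{\omega,0}$.
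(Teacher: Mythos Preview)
Your proof is correct and follows essentially the same route as the paper. Part~(a) is identical: both you and the paper derive $\xi_n^{\omega,t}(\eta)=t+\xi_n^{\omega,0}(\eta-t)$ from $\varphi_{n,1}^{\omega,t}=\varphi_{n,1}^{\omega,0}\circ\tau_t$ and then unwind \eqref{vt}. For part~(b) the paper rewrites the quantity as $\lvert\partial_t\tilde\xi_n^{\omega,t}(\nk)\rvert\leq\lVert\tilde\xi_n^{\omega,0}\rVert_{\class^1(\cercle)}$ and then cites \cite{BZ5}*{Lemma 7.3} for the $\ord(n^{-\gamma})$ bound, whereas you spell out that same estimate via the inverse function formula and \eqref{74}; since $\partial_\eta\tilde\xi_n^{\omega,0}(\nk)=\dot\xi_n^{\omega,0}(\eta)-1$, the two computations are the same, yours being the more self-contained variant.
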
    

\begin{proof} 
(a) 
We first note that \eqref{82F} implies $\varphi_{n,1}^{\omega,t}=\varphi_{n,1}^{\omega,0}\circ\tau_t$. Using this relation in \eqref{741} we find $\eta_n^{\omega,t}(\xi)=t+\eta_n^{\omega,0}(\xi-t)$, hence also $\xi_n^{\omega,t}(\eta)=t+\xi_n^{\omega,0}(\eta-t)$. We have then $\tilde\xi_n^{\omega,t}(\nk)\coloneqq\xi^{\omega,t}_n(\eta)-\eta=\xi_n^{\omega,0}(\eta-t)-(\eta-t)=\tilde\xi_n^{\omega,0}(\eul^{\ii(\eta-t)})$, i.e.,
\begin{equation}\label{84c}
\tilde\xi_n^{\omega,t}=\tilde\xi_n^{\omega,0}\circ\tau_t, 
\end{equation}
and \eqref{L81a} follows from \eqref{84c} and \eqref{vt}.

(b) Since $\tilde\xi_n^{\omega,t}=\tilde\xi_n^{\omega,0}\circ\tau_t$ the estimate \eqref{73} ensures 
\begin{equation}\label{84d}
\abs{\partial_t\vartheta_n^{\omega,t}(\nk)}=\abs{\partial_t\tilde\xi_n^{\omega,t}(\nk)}\leq\norm{\tilde\xi_n^{\omega,0}}_{\class^1(\cercle)} 
\end{equation} 
and the right hand side of \eqref{84d} is $\ord(n^{-\gamma})$ due to  \cite{BZ5}*{Lemma 7.3} with $\varphi_{n,1}^{\omega,0}$ in place of $\varphi_n$.  
\end{proof} 
 
\subsubsection{Definition of $\psi_{n,2}^{\uw,\ut}$}\label{sec:725}

For $\omega\in\Omega^*$ we denote 
\begin{align}\label{725''} 
\psi_{n,\R{I}}^{\omega}&\coloneqq\psi_{n,2}\circ\tau_{\omega}-\psi_{n,2},\\
\label{725'} 
\psi_{n,\R{I\!I}}^{\omega}&\coloneqq\psi_{n,1}^{\omega,0}\circ\vartheta_n-\psi_{n,1}^{\omega,0},  
\end{align}  
where $\psi_{n,1}^{\omega,0}(\ek)=-4a(n)\sin\frac{\omega}{2}\,\cos(\xi-\frac{\omega}{2})$ as in \eqref{61cc} and
\begin{equation}\label{725} 
\psi_{n,2}(\ek)\coloneqq-a(n)\,\delta a(n)\,\sin 2\xi,   
\end{equation} 
as in \cite{BZ5}*{(9.7c)}. A direct computation gives the expression
\begin{equation}\label{7250} 
\psi_{n,\R{I}}^{\omega}(\ek)=2a(n)\delta a(n)\sin\omega\,\cos(2\xi-\omega).   
\end{equation} 
For $\nu=1$, $(\omega,t)\in\Omega^*\times\D{R}$ we define $\psi_{n,2}^{\,\omega,t}$ as in \cite{BZ5}*{Section 9.3.2}, i.e.,
\begin{subequations}  \label{724}
\begin{equation}\label{724''} 
\psi_{n,2}^{\,\omega,t}\coloneqq\bigl(\psi_{n,\R{I\!I}}^{\omega}+\psi_{n,\R{I}}^{\omega}\circ\vartheta_n\bigr)\circ\tau_t. 
\end{equation}
For $\nu\geq 2$, $\uw=(\uw',\omega)\in(\Omega^*)^{\nu-1}\times\Omega^*$, $\ut=(\ut',t)\in\D{R}^{\nu-1}\times\D{R}$, we use the same induction formula as in \cite{BZ5}*{(10.12)}, i.e.,  
\begin{equation}\label{724b}
\psi_{n,2}^{\uw,\ut}=(\psi_{n,1}^{-\omega,t}-\psi_{n,1}^{-\omega,t}\circ\vartheta_n^{\omega,t}+\psi_{n,2}^{\uw',\ut'}-\psi_{n,2}^{-\omega,t}+\psi_{n,1}^{\uw',\ut'}\circ\vartheta_n^{\omega,t}-\psi_{n,1}^{\uw',\ut'})\circ\tau_{\omega}, 
\end{equation} 
\end{subequations}
where $\psi_{n,i}^{-\omega,t}\coloneqq\psi_{n,i}^{2\pi-\omega,t}$, $i=1,2$.

\subsubsection{End of the proof of Lemma \ref{lem:61}}\label{sec:726}

We recall that \eqref{72} was proved in Section~\ref{sec:723} and following the remark from Section~\ref{sec:71}, it only remains to prove \eqref{72'}, i.e.\ $\abs{\partial_t\psi_{n,2}^{\uw,\ut}(\nk)}\leq\hat C_0n^{\varepsilon}$ where $t=t_{\nu}$.

If $\nu=1$ then $\psi_{n,2}^{\omega,t}=\psi_{n,2}^{\omega,0}\circ\tau_t$ and \eqref{72'} follows from \eqref{70c} using \eqref{73}. 
  
If $\nu\geq 2$ then $\psi_{n,2}^{\uw,\ut}$ is given by \eqref{724b}. To estimate $\partial_t\psi_{n,2}^{\uw,\ut}$ we first observe that 
\[
\psi_{n,1}^{-\omega,t}-\psi_{n,1}^{-\omega,t}\circ\vartheta_n^{\omega,t}=(\psi_{n,1}^{-\omega,0}-\psi_{n,1}^{-\omega,0}\circ\vartheta_n^{\omega,0})\circ\tau_t  
\] 
and \eqref{73} allows us to estimate
\begin{equation}\label{85c}
\abs{\partial_t(\psi_{n,1}^{-\omega,t}-\psi_{n,1}^{-\omega,t}\circ\vartheta_n^{\omega,t})(\nk)}\leq\norm{\psi_{n,1}^{-\omega,0}\circ\vartheta_n^{\omega,0}-\psi_{n,1}^{-\omega,0}}_{\class^1(\cercle)}. 
\end{equation}  
The right hand side of \eqref{85c} can be estimated by a constant $C$ independent of $n$ by using \eqref{L81c} for $q=\psi_{n,1}^{-\omega,0}$ and \eqref{70a}, i.e., $\norm{\psi_{n,1}^{-\omega,0}}_{\class^1(\cercle)}=\ord(n^{\gamma})$.  
It remains to estimate 
\begin{equation}\label{85d}
\abs{\partial_t(\psi_{n,1}^{\uw',\ut'}\circ\vartheta_n^{\omega,t})(\nk)}\leq\norm{\psi_{n,1}^{\uw',\ut'}}_{\class^1(\cercle)}\,\abs{\partial_t\vartheta_n^{\omega,t}(\nk)}. 
\end{equation}
However \eqref{L81b} and \eqref{70a} allow us to estimate the right hand side of \eqref{85d} by $C\nu\leq Cn^{\varepsilon}$ which completes the proof of \eqref{72'}.\qed

\subsection{Proof of Lemma \ref{lem:22} \textup{(b)}}\label{sec:73}  

This section is devoted to the proof of Lemma \ref{lem:22} under assumptions (H1) and (H2). Let $\psi_{n,\R{I}}^{\omega}$, $\psi_{n,\R{I\!I}}^{\omega}$ be given by \eqref{725''} and \eqref{725'}, respectively. Then 
\begin{equation}\label{L73'} 
\psi_{n,2}^{\omega,0}=\psi_{n,\R{I}}^{\omega}+\psi_{n,\R{I\!I}}^{\omega}+r_n^{\omega} 
\end{equation}
holds with
\begin{equation} \label{L730} 
r_n^{\omega}\coloneqq\psi_{n,\R{I}}^{\omega}\circ\vartheta_n-\psi_{n,\R{I}}^{\omega}.
\end{equation} 

\begin{lemma}\label{lem:82} 
We have the estimates 
\begin{align}\label{L73''} 
&\norm{r_n^{\omega}}_{\class^0(\cercle)}=\ord(n^{-\gamma}),\\\label{L73} 
&\abs{\psi_{n,\R{I\!I}}^{\omega}(\pm\eul^{\ii\omega/2})}=\ord(n^{-\gamma}).
\end{align} 
\end{lemma}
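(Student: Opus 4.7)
My plan is to handle (a) and (b) separately, each by a short direct computation that exploits the explicit formulas already recorded in the excerpt.

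For (a), the strategy is to apply the contraction estimate \eqref{72q} with $m=1$ to $q=\psi_{n,\R{I}}^{\omega}$. From the closed form \eqref{7250}, namely $\psi_{n,\R{I}}^{\omega}(\ek)=2a(n)\delta a(n)\sin\omega\,\cos(2\xi-\omega)$, I read off
\[
\norm{\psi_{n,\R{I}}^{\omega}}_{\class^1(\cercle)}\leq C\,\abs{a(n)\delta a(n)}.
\]
By (H1a) and (H1b) we have $a(n)=\ord(n^{\gamma})$ and $\delta a(n)=\ord(n^{\gamma-1})$, so $a(n)\delta a(n)=\ord(n^{2\gamma-1})=\ord(1)$ since $\gamma\leq\tfrac12$. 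Hence $\norm{\psi_{n,\R{I}}^{\omega}}_{\class^1(\cercle)}=\ord(1)$, and \eqref{72q} with $m=1$ yields exactly \eqref{L73''}.

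For (b), the key observation is that the points $\pm\eul^{\ii\omega/2}$ are precisely the critical points of $\psi_{n,1}^{\omega,0}(\ek)=-4a(n)\sin\tfrac{\omega}{2}\,\cos(\xi-\tfrac{\omega}{2})$, as one sees by differentiating: $\partial_\xi\psi_{n,1}^{\omega,0}(\ek)=4a(n)\sin\tfrac{\omega}{2}\,\sin(\xi-\tfrac{\omega}{2})$ vanishes at $\xi\equiv\tfrac{\omega}{2}\pmod{\pi}$. Writing $f(\eta)\coloneqq\psi_{n,1}^{\omega,0}(\nk)$ and using $\vartheta_n(\nk)=\eul^{\ii\xi_n(\eta)}=\eul^{\ii(\eta+\tilde\xi_n(\nk))}$, a second-order Taylor expansion gives
\[
\psi_{n,\R{I\!I}}^{\omega}(\nk)=f(\eta+\tilde\xi_n(\nk))-f(\eta)=f'(\eta)\,\tilde\xi_n(\nk)+\tfrac12 f''(\xi^*)\,\tilde\xi_n(\nk)^2
\]
for some intermediate $\xi^*$. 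At $\eta=\omega/2$ or $\eta=\omega/2+\pi$ the first-order term vanishes. Since $\abs{f''}\leq 4a(n)=\ord(n^{\gamma})$ by (H1a) and $\norm{\tilde\xi_n}_{\class^0(\cercle)}=\ord(n^{-\gamma})$ by \cite{BZ5}*{Lemma 7.3}, the remainder is bounded by $C n^{\gamma}\cdot n^{-2\gamma}=\ord(n^{-\gamma})$, which proves \eqref{L73}.

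There is no real obstacle here: once one notices that $\pm\eul^{\ii\omega/2}$ are the stationary points of $\psi_{n,1}^{\omega,0}$, the quadratic cancellation in Taylor's formula absorbs one of the two factors of $a(n)$ that would otherwise threaten the bound. The cheap point of (a) is simply the cancellation $a(n)\delta a(n)=\ord(1)$ built into the product structure under the assumption $\gamma\leq\tfrac12$.
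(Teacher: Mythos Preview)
Your proof is correct and follows essentially the same approach as the paper's own proof. For \eqref{L73''} the paper also applies \eqref{72q} with $m=1$ to $q=\psi_{n,\R{I}}^{\omega}$ and notes that $\norm{\psi_{n,\R{I}}^{\omega}}_{\class^1(\cercle)}=\ord(1)$; for \eqref{L73} the paper likewise uses a second-order Taylor expansion of $\psi_{n,1}^{\omega,0}\circ\vartheta_n-\psi_{n,1}^{\omega,0}$ at the critical points $\pm\eul^{\ii\omega/2}$, invoking $\norm{\psi_{n,1}^{\omega,0}}_{\class^2(\cercle)}=\ord(n^{\gamma})$ and $\abs{\tilde\xi_n}=\ord(n^{-\gamma})$ exactly as you do.
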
 

\begin{proof}  
We observe that \eqref{72q} ensures 
\[
\norm{\psi_{n,\R{I}}^{\omega}\circ\vartheta_n-\psi_{n,\R{I}}^{\omega}}_{\class^0(\cercle)}\leq Cn^{-\gamma}\norm{\psi_{n,\R{I}}^{\omega}}_{\class^1(\cercle)}.
\] 
To complete the proof of \eqref{L73''} it suffices to observe that $\norm{\psi_{n,\R{I}}^{\omega}}_{\class^1(\cercle)}=\ord(1)$. 

To prove \eqref{L73} we write the Taylor expansion at order $2$:
\[
\psi_{n,\R{I\!I}}^{\omega}(\eul^{\ii\omega/2})=\partial_\eta\psi_{n,1}^{\omega,0}(\eul^{\ii\omega/2})\left(\vartheta_n(\eul^{\ii\omega/2})-\eul^{\ii\omega/2}\right)+\tilde r_n^{\omega} 
\]
with
\[
\abs{\tilde r_n^{\omega}}\leq\norm{\psi_{n,1}^{\omega,0}}_{\class^2(\cercle)}\,\left\lvert\vartheta_n(\eul^{\ii\omega/2})-\eul^{\ii\omega/2}\right\rvert^2\leq Cn^{\gamma}\abs{\tilde\xi_n(\eul^{\ii\omega/2})}^2\leq C'n^{-\gamma}. 
\]
A similar reasoning can be applied to $-\eul^{\ii\omega/2}$ instead of $\eul^{\ii\omega/2}$ and to complete the proof of \eqref{L73} we observe that $\partial_\eta\psi_{n,1}(\pm\eul^{\ii\omega/2})=0$. 
\end{proof}

To begin the proof of \eqref{gnn} in case (b) we observe that  
\begin{equation*}\label{82'a} 
g_n(n)=\sum_{m=1}^{\lfloor N/2\rfloor}\alpha_m\Re\scal{\vece_n, H_n^{2\pi m/N,0}\vece_n}+\sum_{m=1}^{\lfloor(N-1)/2\rfloor}\tilde \alpha_m\Im\scal{\vece_n,H_n^{2\pi m/N,0}\vece_n} 
\end{equation*}
and Lemma \ref{lem:61} in the case $\nu=1$ ensures 
\begin{equation*}\label{82'b} 
g_n(n)=\sum_{m=1}^{\lfloor N/2\rfloor}\alpha_m\Re\F{g}_n^{2\pi m/N,0}(n)+\sum_{m=1}^{\lfloor(N-1)/2\rfloor}\Im\F{g}_n^{2\pi m/N,0}(n) +\ord(n^{-\gamma+5\varepsilon})  
\end{equation*}
with   
\begin{equation*}\label{82'c} 
\F{g}_n^{\omega,0}(n)=\eul^{\ii\omega n}\int_0^{2\pi}\eul^{\ii\psi_{n,1}^{\omega,0}(\nk)}b_n^{\omega}(\nk)\,\frac{\dd\eta}{2\pi} 
\end{equation*} 
for $\omega\in\Omega^*$ and  
\begin{equation*}\label{82'd} 
b_n^{\omega}(\nk)=\eul^{\ii\psi_{n,2}^{\omega,0}(\nk)}
\end{equation*}
according to \eqref{71''}. Then Lemma \ref{lem:41} with $\mu=4a(n)\sin\frac{\omega}{2}$ and $\eta_0=\pi+\frac{\omega}{2}$ gives 
\begin{equation}\label{82'e} 
\F{g}_n^{\omega,0}(n)=\eul^{\ii\omega n}\sum_{\kappa=\pm 1}\frac{\eul^{\ii\kappa\left(4a(n)\sin\frac{\omega}{2}-\frac{\pi}{4}\right)}}{2\sqrt{2\pi a(n)\sin\frac{\omega}{2}}}b_n^{\omega}(-\kappa\eul^{\ii\omega/2})+\ord(a(n)^{-1})
\end{equation} 
and applying Lemma \ref{lem:82} we obtain 
\begin{equation}\label{82'f} 
b_n^{\omega}(\pm\eul^{\ii\omega/2})=\eul^{\ii\psi_{n,\R{I}}^{\omega}(\pm\eul^{\ii\omega/2})}+\ord(n^{-\gamma}).  
\end{equation}  
However using \eqref{7250} we get $\psi_{n,\R{I}}^{\omega}(\pm\eul^{\ii\omega/2})=2a(n)\delta a(n)\sin\omega$. Thus combining \eqref{82'f} with \eqref{82'e} and $a(n)^{-1}=\ord(n^{-\gamma})$, we obtain   
\begin{equation}\label{82'g} 
\F{g}_n^{\omega,0}(n)=\frac{\cos\left(4a(n)\sin\frac{\omega}{2}-\frac{\pi}{4}\right)}{\sqrt{2\pi a(n)\sin\frac{\omega}{2}}}\eul^{\ii(\omega n+2a(n)\delta a(n)\sin\omega)}+\ord(n^{-\gamma}).
\end{equation} 
To complete the proof we observe that \eqref{82'g} ensures 
\begin{align*}
\Re\F{g}_n^{\omega,0}(n)&=\frac{\cos\left(4a(n)\sin\frac{\omega}{2}-\frac{\pi}{4}\right)}{\sqrt{2\pi a(n)\sin\frac{\omega}{2}}}\cos\bigl(\omega n+2a(n)\delta a(n)\sin\omega\bigr)+\ord(n^{-\gamma}),\\
\Im\F{g}_n^{\omega,0}(n)&=\frac{\cos\left(4a(n)\sin\frac{\omega}{2}-\frac{\pi}{4}\right)}{\sqrt{2\pi a(n)\sin\frac{\omega}{2}}}\sin\bigl(\omega n+2a(n)\delta a(n)\sin\omega\bigr)+\ord(n^{-\gamma}).
\end{align*}

\appendix
\section{The quantum Rabi model}   \label{appendix}

The quantum Rabi model couples a quantized single-mode radiation and a two-level quantum system. 

Let $\C{H}_{\field}$ be a complex Hilbert space equipped with an orthonormal basis $\accol{\vece_n}_0^\infty$ and let $\hat a$, $\hat a^\dagger$ be the photon annihilation and creation operators defined in $\C{H}_{\field}$ by
\begin{alignat*}{2}
\hat a\,\vece_n&=\sqrt{n}\,\vece_{n-1},&\quad& n=0,\,1,\,2,\,\dots,\\
\hat a^\dagger\vece_n&=\sqrt{n+1}\,\vece_{n+1},&&n=0,\,1,\,2,\, \dots
\end{alignat*}
(with $\vece_{-1}=0$). To define the quantum Rabi model we fix four positive parameters:  
\begin{compactenum}[(i)]
\item
$\omega$, the frequency of the quantized one-mode electromagnetic field, 
\item
$E$, the level separation energy, 
\item
$g$, the coupling constant, 
\item
$\hbar$, the Planck constant.    
\end{compactenum}
The quantum Rabi Hamiltonian is then the self-adjoint operator in $\C{H}_{\field}\otimes\D{C}^2$ given by    
\begin{equation} \label{R}
H_{\rabi}=\hbar\omega\,\hat a^\dagger\hat a\otimes I_{\D{C}^2}+I_{\C{H}_{\field}}\otimes\tfrac{1}{2}E\sigma_z+\hbar g(\hat a^\dagger+\hat a)\otimes\sigma_x, 
\end{equation} 
where $\sigma_x=\left(\begin{smallmatrix}0&1\\1&0\end{smallmatrix}\right)$ and $\sigma_z=\left(\begin{smallmatrix}1&0\\0&-1\end{smallmatrix}\right)$. Then we have the decomposition (see \cite{Tur1} or \cite{BZ6}*{Section 3.1})   
\begin{equation} 
\C{H}_{\field}\otimes\D{C}^2=\C{H}_+\oplus\C{H}_-,  
\end{equation}   
where $\C{H}_+$ and $\C{H}_-$ are invariant under $H_{\rabi}$ and the restrictions $H_{\pm}\coloneqq H_{\rabi}\vert_{\C{H}_{\pm}}$ have the form
\begin{equation} \label{Jpm} 
H_{\pm}=-\frac{1}{2}\hbar\omega+\hbar\omega\,J_{\pm},  
\end{equation} 
where the operator $J_+$ (resp.\ $J_-$) is defined in an appropriate basis by the Jacobi matrix \eqref{J} whose entries are given by \eqref{J'} with $a_1=\frac{g}{\omega}$ and $\rho=\frac{E}{2\hbar\omega}$ (resp.\ $\rho=-\frac{E}{2\hbar\omega}$). Therefore   
\[
\spec(H_{\rabi})=\accol{\lambda_n(H_+)}_{n=1}^\infty\cup\accol{\lambda_n(H_-)}_{n=1}^\infty,
\]
where 
\[
\lambda_n(H_{\pm})=-\frac{1}{2}\hbar\omega+\hbar\omega\lambda_n(J_{\pm}).\tag*{\qed}
\]
\begin{bibdiv}
\begin{biblist}
\bib{BNS}{article}{
   author={Boutet de Monvel, Anne},
   author={Naboko, Serguei},
   author={Silva, Luis O.},
   title={The asymptotic behavior of eigenvalues of a modified Jaynes--Cummings model},
   journal={Asymptot. Anal.},
   volume={47},
   date={2006},
   number={3-4},
   pages={291--315},
} 
\bib{BZ1}{article}{
   author={Boutet de Monvel, Anne},
   author={Zielinski, Lech},
   title={Eigenvalue asymptotics for Jaynes--Cummings type models without modulations}, 
   journal={BiBoS preprint},
   number={08-03-278},
   eprint={http://www.physik.uni-bielefeld.de/bibos/},
   year={2008},
}
\bib{BZ2}{article}{
   author={Boutet de Monvel, Anne},
   author={Zielinski, Lech},
   title={Explicit error estimates for eigenvalues of some unbounded Jacobi matrices},
   conference={
   title={Spectral Theory, Mathematical System Theory, Evolution Equations, Differential and Difference Equations: IWOTA10},},
   book={
      series={Oper. Theory Adv. Appl.},
      volume={221},
      publisher={Birkh\"auser Verlag},
      place={Basel},},
   date={2012},
   pages={187--215},
}
\bib{BZ4}{article}{
   author={Boutet de Monvel, Anne},
   author={Zielinski, Lech},
   title={Asymptotic behavior of large eigenvalues of a modified Jaynes--Cummings model}, 
   conference={
      title={Spectral Theory and Differential Equations},
   },
   book={
      series={Amer. Math. Soc. Transl. Ser. 2},
      volume={233},
      publisher={Amer. Math. Soc., Providence, RI},
   },
   date={2014},
   pages={77--93},
}  
\bib{BZ5}{article}{
   author={Boutet de Monvel, Anne},
   author={Zielinski, Lech},
   title={Asymptotic behavior of large eigenvalues for Jaynes--Cummings type models},
   journal={J. Spectr. Theory},
   volume={7},
   date={2017},
   number={2},
   pages={559--631},
}   
\bib{BZ6}{article}{
   author={Boutet de Monvel, Anne},
   author={Zielinski, Lech},
   title={On the spectrum of the quantum Rabi model},
   status={submitted},
}   
\bib{Bra11}{article}{
   author={Braak, Daniel},
   title={Integrability of the Rabi Model},
   journal={Phys. Rev. Lett.},
   volume={107},
   date={2011},
   number={10},
   pages={100401, 4 pp.},
} 
\bib{Bra16}{article}{
   author={Braak, Daniel},
   author={Chen, Qing-Hu},
   author={Batchelor, Murray T.},
   author={Solano, Enrique},
   title={Semi-classical and quantum Rabi models: 
   in celebration of 80 years},
   journal={J. Phys. A: Math. Theor.},
   volume={49},
   date={2016},
   number={30},
   pages={300301, 4 pp.},
} 
\bib{FKU}{article}{
   author={Feranchuk, I. D.}, 
   author={Komarov, L. I.}, 
   author={Ulyanenkov, A. P.},
   title={Two-level system in a one-mode quantum field: numerical solution on the basis of the operator method},
   journal={J. Phys. A: Math. Gen.},
   volume={29},
   date={1996},
   number={14},
   pages={4035--4047},
}
\bib{FILU}{book}{
   author={Feranchuk, Ilya D.}, 
   author={Ivanov, Alexey}, 
   author={Le, Van-Hoang},
   author={Ulyanenkov, Alexander P.},
   title={Non-perturbative Description of Quantum Systems},
   series={Lecture Notes in Physics},
   volume={894},
   publisher={Springer, Cham},
   date={2015},
   pages={xv+362},
}
\bib{He13}{article}{
   author={He, Shu}, 
   author={Zhang, Yu-Yu}, 
   author={Chen, Qing-Hu}, 
   author={Ren, Xue-Zao}, 
   author={Liu, Tao}, 
   author={Wang, Ke-Lin},
   title={Unified analytical treatments of qubit-oscillator systems},
   journal={Chinese Phys. B},
   volume={22},
   date={2013},
   number={6},
   pages={064205},
} 
\bib{Ir}{article}{
   author={Irish, E. K.},
   title={Generalized Rotating-Wave Approximation for Arbitrarily Large Coupling},
   journal={Phys. Rev. Lett.},
   volume={99},
   date={2007},
   number={17},
   pages={173601},
}
\bib{JN}{article}{
   author={Janas, Jan},
   author={Naboko, Serguei},
   title={Infinite Jacobi matrices with unbounded entries: asymptotics of eigenvalues and the transformation operator approach},
   journal={SIAM J. Math. Anal.},
   volume={36},
   date={2004},
   number={2},
   pages={643--658},
}
\bib{JC}{article}{
   author={Jaynes, E. T.},
   author={Cummings, F. W.},
   title={Comparison of quantum and semiclassical radiation theories with application to the beam maser},
   journal={Proc. IEEE},
   volume={51},
   date={1963},
   number={1},
   pages={89--109},
} 
\bib{Ma}{article}{
   author={Malejki, Maria},
   title={Asymptotics of large eigenvalues for some discrete unbounded Jacobi matrices},
   journal={Linear Algebra Appl.},
   volume={431},
   date={2009},
   number={10},
   pages={1952--1970}, 
}
\bib{Schm}{article}{
   author={Schmutz, M.},
   title={Two-level system coupled to a boson mode: the large $n$ limit},
   journal={J. Phys. A: Math. Gen.},
   volume={19},
   date={1986},
   number={17},
   pages={3565--3577},
}
\bib{Scu}{book}{
    author={Scully, Marlan O.},
    author={Zubairy, M. Suhail},
    title={Quantum Optics},
    publisher={Cambridge University Press, Cambridge},
    date={1997},
} 
\bib{St}{book}{
   author={Stein, Elias M.},
   title={Harmonic analysis: real-variable methods, orthogonality, and
   oscillatory integrals},
   series={Princeton Mathematical Series},
   volume={43},
   note={With the assistance of Timothy S. Murphy;
   Monographs in Harmonic Analysis, III},
   publisher={Princeton University Press, Princeton, NJ},
   date={1993},
   pages={xiv+695},
}
\bib{Tur1}{article}{
   author={Tur, {\`E}. A.},
   title={Jaynes--Cummings model: solution without rotating wave approximation},
   journal={Optics and Spectroscopy},
   volume={89},
   date={2000},
   number={4},
   pages={574--588},
}
\bib{Tur2}{article}{
   author={Tur, E. A.},
   title={Jaynes--Cummings model without rotating wave approximation. Asymptotics of eigenvalues},
   date={2002},
   eprint={arXiv.org/abs/math-ph/0211055},
   pages={12},
}
\bib{X}{article}{
   author={Xie, Qiongtao}, 
   author={Zhong, Honghua}, 
   author={Batchelor, Murray T.}, 
   author={Lee, Chaohong}, 
   title={The quantum Rabi model: solution and dynamics},
   journal={J. Phys. A: Math. Theor.},
   volume={50},
   date={2017},
   number={11},
   pages={113001, 40 pp.},
}
\bib{Y}{article}{
   author={Yanovich, Eduard A.},  
   title={Asymptotics of eigenvalues of an energy operator in a problem of quantum physics},
   conference={ 
      title={Operator Methods in Mathematical Physics},
   },
   book={
      series={Oper. Theory Adv. Appl.},
      volume={227},
      publisher={Birkh\"auser/Springer Basel AG, Basel},
   },
   date={2013},
   pages={165--177},
}
\end{biblist}
\end{bibdiv}
\end{document}